\newif\iffull
\begin{document}

\mathsetup

\title{RustHorn: CHC-based Verification for Rust Programs\shortorfull{\thanks{The full version of this paper is available as \cite{FullPaper}.}}{ (full version)\thanks{This paper is the full version of \cite{ShortPaper}.}}}

\author{
  Yusuke Matsushita\inst{1}\orcid{0000-0002-5208-3106} \and
  Takeshi Tsukada\inst{1}\orcid{0000-0002-2824-8708} \and
  Naoki Kobayashi\inst{1}\orcid{0000-0002-0537-0604}
}
\institute{
  The University of Tokyo, Tokyo, Japan
  \email{\{yskm24t,tsukada,koba\}@is.s.u-tokyo.ac.jp}
}
\authorrunning{Y. Matsushita et al.}

\maketitle

\begin{abstract}
Reduction to the satisfiability problem for constrained Horn clauses (CHCs) is a widely studied approach to automated program verification.
The current CHC-based methods for pointer-manipulating programs, however, are not very scalable.
This paper proposes a novel translation of pointer-manipulating Rust programs into CHCs, which clears away pointers and memories by leveraging ownership.
We formalize the translation for a simplified core of Rust and prove its correctness.
We have implemented a prototype verifier for a subset of Rust and confirmed the effectiveness of our method.
\end{abstract}

\section{Introduction}
\label{index:intro}

Reduction to \emph{constrained Horn clauses (CHCs)} is a widely studied approach to automated program verification \cite{ProofRules,HornVerify}.
A CHC is a Horn clause \cite{HornClause} equipped with constraints,
namely a formula of the form \(\varphi \!\impliedby\! \psi_0 \!\land\! \cdots \!\land\! \psi_{k-1}\),
where \(\varphi\) and \(\psi_0,\dots,\psi_{k-1}\) are either an atomic formula of the form \(f(t_0,\dots,t_{n-1})\) (\(f\) is a \emph{predicate variable} and \(t_0,\dots,t_{n-1}\) are terms),
or a constraint (e.g. \(a < b + 1\)).\footnote{%
  Free variables are universally quantified.
  Terms and variables are governed under sorts (e.g. \(\Int, \bool\)), which are made explicit in the formalization of \cref{index:chc}.
}
We call a finite set of CHCs a \emph{CHC system} or sometimes just CHC.
\emph{CHC solving} is an act of deciding whether a given CHC system \(S\) has a \emph{model}, i.e.~a valuation for predicate variables that makes all the CHCs in \(S\) valid.
A variety of program verification problems can be naturally reduced to CHC solving.

For example, let us consider the following C code that defines McCarthy's 91 function.
\begin{cpp}
int mc91(int n) {
  if (n > 100) return n - 10; else return mc91(mc91(n + 11));
}
\end{cpp}
Suppose that we wish to prove \cppi{mc91}(\(n\)) returns \(91\) whenever \(n \le 101\) (if it terminates).
The wished property is equivalent to the satisfiability of the following CHCs, where \(\Mcn(n,r)\) means that \(\texttt{mc91}(n)\) returns \(r\) if it terminates.
\begin{align*}
  & \Mcn(n, r) \impliedby n > 100 \land r = n - 10 \br[-.2em]
  & \Mcn(n, r) \impliedby n \le 100 \land \Mcn(n + 11, r') \land \Mcn(r', r) \br[-.2em]
  & r = 91 \impliedby n \le 101 \land \Mcn(n, r)
\end{align*}
The property can be verified because this CHC system has a model:
\begin{gather*}
  \Mcn(n, r) \allowbreak \defiff r = 91 \lor (n > 100 \land r = n - 10).
\end{gather*}

A CHC solver provides a common infrastructure for a variety of programming languages and properties to be verified.
There have been effective CHC solvers \cite{Spacer,FreqHorn,Eldarica,HoIce} that can solve instances obtained from actual programs\footnote{%
  For example, the above CHC system on \(\Mcn\) can be solved instantly by many CHC solvers including Spacer \cite{Spacer} and HoIce \cite{HoIce}.
}
and many program verification tools \cite{SeaHorn,JayHorn,SMTParameterizedSystems,ThreadModularity,ZEUS,HOMC-RTI} use a CHC solver as a backend.

However, the current CHC-based methods do not scale very well for programs using \emph{pointers}, as we see in \cref{index:intro-challenges}.
We propose a novel method to tackle this problem for pointer-manipulating programs under \emph{Rust-style ownership}, as we explain in \cref{index:intro-ours}.

\subsection{Challenges in Verifying Pointer-Manipulating Programs}
\label{index:intro-challenges}

The standard CHC-based approach \cite{SeaHorn} for pointer-manipulating programs represents the memory state as an \emph{array}, which is passed around as an argument of each predicate (cf. the \emph{store-passing style}), and a pointer as an index.

For example, a pointer-manipulating variation of the previous program
\begin{cpp}
void mc91p(int n, int* r) {
  if (n > 100) *r = n - 10;
  else { int s; mc91p(n + 11, &s); mc91p(s, r); }
}
\end{cpp}
is translated into the following CHCs by the array-based approach:\footnote{%
  \(\updarray{h}{r}{v}\) is the array made from \(h\) by replacing the value at index \(r\) with \(v\).
  \(\select{h}{r}\) is the value of array \(h\) at index \(r\).
}
\begin{align*}
  & \Mcnp(n, r, h, h') \impliedby n > 100 \land h' = \updarray{h}{r}{n - 10} \br[-.2em]
  & \begin{aligned}
      \Mcnp(n, r, h, h') \impliedby & n \le 100 \land \Mcnp(n + 11, \var{ms}, h, h'') \\[-.3em]
      & \land \Mcnp(\select{h''}{\var{ms}}, r, h'', h')
    \end{aligned} \br[-.2em]
  & \select{h'}{r} = 91 \impliedby n \le 101 \land \Mcnp(n, r, h, h').
\end{align*}
\(\Mcnp\) additionally takes two arrays \(\var{h}, \var{h'}\) representing the (heap) memory states before/after the call of \cppi{mc91p}.
The second argument \(r\) of \(\Mcnp\), which corresponds to the pointer argument \cppi{r} in the original program, is an index for the arrays.
Hence, the assignment \cppi{*r = n - 10} is modeled in the first CHC as an update of the \(r\)-th element of the array.
\(\var{ms}\) represents the address of \cppi{s}.
This CHC system has a model
\begin{gather*}
  \Mcnp(n, r, h, h') \defiff \select{h'}{r} = 91 \lor (n > 100 \land \select{h'}{r} = n - 10),
\end{gather*}
which can be found by some array-supporting CHC solvers including Spacer \cite{Spacer},
thanks to evolving SMT-solving techniques for arrays \cite{ArrayDecision,ArrayTheory}.

However, the array-based approach has some shortcomings.
Let us consider, for example, the following innocent-looking code.\footnote{%
  \cppi{rand()} is a non-deterministic function that can return any integer value.
}
\begin{cpp}
bool just_rec(int* ma) {
  if (rand() >= 0) return true;
  int old_a = *ma; int b = rand(); just_rec(&b);
  return (old_a == *ma);
}
\end{cpp}
It can immediately return \cppi{true}; or it recursively calls itself and checks if the target of \cppi{ma} remains unchanged through the recursive call.
In effect this function \emph{does nothing} on the allocated memory blocks, although it can possibly modify some of the unused parts of the memory.

Suppose we wish to verify that \cppi{just_rec} never returns \cppi{false}.
The standard CHC-based verifier for C, SeaHorn \cite{SeaHorn}, generates a CHC system like below:\footnote{%
  \(\beq,\bne,\bge,\band\) denote binary operations that return boolean values.
}\footnote{%
  We omitted the allocation for \cppi{old_a} for simplicity.
}
\begin{align*}
  & \JustRec(\nvar{ma}, h, h',r) \impliedby h' = h \land r = \true \br[-.2em]
  & \begin{aligned}
      & \JustRec(\nvar{ma}, h, h', r) \impliedby
      \var{mb} \neq \var{ma}
      \land h'' = \updarray{h}{\nvar{mb}}{b} \\[-.3em]
      & \hspace{5em} \land \JustRec(\nvar{mb}, h'', h', \_)
      \land r = (\select{h}{\nvar{ma}} \beq \select{h'}{\nvar{ma}})
    \end{aligned} \br[-.2em]
  & r = \true \impliedby \JustRec(\nvar{ma}, h, h', r)
\end{align*}
Unfortunately the CHC system above is \emph{not} satisfiable and thus SeaHorn issues a false alarm.
This is because, in this formulation, \(\var{mb}\) may not necessarily be completely fresh; it is assumed to be different from the argument \(\var{ma}\) of the current call, but may coincide with \(\var{ma}\) of some deep ancestor calls.\footnote{%
  Precisely speaking, SeaHorn tends to even omit shallow address-freshness checks like \(\var{mb} \neq \var{ma}\).
}

The simplest remedy would be to explicitly specify the way of memory allocation.
For example, one can represent the memory state as a pair of an array \(h\) and an index \(\var{sp}\) indicating the maximum index that has been allocated so far.
\begin{align*}
  & \JustRecR(\nvar{ma}, h, \var{sp}, h', \var{sp'}, r) \impliedby h' = h \land \var{sp'} = \var{sp} \land r = \true \br[-.2em]
  & \begin{aligned}
      & \JustRecR(\nvar{ma}, h, \var{sp}, h', \var{sp'}, r) \impliedby
      \var{mb} = \var{sp''} = \var{sp} + 1 \land h'' = \updarray{h}{\nvar{mb}}{b} \\[-.3em]
      & \hspace{5em} \JustRecR(\nvar{mb}, h'', \var{sp''}, h', \var{sp'}, \_)
      \land r = (\select{h}{\nvar{ma}} \beq \select{h'}{\nvar{ma}})
    \end{aligned} \br[-.2em]
  & r = \true \impliedby \JustRecR(\nvar{ma}, h, \var{sp}, h', \var{sp'}, r) \land \var{ma} \le \var{sp}
\end{align*}
The resulting CHC system now has a model, but it involves quantifiers:
\begin{align*}
  \JustRecR(\nvar{ma}, h, \var{sp}, h', \var{sp'}, r) \defiff
  r = \true &\land \var{ma} \le \var{sp} \land \var{sp} \le \var{sp'} \\[-.3em]
  &\qquad \land \forall\,i \le \var{sp}.\, \select{h}{i} = \select{h'}{i}
\end{align*}

Finding quantified invariants is known to be difficult in general despite active studies on it \cite{SkolemQuantifiedInvariants,LazyAbstractionQuantifiedInvariants,JayHornQuantified,IC3QuantifiedInvariants,FreqHornQuantifiedInvariants}
and most current array-supporting CHC solvers give up finding quantified invariants.
In general, much more complex operations on pointers can naturally take place, which makes the universally quantified invariants highly involved and hard to automatically find.
To avoid complexity of models, CHC-based verification tools \cite{SeaHorn,SeaHornContextSensitive,JayHorn} tackle pointers by pointer analysis \cite{PointerAnalysis,DataStructureAnalysis}.
Although it does have some effects, the current applicable scope of pointer analysis is quite limited.

\subsection{Our Approach: Leverage Rust's Ownership System}
\label{index:intro-ours}

This paper proposes a novel approach to CHC-based verification of pointer-manipulating programs, which makes use of \emph{ownership} information to avoid an explicit representation of the memory.

\Paragraph{Rust-style Ownership}

Various styles of \emph{ownership/permission/capability} have been introduced to control and reason about usage of pointers on programming language design, program analysis and verification \cite{OwnershipTypes,Cyclone,OwnershipRaceDeadlock,FractionalPermissions,PermissionSeparationLogic,RaceLinear,FractionalOwnership}.
In what follows, we focus on the ownership in the style of the Rust programming language \cite{RustPaper,Rust}.

Roughly speaking, the ownership system guarantees that, for each memory cell and at each point of program execution, either (i) only one alias has the \emph{update} (write\,\&\,read) permission to the cell, with any other alias having \emph{no} permission to it, or (ii) some (or no) aliases have the \emph{read} permission to the cell, with no alias having the update permission to it.
In summary, \emph{when an alias can read some data} (with an update/read permission), \emph{any other alias cannot modify the data}.

As a running example, let us consider the program below, which follows Rust's ownership discipline (it is written in the C style; the Rust version is presented at \cref{example:cor-program}):
\begin{cpp}
int* take_max(int* ma, int* mb) {
  if (*ma >= *mb) return ma; else return mb;
}
bool inc_max(int a, int b) {
  {
    int* mc = take_max(&a, &b);  // borrow a and b
    *mc += 1;
  }                              // end of borrow
  return (a != b);
}
\end{cpp}
\Cref{fig:intro-1} illustrates which alias has the update permission to the contents of \cppi{a} and \cppi{b} during the execution of \cppi{take_max(5,3)}.
\begin{figure}[t]
  \centering
  \includegraphics[width=.6\linewidth]{./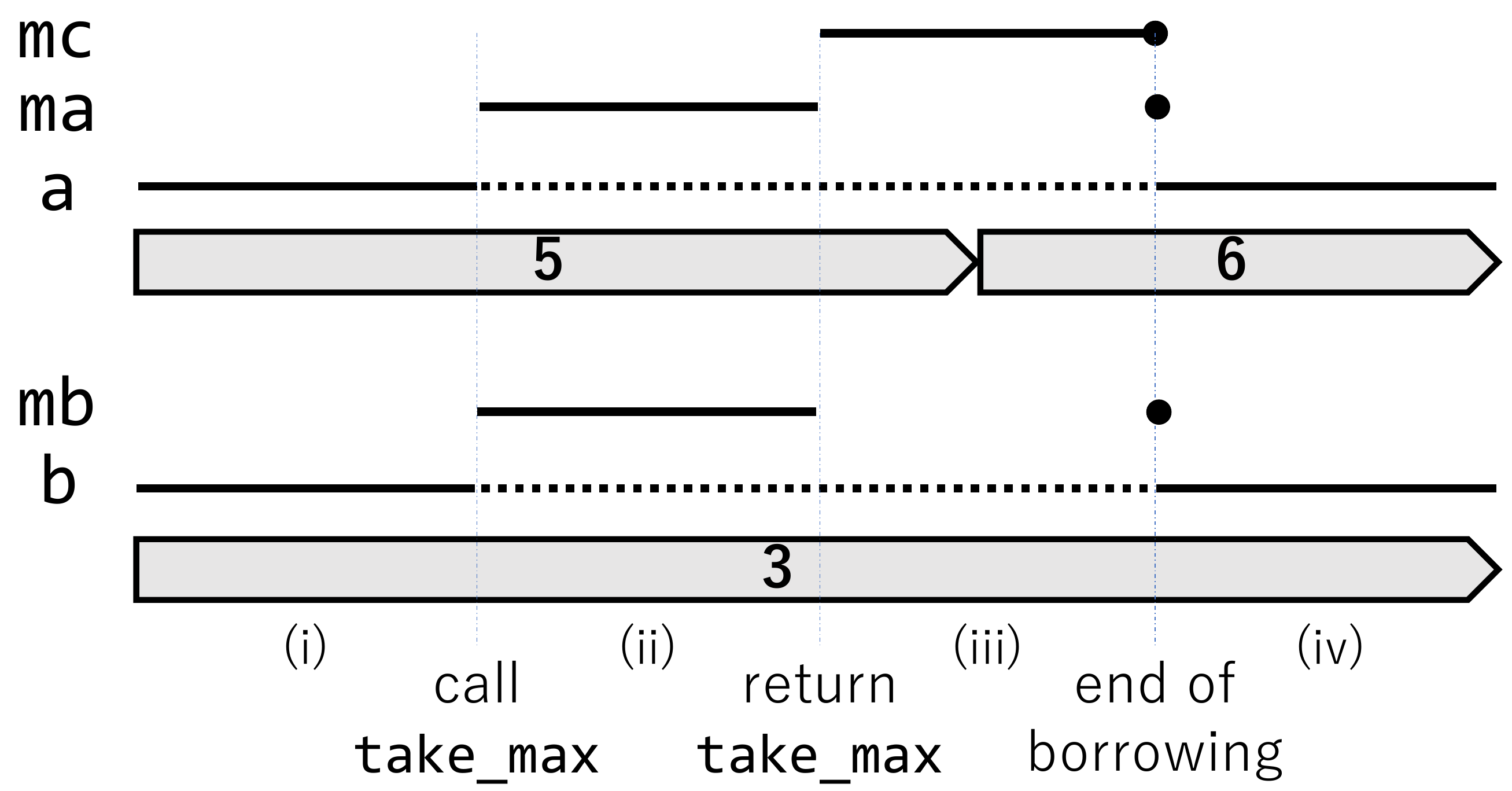}
  \caption{
    Values and aliases of \(a\) and \(b\) in evaluating \cppi{inc_max(5,3)}.
    Each line shows each variable's permission timeline:
    a solid line expresses the update permission and
    a bullet shows a point when the borrowed permission is given back.
    For example, \cppi{b} has the update permission to its content during (i) and (iv),
    but not during (ii) and (iii) because the pointer \cppi{mb}, created at the call of \cppi{take_max}, \emph{borrows} \cppi{b} until the end of (iii).
  }
  \label{fig:intro-1}
\end{figure}

A notable feature is \emph{borrow}.
In the running example, when the pointers \cppi{&a} and \cppi{&b} are taken for \cppi{take_max}, the \emph{update permissions} of \cppi{a} and \cppi{b} are \emph{temporarily transferred} to the pointers.
The original variables, \cppi{a} and \cppi{b}, \emph{lose the ability to access their contents} until the end of borrow.
The function \cppi{take_max} returns a pointer having the update permission until the end of borrow, which justifies the \emph{update operation} \cppi{*mc += 1}.
In this example, the end of borrow is at the end of the inner block of \cppi{inc_max}.
At this point, \emph{the permissions are given back} to the original variables \cppi{a} and \cppi{b}, allowing to compute \cppi{a != b}.
Note that \cppi{mc} can point to \cppi{a} and also to \cppi{b} and that this choice is determined \emph{dynamically}.
The values of \cppi{a} and \cppi{b} after the borrow \emph{depend on the behavior of the pointer} \cppi{mc}.

The end of each borrow is statically managed by a \emph{lifetime}.
See \cref{index:cor} for a more precise explanation of ownership, borrow and lifetimes.

\Paragraph{Key Idea}

The key idea of our method is to \emph{represent a pointer \cppi{ma} as a pair \(\angled{a, a_\0}\) of the current target value \(a\) and the target value \(a_\0\) at the end of borrow}.\footnote{%
  Precisely, this is the representation of a pointer with a borrowed update permission (i.e. \emph{mutable reference}).
  Other cases are discussed in \cref{index:chc}.
}\footnote{%
  For example, in the case of \cref{fig:intro-1}, when \cppi{take_max} is called, the pointer \cppi{ma} is \(\angled{5,6}\) and \cppi{mb} is \(\angled{3,3}\).
}
This representation employs \emph{access to the future information} (it is related to \emph{prophecy variables}; see \cref{index:related}).
This simple idea turns out to be very powerful.

In our approach, the verification problem ``Does \cppi{inc_max} always return \cppi{true}?'' is reduced to the satisfiability of the following CHCs:
\begin{align*}
  & \TakeMax(\angled{a,a_\0}, \angled{b,b_\0}, r) \impliedby
    a \ge b \land b_\0 = b \land r = \angled{a,a_\0} \br[-.2em]
  & \TakeMax(\angled{a,a_\0}, \angled{b,b_\0}, r) \impliedby
    a < b \land a_\0 = a \land r = \angled{b,b_\0} \br[-.2em]
  & \begin{aligned}
      \IncMax(a,b,r) \impliedby &
        \TakeMax(\angled{a,a_\0}, \angled{b,b_\0}, \angled{c,c_\0}) \land c' = c + 1 \\[-.4em]
        & \land c_\0 = c' \land r = (a_\0 \bne b_\0)
    \end{aligned} \br[-.2em]
  & r = \true \impliedby \IncMax(a,b,r).
\end{align*}
The mutable reference \cppi{ma} is now represented as \(\angled{a, a_\0}\), and similarly for \cppi{mb} and \cppi{mc}.
The first CHC models the then-clause of \cppi{take_max}: the return value is \cppi{ma}, which is expressed as \(r = \angled{a,a_\0}\); in contrast, \cppi{mb} is released, which \emph{constrains} \(b_\0\), the value of \cppi{b} at the end of borrow, to the current value \(b\).
In the clause on \(\IncMax\), \(\nvar{mc}\) is represented as a pair \(\angled{c, c_\0}\).
The constraint \(c' = c + 1 \land c_\0 = c'\) models the increment of \cppi{mc} (in the phase (iii) in \cref{fig:intro-1}).
Importantly, the final check \cppi{a != b} is simply expressed as \(a_\0 \bne b_\0\); the updated values of \cppi{a}/\cppi{b} are available as \(a_\0\)/\(b_\0\).
Clearly, the CHC system above has a simple model.

Also, the \cppi{just_rec} example in \cref{index:intro-challenges} can be encoded as a CHC system
\begin{align*}
  & \JustRec(\angled{a,a_\0}, r) \impliedby a_\0 = a \land r = \true \br[-.2em]
  & \begin{aligned}
      \JustRec(\angled{a,a_\0}, r) \impliedby &
      \var{mb} = \angled{b,b_\0} \land \JustRec(\nvar{mb}, \_) \\[-.4em]
      & \land a_\0 = a \land r = (a \beq a_\0)
    \end{aligned} \br[-.2em]
  & r = \true \impliedby \JustRec(\angled{a,a_\0}, r).
\end{align*}
Now it has a very simple model: \(\JustRec(\var{ma}, r) \defiff r = \true\).
Remarkably, arrays and quantified formulas are not required to express the model, which allows the CHC system to be easily solved by many CHC solvers.
More advanced examples are presented in \cref{index:chc-examples}, including one with destructive update on a singly-linked list.

\Paragraph{Contributions}

Based on the above idea, we formalize the translation from programs to CHC systems for a core language of Rust, prove correctness (both soundness and completeness) of the translation, and confirm the effectiveness of our approach through preliminary experiments.
The core language supports, among others, recursive types.
Remarkably, our approach enables us to automatically verify some properties of a program with destructive updates on recursive data types such as lists and trees.

The rest of the paper is structured as follows.
In \cref{index:cor}, we provide a formalized core language of Rust supporting recursions, lifetime-based ownership and recursive types.
In \cref{index:chc}, we formalize our translation from programs to CHCs and prove its correctness.
In \cref{index:expt}, we report on the implementation and the experimental results.
In \cref{index:related} we discuss related work and in \cref{index:concl} we conclude the paper.

\section{Core Language: Calculus of Ownership and Reference}
\label{index:cor}

We formalize a core of Rust as \emph{Calculus of Ownership and Reference (COR)}, whose design has been affected by the safe layer of \lambdaRust in the RustBelt paper \cite{RustBelt}.
It is a typed procedural language with a Rust-like ownership system.

\subsection{Syntax}
\label{index:cor-syntax}

The following is the syntax of COR.
\begingroup\small
\begin{gather*}
  \ltag{program} \varPi \sdef F_0\, \cdots\,F_{n-1} \br
  \ltag{function definition} F \sdef \fn f\, \varSigma\, \{L_0\colon S_0\, \cdots\,L_{n-1}\colon S_{n-1}\} \br
  \begin{aligned}
    \ltag{function signature} \varSigma \sdef
      & \angled{\alpha_0,\dots,\alpha_{m-1} \mid \alpha_{a_0} \!\le\! \alpha_{b_0}, \dots, \alpha_{a_{l-1}} \!\le\! \alpha_{b_{l-1}}} \\[-.3em]
      & \,(x_0\colon T_0, \dots, x_{n-1}\colon T_{n-1})
      \to U
  \end{aligned} \br
  \begin{aligned}
    \ltag{statement} S
      \sdef{}& I;\, \goto L
      \sor \return x \\[-.2em]
      \sor{}& \match{*x}{\inj_0\!*y_0\!\to\!\goto L_0,\, \inj_1\!*y_1\!\to\!\goto L_1}
  \end{aligned} \br
  \begin{aligned}
    \ltag{instruction} I
      \sdef{}& \Let y = \mutbor_\alpha x
      \sor \drop x
      \sor \immut x
      \sor \swap(*x,*y) \\[-.2em]
      \sor{}& \Let *y = x
      \sor \Let y = *x
      \sor \Let *y = \Copy *x
      \sor x \as T \\[-.2em]
      \sor{}& \Let y = f\angled{\alpha_0,\dots,\alpha_{m-1}}(x_0,\dots,x_{n-1}) \\[-.2em]
      \sor{}& \intro \alpha
      \sor \now \alpha
      \sor \alpha \le \beta \\[-.2em]
      \sor{}& \Let *y = \const
      \sor \Let *y = *x \op *x'
      \sor \Let *y = \rand() \\[-.4em]
      \sor{}& \Let *y = \inj^{T_0 \!+\! T_1}_i *x
      \sor \Let *y = (*x_0,*x_1)
      \sor \Let\, (*y_0,*y_1) = *x
  \end{aligned} \br[.2em]
  \ltag{type} T,U
    \sdef X \sor \mu X.T \sor P\, T \sor T_0 \!+\! T_1 \sor T_0 \!\times\! T_1 \sor \Int \sor \unit \br[-.2em]
  \ltag{pointer kind} P \sdef \own \sor R_\alpha \quad
  \ltag{reference kind} R \sdef \mut \sor \immut \br[-.2em]
  \alpha,\beta,\gamma \sdef \dtag{lifetime variable} \quad
  X,Y \sdef \dtag{type variable} \br[-.3em]
  x,y \sdef \dtag{variable} \quad
  f,g \sdef \dtag{function name} \quad
  L \sdef \dtag{label} \br
  \const \sdef n \sor {()} \quad
  \bool \defeq \unit + \unit \quad
  \op \sdef \op_\Int \sor \op_\bool \br[-.3em]
  \op_\Int \sdef {+} \!\sor\! {-} \!\sor\! \cdots \quad
  \op_\bool \sdef {\bge}\!\sor\! {\beq} \!\sor\! {\bne} \!\sor\! \cdots
\end{gather*}
\endgroup

\Paragraph{Program, Function and Label}

A program (denoted by \(\varPi\)) is a set of function definitions.
A function definition (\(F\)) consists of a function name, a function signature and a set of labeled statements (\(L\colon S\)).
In COR, for simplicity, the input/output types of a function are restricted to \emph{pointer types}.
A function is parametrized over lifetime parameters under constraints; polymorphism on types is not supported for simplicity, just as \lambdaRust.
For the lifetime parameter receiver, often \(\angled{\alpha_0,\dots\mid}\) is abbreviated to \(\angled{\alpha_0,\dots}\) and \(\angled{\mid}\) is omitted.

A label (\(L\)) is an abstract program point to be jumped to by \(\goto\).\footnote{%
  It is related to a \emph{continuation} introduced by \(\mathsf{letcont}\) in \lambdaRust.
}
Each label is assigned a \emph{whole context} by the type system, as we see later.
This style, with unstructured control flows, helps the formal description of CHCs in \cref{index:chc-body}.
A function should have the label \(\entry\) (entry point),
and every label in a function should be syntactically reachable from \(\entry\) by \(\goto\) jumps.\footnote{%
  Here `syntactically' means that detailed information such that a branch condition on \(\keyword{match}\) or non-termination is ignored.
}

\Paragraph{Statement and Instruction}

A statement (\(S\)) performs
an instruction with a jump (\(I;\, \goto L\)),
returns from a function (\(\return x\)),
or branches (\(\match{*x}{\cdots}\)).

An instruction (\(I\)) performs an elementary operation:
mutable (re)borrow (\(\Let y = \mutbor_\alpha x\)),
releasing a variable (\(\drop x\)),
weakening ownership (\(\immut\allowbreak x\)),\footnote{%
  This instruction turns a mutable reference to an immutable reference.
  Using this, an immutable borrow from \(x\) to \(y\) can be expressed by \(\Let y = \mutbor_\alpha x;\ \immut y\).
}
swap (\(\swap(*x,*y)\)),
creating/dereferencing a pointer (\(\Let *y = x\), \(\Let y = *x\)),
copy (\(\Let *y = \Copy *x\)),\footnote{%
  Copying a pointer (an immutable reference) \(x\) to \(y\) can be expressed by
  \(\Let *\var{ox} = x;\ \Let *oy = \Copy *\var{ox};\ \Let y = *oy\).
}
type weakening (\(x \as T\)),
function call (\(\Let y = f\angled{\cdots}({\cdots})\)),
lifetime-related ghost operations (\(\intro \alpha,\, \now \alpha,\allowbreak\, \alpha \le \beta\); explained later),
getting a constant\,/\,operation result\,/\,random integer (\(\Let *y = \const\) / \(*x \op *x'\) / \(\rand()\)),
creating a variant (\(\Let *y = \inj^{T_0 \!+\! T_1}_i *x\)), and
creating/destruct\-ing a pair (\(\Let *y = (*x_0,*x_1),\, \Let\, (*y_0,*y_1) = *x\)).
An instruction of form \(\Let *y = \cdots\) implicitly allocates new memory cells as \(y\);
also, some instructions deallocate memory cells implicitly.
For simplicity, every variable is designed to be a \emph{pointer} and every \emph{release of a variable} should be explicitly annotated by `\(\drop x\)'.
In addition, we provide swap instead of assignment;
the usual assignment (of copyable data from \(*x\) to \(*y\)) can be expressed by \(\Let *x' = \Copy *x;\ \swap(*y, *x');\ \drop x'\).

\Paragraph{Type}

As a type (\(T\)), we support recursive types (\(\mu X.T\)), pointer types (\(P\,T\)), variant types (\(T_0 + T_1\)), pair types (\(T_0 \times T_1\)) and basic types (\(\Int,\unit\)).

A pointer type \(P\, T\) can be an \emph{owning pointer} \(\own T\) (\rusti{Box<T>} in Rust), \emph{mutable reference} \(\mut_\alpha T\) (\rusti{&'a mut T}) or \emph{immutable reference} \(\immut_\alpha T\) (\rusti{&'a T}).
An \emph{owning pointer} has data in the heap memory,
can freely update the data (unless it is borrowed),
and has the obligation to clean up the data from the heap memory.
In contrast, a \emph{mutable/immutable reference} (or \emph{unique/shared reference}) borrows an update/read permission from an owning pointer or another reference with the deadline of a \emph{lifetime} \(\alpha\) (introduced later).
A mutable reference cannot be copied, while an immutable reference can be freely copied.
A reference loses the permission at the time when it is released.\footnote{%
  In Rust, even after a reference loses the permission and the lifetime ends, its address data can linger in the memory, although dereferencing on the reference is no longer allowed.
  We simplify the behavior of lifetimes in COR.
}

A type \(T\) that appears in a program (not just as a substructure of some type) should satisfy the following condition (if it holds we say the type is \emph{complete}):
every type variable \(X\) in \(T\) is bound by some \(\mu\) and guarded by a pointer constructor (i.e. given a binding of form \(\mu X.U\), every occurrence of \(X\) in \(U\) is a part of a pointer type, of form \(P\,U'\)).

\Paragraph{Lifetime}

A \emph{lifetime} is an \emph{abstract time point in the process of computation},\footnote{%
  In the terminology of Rust, a lifetime often means a time range where a borrow is active.
  To simplify the discussions, however, we in this paper use the term lifetime to refer to a \emph{time point when a borrow ends}.
}
which is statically managed by \emph{lifetime variables} \(\alpha\).
A lifetime variable can be a \emph{lifetime parameter} that a function takes or a \emph{local lifetime variable} introduced within a function.
We have three lifetime-related ghost instructions:
\(\intro \alpha\) introduces a new local lifetime variable,
\(\now \alpha\) sets a local lifetime variable to the current moment and eliminates it,
and \(\alpha \le \beta\) asserts the ordering on local lifetime variables.

\Subsubsection{Expressivity and Limitations}\label{index:cor-syntax-discuss}

COR can express most borrow patterns in the core of Rust.
The set of moments when a borrow is active forms a continuous time range, even under \emph{non-lexical lifetimes} \cite{NLL}.\footnote{%
  Strictly speaking, this property is broken by recently adopted implicit two-phase borrows \cite{TwoPhaseBorrows,NestedMethodCalls}.
  However, by shallow syntactical reordering, a program with implicit two-phase borrows can be fit into usual borrow patterns.
}

A major limitation of COR is that it does not support \emph{unsafe code blocks} and also lacks \emph{type traits and closures}.
Still, our idea can be combined with unsafe code and closures, as discussed in \cref{index:chc-discuss}.
Another limitation of COR is that, unlike Rust and \lambdaRust, we \emph{cannot directly modify/borrow a fragment of a variable} (e.g. an element of a pair).
Still, we can eventually modify/borrow a fragment by borrowing the whole variable and \emph{splitting pointers} (e.g. `\(\Let \,(*y_0,*y_1) = *x\)').
This borrow-and-split strategy, nevertheless, yields a subtle obstacle when we extend the calculus for advanced data types (e.g. \rusti{get_default} in `Problem Case \#3' from \cite{NLL}).
For future work, we pursue a more expressive calculus modeling Rust and extend our verification method to it.

\begin{example}[COR Program]\label{example:cor-program}
The following program expresses the functions \rusti{take_max} and \rusti{inc_max} presented in \cref{index:intro-ours}.
We shorthand sequential executions by `\(;^L\)' (e.g. \(L_0\colon I_0;^{L_1} I_1; \goto L_2\) stands for \(L_0\colon I_0;\, \goto L_1\ L_1\colon I_1;\, \goto L_2\)).\footnote{%
  The first character of each variable indicates the pointer kind (\(o\)/\(m\) corresponds to \(\own\)/\(\mut_\alpha\)).
  We swap the branches of the \(\keyword{match}\) statement in \(\takemax\), to fit the order to C/Rust's \rusti{if}.
}
\begingroup\small
\begin{align*}
  &
  \fn \takemax\, \angled{\alpha}\,
    (\nvar{ma}\colon \mut_\alpha \Int,\, \var{mb}\colon \mut_\alpha \Int)
    \to \mut_\alpha \Int\, \{
  \br[-.4em] & \quad
    \entry\colon
      \Let *\var{ord} = *\var{ma} \bge *\var{mb};^{\nL{1}}
      \match{*\var{ord}}{\inj_1 *\var{ou}\to \goto \nL{2},\ \inj_0 *\var{ou}\to \goto \nL{5}}
  \br[-.4em] & \quad
    \nL{2}\colon
      \drop \var{ou};^{\nL{3}}
      \drop \var{mb};^{\nL{4}}
      \return \var{ma} \quad
    \nL{5}\colon
      \drop \var{ou};^{\nL{6}}
      \drop \var{ma};^{\nL{7}}
      \return \var{mb}
  \br[-.4em] &
  \}
  \br[.1em] &
  \fn \incmax
    (\nvar{oa}\colon \own \Int,\, \var{ob}\colon \own \Int)
    \to \own \bool\, \{
  \br[-.4em] & \quad
    \entry\colon
      \intro \alpha;^{\nL{1}}
      \Let \var{ma} = \mutbor_\alpha \var{oa};^{\nL{2}}
      \Let \var{mb} = \mutbor_\alpha \var{ob};^{\nL{3}}
  \br[-.4em] & \quad\
      \Let \var{mc} = \takemax\angled{\alpha}(ma,mb);^{\nL{4}}
      \Let *\var{o1} = 1;^{\nL{5}}
      \Let *\var{oc'} = *\var{mc} + *\var{o1};^{\nL{6}}
      \drop \var{o1};^{\nL{7}}
  \br[-.4em] & \quad\
      \swap (\nvar{mc},\var{oc'});^{\nL{8}}
      \drop \var{oc'};^{\nL{9}}
      \drop \var{mc};^{\nL{10}}
      \now \alpha;^{\nL{11}}
      \Let *\var{or} = *\var{oa} \bne *\var{ob};^{\nL{12}}
  \br[-.4em] & \quad\
      \drop \var{oa};^{\nL{13}}
      \drop \var{ob};^{\nL{14}}
      \return \var{or}
  \br[-.4em] &
  \}
\end{align*}
\endgroup
In \(\takemax\), conditional branching is performed by \(\keyword{match}\) and its \(\goto\) directions (at \(\nL{1}\)).
In \(\incmax\), increment on the mutable reference \(\var{mc}\) is performed by calculating the new value (at \(\nL{4},\nL{5}\)) and updating the data by swap (at \(\nL{7}\)).

The following is the corresponding Rust program, with ghost annotations (marked italic and dark green, e.g. \rusti{@(drop ma)@}) on lifetimes and releases of mutable references.
\begin{rust}
fn take_max<'a>(ma: &'a mut i32, mb: &'a mut i32) -> &'a mut i32 {
  if *ma >= *mb { @(drop mb;)@ ma } else { @(drop ma;)@ mb }
}
fn inc_max(mut a: i32, mut b: i32) -> bool {
  { @(intro 'a;)@
    let mc = take_max@(<'a>)@(&@('a)@ mut a, &@('a)@ mut b); *mc += 1;
  @(drop mc;)@ @(now 'a;)@ }
  a != b
}
\end{rust}
\end{example}

\subsection{Type System}
\label{index:cor-type-system}

The type system of COR assigns to each label a \emph{whole context} \((\cGamma,\cA)\).
We define below the whole context and the typing judgments.

\Paragraph{Context}
A \emph{variable context} \(\cGamma\) is a finite set of items of form \(x\colonu\ac T\),
where \(T\) should be a complete \emph{pointer} type and \(\ac\) (which we call \emph{activeness}) is of form `\(\Active\)' or `\(\dagger\alpha\)' (\emph{frozen} until lifetime \(\alpha\)).
We abbreviate \(x\colonu{\Active} T\) as \(x\colon T\).
A variable context should not contain two items on the same variable.
A \emph{lifetime context} \(\cA = (A,R)\) is a finite preordered set of lifetime variables, where \(A\) is the underlying set and \(R\) is the preorder.
We write \(\lvert\cA\rvert\) and \(\le_\cA\) to refer to \(A\) and \(R\).
Finally, a \emph{whole context} \((\cGamma,\cA)\) is a pair of a variable context \(\cGamma\) and a lifetime context \(\cA\) such that every lifetime variable in \(\cGamma\) is contained in \(\cA\).

\Paragraph{Notations}
The set operation \(A + B\) (or more generally \(\sum_\lambda A_\lambda\)) denotes the disjoint union, i.e. the union defined only if the arguments are disjoint.
The set operation \(A - B\) denotes the set difference defined only if \(A \supseteq B\).
For a natural number \(n\), \([n]\) denotes the set \(\{0, \dots, n \!-\! 1\}\).

Generally, an auxiliary definition for a rule can be presented just below, possibly in a dotted box.

\Paragraph{Program and Function}
The rules for typing programs and functions are presented below.
They assign to each label a whole context \((\cGamma,\cA)\).
`\(S \colond{\varPi,f} (\cGamma,\cA) \mid (\cGamma_L,\cA_L)_L \mid U\)' is explained later.
\begingroup\small
\begin{gather*}
  \frac{
    \text{for any}\,F\, \text{in}\, \varPi,\
    F\colond\varPi (\cGamma_{\name(F),L},\cA_{\name(F),L})_{L \in \Label_F}
  }{
    \varPi\colon (\cGamma_{f,L},\cA_{f,L})_{(f,L)\, \in\, \FnLabel_\varPi}
  } \br[-.1em]
  \text{
    \(\name(F)\): the function name of \(F\) \quad
    \(\Label_F\): the set of labels in \(F\)
  } \br[-.2em]
  \text{
    \(\FnLabel_\varPi\): the set of pairs \((f,L)\) such that a function \(f\) in \(\varPi\) has a label \(L\)
  } \br[.2em]
  \frac{\begin{gathered}
    F = \fn f\angled{\alpha_0,\dots,\alpha_{m-1}
    \!\mid\! \alpha_{a_0} \!\le\! \alpha_{b_0}, \dots, \alpha_{a_{l-1}} \!\le\! \alpha_{b_{l-1}}} (x_0\colon T_0, \dots, x_{n-1}\colon T_{n-1}) \!\to\! U\, \{{\cdots}\} \\[-.3em]
    \cGamma_\entry = \{x_i\colon T_i \!\mid\! i \!\in\! [n]\} \quad
    A = \{\alpha_j \!\mid\! j \!\in\! [m]\} \ \
    \cA_\entry = \bigl(A,\bigl(\Id_A \!\cup\! \{(\alpha_{a_k},\alpha_{b_k}) \!\mid\! k \!\in\! [l]\}\bigr)^+\bigr) \\[-.2em]
    \text{for any}\ L'\colon S \in \LabelStmt_F,\
    S \colond{\varPi,f} (\cGamma_{L'},\cA_{L'}) \mid (\cGamma_L,\cA_L)_{L \in \Label_F} \mid U
  \end{gathered}}{\begin{aligned}
    F \colond\varPi\ (\cGamma_L,\cA_L)_{L \in \Label_F}
  \end{aligned}} \br[-.1em]
  \text{
    \(\LabelStmt_F\): the set of labeled statements in \(F\)
  } \br[-.4em]
  \text{
    \(\Id_A\): the identity relation on \(A\) \quad
    \(R^+\): the transitive closure of \(R\)
  }
\end{gather*}
\endgroup

On the rule for the function,
the initial whole context at \(\entry\) is specified (the second and third preconditions) and also the contexts for other labels are checked (the fourth precondition).
The context for each label (in each function) can actually be determined in the order by the distance in the number of \(\goto\) jumps from \(\entry\), but that order is not very obvious because of \emph{unstructured control flows}.

\Paragraph{Statement}
`\(S \colond{\varPi,f} (\cGamma,\cA) \mid (\cGamma_L,\cA_L)_L \mid U\)' means that running the statement \(S\) (under \(\varPi,f\)) with the whole context \((\cGamma,\cA)\) results in a jump to a label with the whole contexts specified by \((\cGamma_L,\cA_L)_L\) or a return of data of type \(U\).
Its rules are presented below.
`\(I\colond{\varPi,f} (\cGamma,\cA) \to (\cGamma',\cA')\)' is explained later.
\begingroup\small
\begin{gather*}
  \frac{
    I\colond{\varPi,f} (\cGamma,\cA) \to (\cGamma_{L_0},\cA_{L_0})
  }{
    I;\, \goto {L_0}\colond{\varPi,f} (\cGamma,\cA) \mid (\cGamma_L,\cA_L)_L \mid U
  } \quad
  \frac{
    \cGamma = \{x\colon U\} \quad
    \lvert\cA\rvert = A_{\ex\, \varPi,f}
  }{
    \return x\colond{\varPi,f} (\cGamma,\cA) \mid (\cGamma_L,\cA_L)_L \mid U
  } \br[-.1em]
  \text{
    \(A_{\ex\, \varPi,f}\): the set of lifetime parameters of \(f\) in \(\varPi\)
  } \br[.2em]
  \frac{\begin{gathered}
    x\colon P\,(T_0 \!+\! T_1) \in \cGamma \\[-.4em]
    \text{for}\ i = 0,1,\
    (\cGamma_{L_i},\cA_{L_i}) = (\cGamma \!-\! \{x\colon P\,(T_0 \!+\! T_1)\} \!+\! \{y_i\colon P\, T_i\},\, \cA)
  \end{gathered}}{
    \match{*x}{\inj_0 *y_0 \to \goto L_0,\ \inj_1 *y_1 \to \goto L_1}\colond{\varPi,f} (\cGamma,\cA) \mid (\cGamma_L,\cA_L)_L \mid U
  }
\end{gather*}
\endgroup
The rule for the \(\return\) statement ensures that there remain no extra variables and local lifetime variables.

\Paragraph{Instruction}
`\(I\colond{\varPi,f} (\cGamma,\cA) \to (\cGamma',\cA')\)' means that running the instruction \(I\) (under \(\varPi,f\)) updates the whole context \((\cGamma,\cA)\) into \((\cGamma',\cA')\).
The rules are designed so that, for any \(I\), \(\varPi\), \(f\), \((\cGamma,\cA)\), there exists at most one \((\cGamma',\cA')\) such that \(I\colond{\varPi,f} (\cGamma,\cA) \to (\cGamma',\cA')\) holds.
Below we present some of the rules;
the complete rules are presented in \shortorfull{the full paper}{\cref{index:appx-cor-typing-instructions}}.
The following is the typing rule for mutable (re)borrow.
\begingroup\small
\begin{gather*}
  \frac{
    \alpha \notin A_{\ex\, \varPi,f} \quad
    P = \own, \mut_\beta \quad
    \text{for any}\ \gamma \in \Lifetime_{P\, T},\
    \alpha \le_\cA \gamma
  }{
    \Let y = \mutbor_\alpha x \, \colond{\varPi,f} (\cGamma \!+\! \{x\colon P\, T\},\, \cA) \to (\cGamma \!+\! \{y\colon \mut_\alpha T,\, x\colonu{\dagger\alpha} P\, T\},\, \cA)
  } \br[-.1em]
  \text{
    \(\Lifetime_T\): the set of lifetime variables occurring in \(T\)
  }
\end{gather*}
\endgroup
After you mutably (re)borrow an owning pointer / mutable reference \(x\) until \(\alpha\), \(x\) is \emph{frozen} until \(\alpha\).
Here, \(\alpha\) should be a local lifetime variable\footnote{%
  In COR, a reference that lives after the return from the function should be created by splitting a reference (e.g. `\(\Let \,(*y_0,*y_1) = *x\)') given in the inputs;
  see also \nameref{index:cor-syntax-discuss}.
} (the first precondition) that does not live longer than the data of \(x\) (the third precondition).
Below are the typing rules for local lifetime variable introduction and elimination.
\begingroup\small
\begin{gather*}
  \intro \alpha \, \colond{\varPi,f} \bigl(\cGamma,(A,R)\bigr) \to \bigl(\cGamma,(\{\alpha\} \!+\! A,\, \{\alpha\} \!\times\! (\{\alpha\} \!+\! A_{\ex\, \varPi,f}) \!+\! R)\bigr) \br[.3em]
  \frac{
    \alpha \notin A_{\ex\, \varPi,f}
  }{
    \now \alpha \, \colond{\varPi,f} \bigl(\cGamma,(\{\alpha\} \!+\! A,\,R)\bigr)
    \to \bigl(\{\thaw_\alpha(x\colonu{\ac} T) \mid x\colonu{\ac} T \!\in\! \cGamma\},\,(A,\{(\beta,\gamma) \!\in\! R \mid \beta \!\ne\! \alpha\})\bigr)
  } \br[-.1em]
  \thaw_\alpha(x\colonu\ac T) \defeq \begin{cases}
    x\colon T &\!\!\! (\ac = \dagger\alpha) \\[-.3em]
    x\colonu\ac T &\!\!\! (\text{otherwise})
  \end{cases}
\end{gather*}
\endgroup
On \(\intro \alpha\),
it just ensures the new local lifetime variable to be earlier than any lifetime parameters (which are given by exterior functions).
On \(\now \alpha\),
the variables frozen with \(\alpha\) get active again.
Below is the typing rule for dereference of a pointer to a pointer, which may be a bit interesting.
\begingroup\small
\begin{gather*}
  \Let y = *x \, \colond{\varPi,f} (\cGamma \!+\! \{x\colon P\,P'\,T\},\, \cA) \to (\cGamma \!+\! \{y\colon (P \!\circ\! P')\,T\},\, \cA) \br[-.2em]
  \dbox{\(
    P \circ {\own} = {\own} \circ P \defeq P \quad
    R_\alpha \circ R'_\beta \defeq R''_\alpha\
    \text{where}\ R'' = \begin{cases}
      \mut &\!\! (R = R' = \mut) \\[-.3em]
      \immut &\!\! (\text{otherwise})
    \end{cases}
  \)}
\end{gather*}
\endgroup
The third precondition of the typing rule for \(\mutbor\) justifies taking just \(\alpha\) in the rule `\(R_\alpha \circ R'_\beta \defeq R''_\alpha\)'.

\vspace{\baselineskip}

Let us interpret \(\varPi\colon (\cGamma_{f,L},\cA_{f,L})_{(f,L)\, \in\, \FnLabel_\varPi}\) as ``the program \(\varPi\) has the type \((\cGamma_{f,L},\cA_{f,L})_{(f,L)\, \in\, \FnLabel_\varPi}\)''.
The type system ensures that any program has at most one type (which may be a bit unclear because of unstructured control flows).
Hereinafter, we implicitly assume that a program has a type.

\subsection{Concrete Operational Semantics}
\label{index:cor-cos}

We introduce for COR \emph{concrete operational semantics}, which handles a concrete model of the heap memory.

The basic item, \emph{concrete configuration} \(\cC\), is defined as follows.
\begin{gather*}
  \cS \sdef \End \mathrel{\, \bigm|\,} [f,L]\, x,\cF;\, \cS \quad
  \ltag{concrete configuration} \cC \sdef [f,L]\, \cF;\, \cS \mid \cH
\end{gather*}

Here, \(\cH\) is a \emph{heap}, which maps addresses (represented by integers) to integers (data).
\(\cF\) is a \emph{concrete stack frame}, which maps variables to addresses.
The stack part of \(\cC\) is of form `\([f,L]\, \cF;\, [f',L']\, x,\cF';\, \cdots;\, \End\)' (we may omit the terminator `\(;\, \End\)').
\([f,L]\) on each stack frame indicates the program point.
`\(x,\)' on each non-top stack frame is the receiver of the value returned by the function call.

Concrete operational semantics is characterized by the one-step transition relation \(\cC \to_\varPi \cC'\) and the termination relation \(\final_\varPi(\cC)\), which can be defined straightforwardly.
Below we show the rules for mutable (re)borrow, swap, function call and return from a function;
the complete rules and an example execution are presented in \shortorfull{the full paper}{\cref{index:appx-cor-cos}}.
\(S_{\varPi,f,L}\) is the statement for the label \(L\) of the function \(f\) in \(\varPi\).
\(\Ty_{\varPi,f,L}(x)\) is the type of variable \(x\) at the label.
\begingroup\small
\begin{gather*}
  \frac{
    S_{\varPi,f,L} = \Let y = \mutbor_\alpha x;\, \goto L' \quad \cF(x) = a
  }{
    [f,L]\, \cF;\, \cS \mid \cH
    \ \to_\varPi\ [f,L']\, \cF \!+\! \{(y,a)\};\, \cS \mid \cH
  } \br[.3em]
  \frac{
    S_{\varPi,f,L} = \swap(*x,*y);\, \goto L' \quad
    \Ty_{\varPi,f,L}(x) = P\, T \quad
    \cF(x) = a \quad
    \cF(y) = b
  }{
    \begin{aligned}
      & [f,L]\, \cF;\, \cS \mid \cH \!+\! \{(a \!+\! k, m_k) \!\mid\! k \!\in\! [\#T]\} \!+\! \{(b \!+\! k, n_k) \!\mid\! k \!\in\! [\#T]\} \\[-.3em]
      & \ \to_\varPi\ [f,L']\, \cF;\, \cS \mid \cH \!+\! \{(a \!+\! k, n_k) \!\mid\! k \!\in\! [\#T]\} \!+\! \{(b \!+\! k, m_k) \!\mid\! k \!\in\! [\#T]\}
    \end{aligned}
  } \br[.3em]
  \frac{\begin{gathered}
    S_{\varPi,f,L} = \Let y = g\angled{\cdots}(x_0,\dots,x_{n-1});\, \goto L' \\[-.2em]
    \varSigma_{\varPi,g} = \angled{\cdots}(x'_0\colon T_0, \dots, x'_{n-1}\colon T_{n-1}) \to U
  \end{gathered}}{
    [f,L]\, \cF \!+\! \{(x_i,a_i) \!\mid\! i \!\in\! [n]\};\, \cS \mid \cH \to_\varPi\ [g,\entry]\, \{(x'_i,a_i) \!\mid\! i \!\in\! [n]\};\,
    [f,L]\,y,\cF;\, \cS \mid \cH
  } \br[.3em]
  \frac{
    S_{\varPi,f,L} = \return x
  }{
    [f,L]\, \{(x,a)\}; [g,L']\, x',\cF'; \cS \mid \cH
    \to_\varPi [g,L']\, \cF' \!+\! \{(x',a)\}; \cS \mid \cH
  } \br[.3em]
  \frac{
    S_{\varPi,f,L} = \return x \quad
  }{
    \final_\varPi\bigl([f,L]\, \{(x,a)\} \mid \cH\bigr)
  }
\end{gather*}
\endgroup
Here we introduce `\(\#T\)', which represents how many memory cells the type \(T\) takes (at the outermost level).
\(\#T\) is defined for every \emph{complete} type \(T\), because every occurrence of type variables in a complete type is guarded by a pointer constructor.
\begingroup\small
\begin{gather*}
  \#(T_0 \!+\! T_1) \defeq 1 + \max\{\#T_0,\#T_1\} \quad
  \#(T_0 \!\times\! T_1) \defeq \#T_0 + \#T_1 \br[-.1em]
  \#\, \mu X.T \defeq \#\,T[\mu X.T/X] \quad
  \#\, \Int = \#\,P\, T \defeq 1 \quad
  \#\, \unit = 0
\end{gather*}
\endgroup

\section{CHC Representation of COR Programs}
\label{index:chc}

To formalize the idea discussed in \cref{index:intro}, we give a translation from COR programs to CHC systems,
which precisely characterize the input-output relations of the COR programs.
We first define the logic for CHCs (\cref{index:chc-logic}).
We then formally describe our translation (\cref{index:chc-body}) and prove its correctness (\cref{index:chc-correct}).
Also, we examine effectiveness of our approach with advanced examples (\cref{index:chc-examples}) and discuss how our idea can be extended and enhanced (\cref{index:chc-discuss}).

\subsection{Multi-sorted Logic for Describing CHCs}
\label{index:chc-logic}

To begin with, we introduce a first-order multi-sorted logic for describing the CHC representation of COR programs.

\Subsubsection{Syntax}

The syntax is defined as follows.
\begingroup\small
\begin{gather*}
  \ltag{CHC} \varPhi \sdef
    \forall\,x_0\colon \sigma_0, \dots, x_{m-1}\colon \sigma_{m-1}.\ \
    \check\varphi \impliedby \psi_0 \land \cdots \land \psi_{n-1} \br[-.2em]
  \top \defeq \text{the nullary conjunction of formulas} \br[-.2em]
  \ltag{formula} \varphi,\psi \sdef f(t_0,\dots,t_{n-1}) \quad
  \ltag{elementary formula} \check\varphi \sdef f(p_0,\dots,p_{n-1}) \br[-.2em]
  \ltag{term} t \sdef x \sor \angled{t} \sor \angled{t_*, t_\0} \sor \inj_i t \sor (t_0,t_1) \sor {*}t \sor \0t \sor t.i \sor \const \sor t \op t' \br[-.2em]
  \ltag{value} v, w \sdef \angled{v} \sor \angled{v_*, v_\0} \sor \inj_i v \sor (v_0,v_1) \sor \const \br[-,2em]
  \ltag{pattern} p,q \sdef x \sor \angled{p} \sor \angled{p_*, p_\0} \sor \inj_i p \sor (p_0,p_1) \sor \const \br[-.2em]
  \ltag{sort} \sigma,\tau \sdef X \sor \mu X.\sigma \sor C\, \sigma \sor \sigma_0 + \sigma_1 \sor \sigma_0 \times \sigma_1 \sor \Int \sor \unit \br[-.2em]
  \ltag{container kind} C \sdef \Box \sor \mut \quad
  \const \sdef \text{same as COR} \quad \op \sdef \text{same as COR} \br[-.2em]
  \bool \defeq \unit + \unit \quad
  \true \defeq \inj_1 {()} \quad
  \false \defeq \inj_0 {()} \br[-.2em]
  X \sdef \dtag{sort variable} \quad
  x,y \sdef \dtag{variable} \quad
  f \sdef \dtag{predicate variable}
\end{gather*}
\endgroup

We introduce \(\Box \sigma\) and \(\mut \sigma\), which correspond to \(\own T\)/\(\immut_\alpha T\) and \(\mut_\alpha T\) respectively.
\(\angled{t}\)/\(\angled{t_*,t_\0}\) is the constructor for \(\Box \sigma\)/\(\mut \sigma\).
\(*t\) takes the body/first value of \(\angled{-}\)/\(\angled{-,\! -}\) and \(\0t\) takes the second value of \(\angled{-,\! -}\).
We restrict the form of CHCs here to simplify the proofs later.
Although the logic does not have a primitive for equality, we can define the equality in a CHC system (e.g. by adding \(\forall\,x\colon \sigma.\, \var{Eq}(x,x) \impliedby \top\)).

A \emph{CHC system} \((\cPhi,\cXi)\) is a pair of a finite set of CHCs \(\cPhi = \{\varPhi_0,\dots,\varPhi_{n-1}\}\) and \(\cXi\),
where \(\cXi\) is a finite map from predicate variables to tuples of sorts (denoted by \(\varXi\)), specifying the sorts of the input values.
Unlike the informal description in \cref{index:intro}, we add \(\cXi\) to a CHC system.

\Subsubsection{Sort System}

`\(t \colond\cDelta \sigma\)' (the term \(t\) has the sort \(\sigma\) under \(\cDelta\)) is defined as follows.
Here, \(\cDelta\) is a finite map from variables to sorts.
\(\sigma \sim \tau\) is the congruence on sorts induced by \(\mu X.\sigma \sim \sigma[\mu X.\sigma/X]\).
\begingroup\small
\begin{gather*}
  \frac{
    \cDelta(x) = \sigma
  }{
    x\colond\cDelta \sigma
  } \quad
  \frac{
    t\colond\cDelta \sigma
  }{
    \angled{t}\colond\cDelta \Box \sigma
  } \quad
  \frac{
    t_*, t_\0 \colond\cDelta \sigma
  }{
    \angled{t_*, t_\0}\colond\cDelta \mut \sigma
  } \quad
  \frac{
    t\colond\cDelta \sigma_i
  }{
    \inj_i t \colond\cDelta \sigma_0 + \sigma_1
  } \quad
  \frac{
    t_0\colond\cDelta \sigma_0 \quad
    t_1\colond\cDelta \sigma_1
  }{
    (t_0,t_1) \colond\cDelta \sigma_0 \times \sigma_1
  } \br[-.1em]
  \frac{
    t\colond\cDelta C\, \sigma
  }{
    *t\colond\cDelta \sigma
  } \quad
  \frac{
    t\colond\cDelta \mut \sigma
  }{
    \0t\colond\cDelta \sigma
  } \quad
  \frac{
    t\colond\cDelta \sigma_0 + \sigma_1
  }{
    t.i\colond\cDelta \sigma_i
  } \quad
  \const \colond\cDelta \sigma_\const \quad
  \frac{
    t, t' \colond\cDelta \Int
  }{
    t \op t' \colond\cDelta \sigma_{\op}
  } \quad
  \frac{
    t\colond\cDelta \sigma \quad \sigma \sim \tau
  }{
    t\colond\cDelta \tau
  } \br[-.1em]
  \text{
    \(\sigma_\const\): the sort of \(\const\) \quad
    \(\sigma_{\op}\): the output sort of \(\op\)
  }
\end{gather*}
\endgroup

`\(\wellSorted_{\cDelta,\cXi}(\varphi)\)' and `\(\wellSorted_\cXi(\cPhi)\)', the judgments on well-sortedness of formulas and CHCs, are defined as follows.
\begingroup\small
\begin{gather*}
  \frac{
    \cXi(f) = (\sigma_0,\dots,\sigma_{n-1}) \quad
    \text{for any}\ i \in [n],\
    t_i \colond\cDelta \sigma_i
  }{
    \wellSorted_{\cDelta,\cXi}(f(t_0,\dots,t_{n-1}))
  } \br[.3em]
  \frac{\begin{gathered}
    \cDelta = \{(x_i,\sigma_i) \mid i \!\in\! [m]\} \quad
    \wellSorted_{\cDelta,\cXi}(\check\varphi) \quad
    \text{for any}\ j \in [n],\,
    \wellSorted_{\cDelta,\cXi}(\psi_j)
  \end{gathered}}{
    \wellSorted_\cXi\bigl(
      \forall x_0\colon \sigma_0, \dots, x_{m-1}\colon \sigma_{m-1}.\ \
      \check\varphi \impliedby \psi_0 \land \cdots \land \psi_{n-1}
    \bigr)
  }
\end{gather*}
\endgroup
The CHC system \((\cPhi,\cXi)\) is said to be well-sorted if \(\wellSorted_\cXi(\varPhi)\) holds for any \(\varPhi \in \cPhi\).

\Subsubsection{Semantics}

`\(\Bracked{t}_\cI\)', the interpretation of the term \(t\) as a value under \(\cI\), is defined as follows.
Here, \(\cI\) is a finite map from variables to values.
Although the definition is partial, the interpretation is defined for all well-sorted terms.
\begingroup\small
\begin{gather*}
  \Bracked{x}_\cI \defeq \cI(x) \quad
  \Bracked{\angled{t}}_\cI \defeq \angled{\Bracked{t}_\cI} \quad
  \Bracked{\angled{t_*, t_\0}}_\cI \defeq \angled{\Bracked{t_*}_\cI,\Bracked{t_\0}_\cI} \quad
  \Bracked{\inj_i t}_\cI \defeq \inj_i \Bracked{t}_\cI \br[-.3em]
  \Bracked{(t_0,t_1)}_\cI \defeq (\Bracked{t_0}_\cI,\Bracked{t_1}_\cI) \quad
  \Bracked{*t}_\cI\defeq \begin{cases}
    v &\!\!\! (\Bracked{t}_\cI = \angled{v}) \\[-.3em]
    v_* &\!\!\! (\Bracked{t}_\cI = \angled{v_*, v_\0})
  \end{cases} \quad
  \Bracked{\0t}_\cI\defeq v_\0\ \text{if}\ \Bracked{t}_\cI = \angled{v_*, v_\0} \br[-.3em]
  \Bracked{t.i}_\cI\defeq v_i\ \text{if}\ \Bracked{t}_\cI = (v_0,v_1) \quad
  \Bracked{\const}_\cI \defeq \const \quad
  \Bracked{t \op t'}_\cI\defeq \Bracked{t}_\cI \mathop{\Bracked{\op}} \Bracked{t'}_\cI \br[-.0em]
  \text{
    \(\Bracked{\op}\): the binary operation on values corresponding to \(\op\)
  }
\end{gather*}
\endgroup

A \emph{predicate structure} \(\cM\) is a finite map from predicate variables to (concrete) predicates on values.
\(\cM,\cI \models f(t_0,\dots,t_{n-1})\) means that \(\cM(f)(\Bracked{t_0}_\cI,\dots, \Bracked{t_{n-1}}_\cI)\) holds.
\(\cM \models \varPhi\) is defined as follows.
\begingroup\small
\begin{gather*}
  \frac{
    \text{for any}\ \cI\ \text{s.t.}\ \forall\,i \!\in\! [m].\, \cI(x_i)\colond\emp \sigma_i,\
    \cM,\cI \models \psi_0, \dots, \psi_{n-1}\ \text{implies}\ \cM,\cI \models \check\varphi
  }{
    \cM \, \models\, \forall\,x_0\colon \sigma_0, \dots, x_{m-1}\colon \sigma_{m-1}.\ \
    \check\varphi \impliedby \psi_0 \land \cdots \land \psi_{n-1}
  }
\end{gather*}
\endgroup
Finally, \(\cM \models (\cPhi,\cXi)\) is defined as follows.
\begingroup\small
\begin{gather*}
  \frac{\begin{gathered}
    \text{for any}\ (f,(\sigma_0,\dots,\sigma_{n-1})) \in \cXi,\
    \text{\(\cM(f)\) is a predicate on values of sort \(\sigma_0,\dots,\sigma_{n-1}\)} \\[-.4em]
    \dom \cM = \dom \cXi \quad
    \text{for any}\ \varPhi \in \cPhi,\,
    \cM \models \varPhi
  \end{gathered}}{
    \cM \models (\cPhi,\cXi)
  }
\end{gather*}
\endgroup

When \(\cM \models (\cPhi,\cXi)\) holds, we say that \(\cM\) is a \emph{model} of \((\cPhi,\cXi)\).
Every well-sorted CHC system \((\cPhi,\cXi)\) has the \emph{least model} on the point-wise ordering (which can be proved based on the discussions in \cite{HornProgramSemantics}), which we write as \(\cM^\least_{(\cPhi,\cXi)}\).

\subsection{Translation from COR Programs to CHCs}
\label{index:chc-body}

Now we formalize our translation of Rust programs into CHCs.
We define \(\Parened{\varPi}\), which is a CHC system that represents the input-output relations of the functions in the COR program \(\varPi\).

Roughly speaking, the least model \(\cM^\least_{\Parened{\varPi}}\) for this CHC system should satisfy:
for any values \(v_0, \dots, v_{n-1}, w\), \(\cM^\least_{\Parened{\varPi}} \models f_\entry(v_0, \dots, v_{n-1}, w)\) holds exactly if, in COR, a function call \(f(v_0,\dots,v_{n-1})\) can return \(w\).
Actually, in concrete operational semantics, such values should be read out from the heap memory.
The formal description and proof of this expected property is presented in \cref{index:chc-correct}.

\Subsubsection{Auxiliary Definitions}

The sort corresponding to the type \(T\), \(\Parened{T}\), is defined as follows.
\(\check P\) is a meta-variable for a non-mutable-reference pointer kind, i.e. \(\own\) or \(\immut_\alpha\).
Note that the information on lifetimes is all stripped off.
\begingroup\small
\begin{gather*}
  \Parened{X} \defeq X \quad
  \Parened{\mu X.T} = \mu X.\Parened{T} \quad
  \Parened{\check P\, T} \defeq \Box\, \Parened{T} \quad
  \Parened{\mut_\alpha T} \defeq \mut\, \Parened{T} \br[-.1em]
  \Parened{\Int} \defeq \Int \quad
  \Parened{\unit} \defeq \unit \quad
  \Parened{T_0 \!+\! T_1} \defeq \Parened{T_0} + \Parened{T_1} \quad
  \Parened{T_0 \!\times\! T_1} \defeq \Parened{T_0} \times \Parened{T_1}
\end{gather*}
\endgroup

We introduce a special variable \(\res\) to represent the result of a function.\footnote{%
  For simplicity, we assume that the parameters of each function are sorted respecting \emph{some fixed order} on variables (with \(\res\) coming at the last), and we enumerate various items in this fixed order.
}
For a label \(L\) in a function \(f\) in a program \(\varPi\),
we define \(\check\varphi_{\varPi,f,L}\), \(\varXi_{\varPi,f,L}\) and \(\cDelta_{\varPi,f,L}\) as follows,
if the items in the variable context for the label are enumerated as \(x_0\colonu{\ac_0} T_0,\dots,x_{n-1}\colonu{\ac_{n-1}} T_{n-1}\) and the return type of the function is \(U\).
\begin{gather*}
  \check\varphi_{\varPi,f,L} \defeq f_L(x_0,\dots,x_{n-1},\res) \quad
  \varXi_{\varPi,f,L} \defeq (\Parened{T_0},\dots,\Parened{T_{n-1}},\Parened{U}) \br[-.1em]
  \cDelta_{\varPi,f,L} \defeq \{(x_i,\Parened{T_i}) \mid i \in [n]\} + \{(\res,\Parened{U})\}
\end{gather*}
\(\forall(\cDelta)\) stands for \(\forall\,x_0\colon \sigma_0,\, \dots,\,x_{n-1}\colon \sigma_{n-1}\), where the items in \(\cDelta\) are enumerated as \((x_0,\sigma_0),\dots,(x_{n-1},\sigma_{n-1})\).

\Subsubsection{CHC Representation}

Now we introduce `\(\Parened{L\colon S}_{\varPi,f}\)', the set (in most cases, singleton) of CHCs modeling the computation performed by the labeled statement \(L\colon S\) in \(f\) from \(\varPi\).
Unlike informal descriptions in \cref{index:intro}, we turn to \emph{pattern matching} instead of equations, to simplify the proofs\shortorfull{}{ in \cref{index:appx-proof-sldc}}.
Below we show some of the rules; the complete rules are presented in \shortorfull{the full paper}{\cref{index:appx-chc}}.
The variables marked green (e.g. \(\fresh{x_\0}\)) should be fresh.
The following is the rule for mutable (re)borrow.
\begingroup\small
\begin{gather*}
  \begin{aligned}
    & \Parened{L\colon \Let y = \mutbor_\alpha x;\ \goto L'}_{\varPi,f} \\[-.4em]
    & \defeq \begin{cases}
      \left\{\, \begin{aligned}
        & \forall(\cDelta_{\varPi,f,L} \!+\! \{(\fresh{x_\0},\Parened{T})\}). \\[-.1em]
        & \ \check\varphi_{\varPi,f,L}
        \!\impliedby\! \check\varphi_{\varPi,f,L'}[\angled{*x,\fresh{x_\0}}/y,\angled{\fresh{x_\0}}/x]
      \end{aligned} \,\right\}
      & (\Ty_{\varPi,f,L}(x) = \own T) \\[.8em]
      \left\{\, \begin{aligned}
        & \forall(\cDelta_{\varPi,f,L} \!+\! \{(\fresh{x_\0},\Parened{T})\}). \\[-.1em]
        & \ \check\varphi_{\varPi,f,L}
        \!\impliedby\! \check\varphi_{\varPi,f,L'}[\angled{*x,\fresh{x_\0}}/y,\angled{\fresh{x_\0},\0x}/x]
      \end{aligned} \,\right\}
      & (\Ty_{\varPi,f,L}(x) = \mut_\alpha T)
    \end{cases}
  \end{aligned}
\end{gather*}
\endgroup
The value at the end of borrow is represented as a newly introduced variable \(x_\0\).
Below is the rule for release of a variable.
\begingroup\small
\begin{gather*}
  \begin{aligned}
    & \Parened{L\colon \drop x;\ \goto L'}_{\varPi,f} \\[-.4em]
    & \defeq \begin{cases}
      \bigl\{\,
        \forall(\cDelta_{\varPi,f,L}).\
        \check\varphi_{\varPi,f,L} \!\impliedby\! \check\varphi_{\varPi,f,L'}
       \,\bigr\}
      & (\Ty_{\varPi,f,L}(x) = \check P\, T) \\[.3em]
      \left\{\, \begin{aligned}
        & \forall(\cDelta_{\varPi,f,L} \!-\! \{(x, \mut\, \Parened{T})\} \!+\! \{(\fresh{x_*}, \Parened{T})\}). \\[-.1em]
        & \ \check\varphi_{\varPi,f,L}[\angled{\fresh{x_*},\fresh{x_*}}/x] \!\impliedby\! \check\varphi_{\varPi,f,L'}
      \end{aligned} \,\right\}
      & (\Ty_{\varPi,f,L}(x) = \mut_\alpha T)
    \end{cases}
  \end{aligned}
\end{gather*}
\endgroup
When a variable \(x\) of type \(\mut_\alpha T\) is dropped/released, we check the prophesied value at the end of borrow.
Below is the rule for a function call.
\begingroup\small
\begin{gather*}
  \begin{aligned}
    & \Parened{L\colon \Let y = g\angled{{\cdots}}(x_0,\dots,x_{n-1});\ \goto L'}_{\varPi,f} \\[-.3em]
    & \ \defeq\ \{\forall(\cDelta_{\varPi,f,L} \!+\! \{(y,\Parened{\Ty_{\varPi,f,L'}(y)})\}).\ \check\varphi_{\varPi,f,L} \!\impliedby\! g_\entry(x_0,\dots,x_{n-1},y) \land \check\varphi_{\varPi,f,L'}\}
  \end{aligned}
\end{gather*}
\endgroup
The body (the right-hand side of \(\!\impliedby\!\)) of the CHC contains two formulas, which yields a kind of call stack at the level of CHCs.
Below is the rule for a return from a function.
\begingroup\small
\begin{gather*}
  \Parened{L\colon \return x}_{\varPi,f}\ \defeq\ \bigl\{\,
    \forall(\cDelta_{\varPi,f,L}).\
    \check\varphi_{\varPi,f,L}[x/\res] \!\impliedby\! \top
   \,\bigr\}
\end{gather*}
\endgroup
The variable \(\res\) is forced to be equal to the returned variable \(x\).

Finally, \(\Parened{\varPi}\), the CHC system that represents the COR program \(\varPi\) (or the \emph{CHC representation} of \(\varPi\)), is defined as follows.
\begingroup\small
\begin{gather*}
  \Parened{\varPi} \defeq \bigl(
    \textstyle
    \sum_{F\, \text{in}\, \varPi,\,L\colon S\, \in\, \LabelStmt_F}
    \ \Parened{L\colon S}_{\varPi,\name(F)},\
    (\varXi_{\varPi,f,L})_{f_L\, \text{s.t.}\,(f,L)\, \in\, \FnLabel_\varPi}
  \bigr)
\end{gather*}
\endgroup

\begin{example}[CHC Representation]
We present below the CHC representation of \(\takemax\) described in \cref{index:cor-syntax}.
We omit CHCs on \(\incmax\) here.
We have also excluded the variable binders `\(\forall\, \cdots\)'.
\begingroup\small
\begin{gather*}
  \takemax_\entry(\nvar{ma},\var{mb},\res) \impliedby \takemax_{\nL{1}}(\nvar{ma},\var{mb},\angled{*\var{ma} \!\bge\! *\var{mb}},\res) \br[-.1em]
  \takemax_{\nL{1}}(\nvar{ma},\var{mb},\angled{\inj_1 \var{ord_{*!}}},\res) \impliedby \takemax_{\nL{2}}(\nvar{ma},\var{mb},\angled{\var{ord_{*!}}},\res) \br[-.1em]
  \takemax_{\nL{1}}(\nvar{ma},\var{mb},\angled{\inj_0 \var{ord_{*!}}},\res) \impliedby \takemax_{\nL{5}}(\nvar{ma},\var{mb},\angled{\var{ord_{*!}}},\res) \br[-.1em]
  \takemax_{\nL{2}}(\nvar{ma},\var{mb},\var{ou},\res) \impliedby \takemax_{\nL{3}}(\nvar{ma},\var{mb},\res) \br[-.1em]
  \takemax_{\nL{3}}(\nvar{ma},\angled{\nvar{mb}_*,\!\var{mb}_*},\res) \impliedby \takemax_{\nL{4}}(\nvar{ma},\res) \br[-.1em]
  \takemax_{\nL{4}}(\nvar{ma},\var{ma}) \impliedby \top \br[-.1em]
  \takemax_{\nL{5}}(\nvar{ma},\var{mb},\var{ou},\res) \impliedby \takemax_{\nL{6}}(\nvar{ma},\var{mb},\res) \br[-.1em]
  \takemax_{\nL{6}}(\angled{\nvar{ma}_*,\!\var{ma}_*},\var{mb},\res) \impliedby \takemax_{\nL{7}}(\nvar{mb},\res) \br[-.1em]
  \takemax_{\nL{7}}(\nvar{mb},\var{mb}) \impliedby \top
\end{gather*}
\endgroup
The fifth and eighth CHC represent release of \(\var{mb}\)/\(\var{ma}\).
The sixth and ninth CHC represent the determination of the return value \(\res\).
\end{example}

\subsection{Correctness of the CHC Representation}
\label{index:chc-correct}

Now we formally state and prove the correctness of the CHC representation.

\Paragraph{Notations}
We use \(\Braced{\cdots}\) (instead of \(\{\cdots\}\)) for multisets.
\(A \oplus B\) (or more generally \(\bigoplus_\lambda A_\lambda\)) denotes the multiset sum.
For example, \(\Braced{0,1} \oplus \Braced{1} = \Braced{0,1,1} \ne \Braced{0,1}\).

\Subsubsection{Readout and Safe Readout}

We introduce a few judgments to formally describe how read out data from the heap.

First, the judgment `\(\readout_\cH(*a\Colon T \mid v;\, \aM)\)' (the data at the address \(a\) of type \(T\) can be read out from the heap \(\cH\) as the value \(v\), yielding the memory footprint \(\aM\)) is defined as follows.\footnote{%
  Here we can ignore mutable/immutable references, because we focus on what we call \emph{simple} functions, as explained later.
}
Here, a \emph{memory footprint} \(\aM\) is a finite multiset of addresses, which is employed for monitoring the memory usage.
\begingroup\small
\begin{gather*}
  \frac{
    \cH(a) = a' \quad
    \readout_\cH(*a'\Colon T \mid v;\, \aM)
  }{
    \readout_\cH(*a\colon \own T \mid \angled{v};\, \aM \!\oplus\! \Braced{a})
  } \quad
  \frac{
    \readout_\cH(*a\Colon T[\mu X.T/X] \mid v;\, \aM)
  }{
    \readout_\cH(*a\Colon \mu X.T/X \mid v;\, \aM)
  } \br[.1em]
  \frac{
    \cH(a) = n
  }{
    \readout_\cH(*a\Colon \Int \mid n;\, \Braced{a})
  } \quad
  \readout_\cH(*a\Colon \unit \mid ();\, \emp) \br[.1em]
  \frac{\begin{gathered}
    \cH(a) = i \in [2] \quad
    \text{for any}\ k \!\in\! [(\#T_{1 \!-\! i} \!-\! \#T_i)_{\ge 0}],\ \cH(a \!+\! 1 \!+\! \#T_i \!+\! k) = 0 \\[-.3em]
    \readout_\cH(*(a \!+\! 1)\Colon T_i \mid v;\, \aM)
  \end{gathered}}{
    \readout_\cH\bigl(*a\Colon T_0 \!+\! T_1 \mid \inj_i v;\, \aM \!\oplus\! \Braced{a} \!\oplus\! \Braced{a \!+\! 1 \!+\! \#T_i \!+\! k \mid k \!\in\! [(\#T_{1 \!-\! i} \!-\! \#T_i)_{\ge 0}]}\bigr)
  } \br[-.2em]
  (n)_{\ge 0} \defeq \max \{n, 0\} \br[.1em]
  \frac{
    \readout_\cH\bigl(*a\Colon T_0 \mid v_0;\, \aM_0\bigr) \quad
    \readout_\cH\bigl(*(a \!+\! \#T_0)\Colon T_1 \mid v_1;\, \aM_1\bigr)
  }{
    \readout_\cH\bigl(*a\Colon T_0 \!\times\! T_1 \mid (v_0,v_1);\, \aM_0 \!\oplus\! \aM_1)
  }
\end{gather*}
\endgroup
For example, `\(\readout_{\{(100,7),(101,5)\}}(*100\Colon \Int \times \Int \mid (7,5);\, \Braced{100,101})\)' holds.

Next, `\(\readout_\cH(\cF\Colon \cGamma \mid \aF;\, \aM)\)' (the data of the stack frame \(\cF\) respecting the variable context \(\cGamma\) can be read out from \(\cH\) as \(\aF\), yielding \(\aM\)) is defined as follows.
\(\dom \cGamma\) stands for \(\{x \mid x\colonu\ac T \!\in\! \cGamma\}\).
\begingroup\small
\begin{gather*}
  \frac{\begin{gathered}
    \dom \cF = \dom \cGamma \quad
    \text{for any}\ x\colon \own T \in \cGamma,\
    \readout_\cH(*\cF(x)\Colon T \mid v_x;\, \aM_x)
  \end{gathered}}{
    \readout_\cH(\cF\Colon \cGamma \mid \{(x, \angled{v_x}) \!\mid\! x \!\in\! \dom \cF\};\, \textstyle \bigoplus_{x \in \dom \cF} \aM_x)
  }
\end{gather*}
\endgroup

Finally, `\(\safe_\cH(\cF\Colon \cGamma \mid \aF)\)' (the data of \(\cF\) respecting \(\cGamma\) can be \emph{safely} read out from \(\cH\) as \(\aF\)) is defined as follows.
\begingroup\small
\begin{gather*}
  \frac{\begin{gathered}
    \readout_\cH(\cF\Colon \cGamma \mid \aF;\, \aM) \quad
    \aM \ \text{has no duplicate items}
  \end{gathered}}{
    \safe_\cH(\cF\Colon \cGamma \mid \aF)
  }
\end{gather*}
\endgroup
Here, the `no duplicate items' precondition checks the safety on the ownership.

\Subsubsection{COS-based Model}

Now we introduce the \emph{COS-based model} (COS stands for concrete operational semantics) \(f^\COS_\varPi\) to formally describe the expected input-output relation.
Here, for simplicity, \(f\) is restricted to one that does not take lifetime parameters (we call such a function \emph{simple}; the input/output types of a simple function cannot contain references).
We define \(f^\COS_\varPi\) as the predicate (on values of sorts \(\Parened{T_0},\dots,\Parened{T_{n-1}},\Parened{U}\) if \(f\)'s input/output types are \(T_0,\dots,T_{n-1},U\)) given by the following rule.
\begingroup\small
\begin{gather*}
  \frac{\begin{gathered}
    \cC_0 \to_\varPi \cdots \to_\varPi \cC_N \quad
    \final_\varPi(\cC_N) \quad
    \cC_0 = [f,\entry]\, \cF \mid \cH \quad
    \cC_N = [f,L]\, \cF' \mid \cH' \\[-.2em]
    \safe_\cH\bigl(\cF\Colon \cGamma_{\varPi,f,\entry} \bigm| \{(x_i,v_i) \!\mid\! i \!\in\! [n]\}\bigr) \quad
    \safe_{\cH'}\bigl(\cF'\Colon \cGamma_{\varPi,f,L} \bigm| \{(y,w)\}\bigr)
  \end{gathered}}{
    f^\COS_\varPi(v_0,\dots,v_{n-1},w)
  } \br[.1em]
  \text{
    \(\cGamma_{\varPi,f,L}\): the variable context for the label \(L\) of \(f\) in the program \(\varPi\)
  }
\end{gather*}
\endgroup

\Subsubsection{Correctness Theorem}

Finally, the correctness (both soundness and completeness) of the CHC representation is simply stated as follows.
\begin{theorem}[Correctness of the CHC Representation]
\label{theorem:chc-correct}
  For any program \(\varPi\) and simple function \(f\) in \(\varPi\),
  \(f^\COS_\varPi\) is equivalent to \(\cM^\least_{\Parened{\varPi}}(f_\entry)\).
\end{theorem}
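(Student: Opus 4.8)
The plan is to factor the claimed equivalence through an intermediate \emph{abstract operational semantics} (AOS) for COR that computes directly on the value representation used by the CHCs --- in particular, representing a mutable reference as a pair $\angled{v_*,v_\0}$ of its current value and its \emph{prophesied} write-back value --- and then to prove two separate equivalences: $f^\COS_\varPi \equiv f^{\mathrm{AOS}}_\varPi$ (the concrete heap is faithfully abstracted away) and $f^{\mathrm{AOS}}_\varPi \equiv \cM^\least_{\Parened{\varPi}}(f_\entry)$ (the CHC system faithfully encodes the AOS). In the AOS, $\mutbor_\alpha$ applied to an owner holding $v$ nondeterministically guesses $v_\0$, produces the reference $\angled{v,v_\0}$, and immediately rewrites the frozen owner to hold $v_\0$; dually, dropping a reference $\angled{v_*,v_\0}$ is permitted only when $v_* = v_\0$. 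A terminating AOS run is thus precisely one all of whose prophecies turned out self-consistent, which is what makes the prophecy encoding sound; non-terminating runs contribute to neither side, matching the fact that $\cM^\least_{\Parened{\varPi}}$ records exactly the finite derivations.

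\textbf{Step 1: abstracting the heap.} I would define a representation relation between a concrete configuration and an abstract one, built from the judgments $\readout_\cH$ and $\safe_\cH$ extended to whole stacks (not merely the top frame) and to mutable references, and show that it is a weak bisimulation. The $\safe_\cH$ side condition --- that the memory footprint $\aM$ has no duplicate addresses --- is precisely the invariant maintained by the ownership type system, and it is what prevents disjoint sub-heaps from aliasing while simulating swap, borrow, pointer splitting and call/return, and what makes $\readout_\cH$ deterministic. The delicate point is that an abstract $\mutbor$ step ``runs ahead'' of the corresponding concrete step by committing to a future write, so the bisimulation must be phrased so that concrete runs correspond to abstract runs \emph{in which every prophecy happened to be guessed correctly}; the drop-time consistency check discards all the other abstract runs, so this still yields equality of the terminating input--output relations. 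Because $f$ is simple, the boundary readouts at the initial and final configurations involve only owning pointers, so they match $\Parened{T_i}$ and $\Parened{U}$ directly, as in the statement of the theorem.

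\textbf{Step 2: encoding the AOS as CHCs.} The clauses $\Parened{L\colon S}_{\varPi,f}$ are essentially the one-step AOS transition relation written in Horn form, the body shape $g_\entry(\dots) \land \check\varphi_{\varPi,f,L'}$ of a function-call clause realizing the call stack. Since $\cM^\least_{\Parened{\varPi}}$ is the least fixed point, it is characterized by the finite (SLDC-style) derivations of $f_\entry(v_0,\dots,v_{n-1},w)$. For the ``$\subseteq$'' inclusion I would induct on such a derivation --- well founded because the model is \emph{least} --- and read off an accepting AOS run; the fresh variables $x_\0$ introduced by the borrow clauses are instantiated to concrete values along the derivation, and those instantiations are exactly the prophecies the run must make. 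For ``$\supseteq$'' I would induct on the length of the AOS run and assemble the derivation; the only non-local subtlety is that the prophecy guessed at a $\mutbor$ is not yet justified at that point, so the induction should proceed from the \emph{end} of the run --- equivalently, be proved in the strengthened per-label form stated below --- so that by the time the borrow is reached its continuation has already pinned down $x_\0$. Branching (the $\keyword{match}$ statement) and the unstructured $\goto$ graph are then routine, with the per-label formula $\check\varphi_{\varPi,f,L}$ serving as the interface of the induction.

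\textbf{Main obstacle.} The crux of both halves is the prophecy bookkeeping: the value written back through a mutable reference is determined strictly later than the $\mutbor$ that creates it, so neither a plain forward induction on execution steps nor a plain structural induction on CHC derivations closes verbatim. I expect the clean fix to be to state the whole development per label, mirroring the fact that both $f^\COS_\varPi$ and $\Parened{L\colon S}_{\varPi,f}$ are indexed by labels and whole contexts: prove a correspondence between ``an AOS run from $(f,L)$ reaching a configuration in $\final_\varPi$'' and ``$\cM^\least_{\Parened{\varPi}}\models f_L(\dots)$'' (with the appropriate readout at both ends), and obtain the theorem as the $L=\entry$ instance; the borrow/drop pair then fits because the continuation's context already carries the variable $x_\0$ that the drop will check. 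A secondary, more mechanical annoyance is making $\#T$, the zero-padding cells in the variant case of $\readout_\cH$, and the $\mu$-unfolding inside $\readout_\cH$ line up so that the Step~1 bisimulation is genuinely preserved by \emph{every} instruction, including variant construction, pair construction/destruction, and the type-weakening instruction $x \as T$.
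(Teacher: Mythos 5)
Your proposal follows essentially the same route as the paper: the paper likewise factors the theorem through an abstract operational semantics with prophecy-style placeholders for write-back values, proves a bisimulation between concrete and abstract semantics (\cref{lemma:bisim-cos-aos}, \cref{theorem:cos-aos-equivalent}), and relates the AOS to \(\cM^\least_{\Parened{\varPi}}\) via an SLD-style resolution characterization of the least model (\cref{lemma:sldc-complete}, \cref{lemma:bisim-aos-chc}, \cref{theorem:aos-chc-equivalent}). The only notable difference is cosmetic: where you eagerly guess \(v_\0\) and check equality at drop time, the paper keeps the future value as a symbolic abstract variable and substitutes it throughout the configuration when the reference is released (and splits it on subdivision of a reference), which avoids the look-ahead in the forward simulation but yields the same terminating input--output relation.
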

\begin{proof}
The details are presented in \shortorfull{the full paper}{\cref{index:appx-proof}}.
We outline the proof below.

First, we introduce \emph{abstract operational semantics}\shortorfull{}{ (\cref{index:appx-proof-aos})}, where we get rid of heaps and directly represent each variable in the program simply as a value with \emph{abstract variables}, which is strongly related to \emph{prophecy variables} (see \cref{index:related}).
An abstract variable represents the undetermined value of a mutable reference at the end of borrow.

Next, we introduce \emph{SLDC resolution}\shortorfull{}{ (\cref{index:appx-proof-sldc})} for CHC systems
and find a \emph{bisimulation} between abstract operational semantics and SLDC resolution\shortorfull{}{ (\cref{lemma:bisim-aos-chc})},
whereby we show that the \emph{AOS-based model}, defined analogously to the COS-based model, is \emph{equivalent} to the least model of the CHC representation\shortorfull{}{ (\cref{theorem:aos-chc-equivalent})}.
Moreover, we find a \emph{bisimulation} between concrete and abstract operational semantics\shortorfull{}{ (\cref{lemma:bisim-cos-aos})}
and prove that the COS-based model is \emph{equivalent} to the AOS-based model\shortorfull{}{ (\cref{theorem:cos-aos-equivalent})}.

Finally, combining the equivalences\shortorfull{}{ of \cref{theorem:aos-chc-equivalent} and \cref{theorem:cos-aos-equivalent}}, we achieve the proof for the correctness of the CHC representation.
\qed\end{proof}

Interestingly, as by-products of the proof, we have also shown the \emph{soundness of the type system} in terms of preservation and progression, in both concrete and abstract operational semantics.
\shortorfull{}{See \cref{index:appx-proof-aos-safe} and \cref{index:appx-proof-aos-chc} for details.}
Simplification and generalization of the proofs is left for future work.

\subsection{Advanced Examples}
\label{index:chc-examples}

We give advanced examples of pointer-manipulating Rust programs and their CHC representations.
For readability, we write programs in Rust (with ghost annotations) instead of COR.
In addition, CHCs are written in an informal style like \cref{index:intro}, preferring equalities to pattern matching.

\begin{example}\label{example:linger-dec}
Consider the following program, a variant of \rusti{just_rec} in \cref{index:intro-challenges}.
\begin{rust}
fn choose<'a>(ma: &'a mut i32, mb: &'a mut i32) -> &'a mut i32 {
  if rand() { @(drop mb;)@ ma } else { @(drop ma;)@ mb }
}
fn linger_dec<'a>(ma: &'a mut i32) -> bool {
  *ma -= 1; if rand() { @(drop ma;)@ return true; }
  let mut b = rand(); let old_b = b; @(intro 'b;)@ let mb = &@('b)@ mut b;
  let r2 = linger_dec@(<'b>)@(choose@(<'b>)@(ma, mb)); @(now 'b;)@
  r2 && old_b >= b
}
\end{rust}
Unlike \rusti{just_rec}, the function \rusti{linger_dec} can modify the local variable of an arbitrarily deep ancestor.
Interestingly, each recursive call to \rusti{linger_dec} can introduce a new lifetime \rusti{@('b)@}, which yields arbitrarily many layers of lifetimes.

Suppose we wish to verify that \rusti{linger_dec} never returns \rusti{false}.
If we use, like \(\JustRecR\) in \cref{index:intro-challenges}, a predicate taking the memory states \(h,h'\) and the stack pointer \(\var{sp}\), we have to discover the quantified invariant: \(\forall\,i \le \var{sp}.\, \select{h}{i} \ge \select{h'}{i}\).
In contrast, our approach reduces this verification problem to the following CHCs:
\begingroup\small
\begin{align*}
  & \Choose(\angled{a,a_\0},\angled{b,b_\0},r) \impliedby b_\0 = b \land r = \angled{a,a_\0} \br[-.2em]
  & \Choose(\angled{a,a_\0},\angled{b,b_\0},r) \impliedby a_\0 = a \land r = \angled{b,b_\0} \br[-.2em]
  & \LingerDec(\angled{a,a_\0}, r) \impliedby a' = a - 1 \land a_\0 = a' \land r = \true \br[-.2em]
  & \begin{aligned}
      \LingerDec(\angled{a,a_\0}, r) \impliedby & a' = a - 1 \land \var{oldb} = b \land \Choose(\angled{a',a_\0}, \angled{b,b_\0}, \var{mc}) \\[-.4em]
      & \land \LingerDec(\nvar{mc},r') \land r = (r' \band \var{oldb} \bge b_\0)
    \end{aligned} \br[-.2em]
  & r = \true \impliedby \LingerDec(\angled{a,a_\0}, r).
\end{align*}
\endgroup
This can be solved by many solvers since it has a very simple model:
\begin{align*}
  \Choose(\angled{a,a_\0},\angled{b,b_\0},r) &\defiff (b_\0 = b \land r = \angled{a,a_\0}) \vee (a_\0 = a \land r = \angled{b,b_\0}) \\[-.1em]
  \LingerDec(\angled{a,a_\0},r) &\defiff r = \true \land a \ge a_\0.
\end{align*}
\end{example}

\begin{example}
Combined with \emph{recursive data structures}, our method turns out to be more interesting.
Let us consider the following Rust code:\footnote{%
  In COR, \rusti{List} can be expressed as \(\mu X. \Int \times \own X + \unit\).
}
\begin{rust}
enum List { Cons(i32, Box<List>), Nil } use List::*;
fn take_some<'a>(mxs: &'a mut List) -> &'a mut i32 {
  match mxs {
    Cons(mx, mxs2) => if rand() { @(drop mxs2;)@ mx }
                           else { @(drop mx;)@ take_some@(<'a>)@(mxs2) }
    Nil => { take_some(mxs) }
  }
}
fn sum(xs: &List) -> i32 {
  match xs { Cons(x, xs2) => x + sum(xs2), Nil => 0 }
}
fn inc_some(mut xs: List) -> bool {
  let n = sum(&xs); @(intro 'a;)@ let my = take_some@(<'a>)@(&@('a)@ mut xs);
  *my += 1; @(drop my;)@ @(now 'a;)@ let m = sum(&xs); m == n + 1
}
\end{rust}
This is a program that manipulates singly linked integer lists, defined as a recursive data type.
\rusti{take_some} takes a mutable reference to a list and returns a mutable reference to some element of the list.
\rusti{sum} calculates the sum of the elements of a list.
\rusti{inc_some} increments some element of a list via a mutable reference and checks that the sum of the elements of the list has increased by \rusti{1}.

Suppose we wish to verify that \rusti{inc_some} never returns \rusti{false}.
Our method translates this verification problem into the following CHCs.\footnote{%
  \(\cons{x}{xs}\) is the cons made of the head \(x\) and the tail \(\var{xs}\).
  \(\nil\) is the nil.
  In our formal logic, they are expressed as \(\inj_0(x,\angled{xs})\) and \(\inj_1 ()\).
}
\begingroup\small
\begin{align*}
  & \TakeSome(\angled{\cons{x}{\var{xs'}}, \var{xs_\0}}, r) \impliedby \var{xs_\0} = \cons{x_\0}{\var{xs'_\0}} \land \var{xs'_\0} = \var{xs'} \land r = \angled{x, x_\0} \br[-.2em]
  & \TakeSome(\angled{\cons{x}{\var{xs'}}, \var{xs_\0}}, r) \impliedby \var{xs_\0} = \cons{x_\0}{\var{xs'_\0}} \land x_\0 = x \land \TakeSome(\angled{\var{xs'}, \var{xs'_\0}}, r) \br[-.2em]
  & \TakeSome(\angled{\nil, \var{xs_\0}}, r) \impliedby \TakeSome(\angled{\nil, \var{xs_\0}}, r) \br[-.0em]
  & \Sum(\angled{\cons{x}{\var{xs'}}}, r) \impliedby \Sum(\angled{\var{xs'}}, r') \land r = x + r' \br[-.2em]
  & \Sum(\angled{\nil}, r) \impliedby r = 0 \br[-.0em]
  & \begin{aligned}
      \IncSome(\var{xs}, r) \impliedby\ & \Sum(\angled{\var{xs}}, n) \land \TakeSome(\angled{\var{xs}, \nvar{xs_\0}}, \angled{y, y_\0}) \land y_\0 = y + 1 \\[-.4em]
      & \land \Sum(\angled{\var{xs}_\0}, m) \land r = (m \beq n \!+\! 1)
    \end{aligned} \br[-.2em]
  & r = \true \impliedby \IncSome(\var{xs}, r)
\end{align*}
\endgroup
A crucial technique used here is \emph{subdivision of a mutable reference}, which is achieved with the constraint \(\var{xs_\0} = \cons{x_\0}{\var{xs'_\0}}\).

We can give this CHC system a very simple model, using an auxiliary function \(\sumf\) (satisfying \(\sumf(\cons{x}{\var{xs'}}) \defeq x + \sumf(\nvar{xs'}),\ \sumf(\nil) \defeq 0\)):
\begin{align*}
  \TakeSome(\angled{\var{xs}, \var{xs_\0}}, \angled{y, y_\0}) &\defiff y_\0 - y = \sumf(\nvar{xs_\0}) - \sumf(\nvar{xs}) \\[-.2em]
  \Sum(\angled{\var{xs}}, r) &\defiff r = \sumf(\nvar{xs}) \\[-.2em]
  \IncSome(\nvar{xs}, r) &\defiff r = \true.
\end{align*}
Although the model relies on the function \(\sumf\), the validity of the model can be checked without induction on \(\sumf\)
(i.e. we can check the validity of each CHC just by properly unfolding the definition of \(\sumf\) a few times).

The example can be \emph{fully automatically and promptly} verified
by our approach using HoIce \cite{HoIce,HoIceMore} as the back-end CHC solver;
see \cref{index:expt}.
\end{example}

\subsection{Discussions}
\label{index:chc-discuss}

We discuss here how our idea can be extended and enhanced.

\Paragraph{Applying Various Verification Techniques}

Our idea can also be expressed as a translation of a pointer-manipulating Rust program into a program of a \emph{stateless functional programming language}, which allows us to use \emph{various verification techniques} not limited to CHCs.
Access to future information can be modeled using \emph{non-determinism}.
To express the value \(a_\0\) coming at the end of mutable borrow in CHCs, we just \emph{randomly guess} the value with non-determinism.
At the time we actually release a mutable reference, we just \emph{check} \ocamli{a' = a} and cut off execution branches that do not pass the check.

For example, \rusti{take_max}/\rusti{inc_max} in \cref{index:intro-ours}/\cref{example:cor-program} can be translated into the following OCaml program.
\begin{ocaml}
let rec assume b = if b then () else assume b
let take_max (a, a') (b, b') =
  if a >= b then (assume (b' = b); (a, a'))
            else (assume (a' = a); (b, b'))
let inc_max a b =
  let a' = Random.int(0) in let b' = Random.int(0) in
  let (c, c') = take_max (a, a') (b, b') in
  assume (c' = c + 1); not (a' = b')
let main a b = assert (inc_max a b)
\end{ocaml}
`\ocamli{let a' = Random.int(0)}' expresses a \emph{random guess} and `\ocamli{assume (a' = a)}' expresses a \emph{check}.
The original problem ``Does \cppi{inc_max} never return \cppi{false}?'' is reduced to the problem ``Does \ocamli{main} never fail at assertion?'' on the OCaml program.\footnote{%
  MoCHi \cite{MoCHiCEGAR}, a higher-order model checker for OCaml, successfully verified the safety property for the OCaml representation above.
  It also successfully and instantly verified a similar representation of \cppi{choose}/\cppi{linger_dec} at \cref{example:linger-dec}.
}

This representation allows us to use various verification techniques, including model checking (higher-order, temporal, bounded, etc.), semi-automated verification (e.g. on Boogie \cite{Boogie}) and verification on proof assistants (e.g. Coq \cite{Coq}).
The property to be verified can be not only partial correctness, but also total correctness and liveness.
Further investigation is left for future work.

\Paragraph{Verifying Higher-order Programs}

We have to care about the following points in modeling closures:
\textbf{(i)} A closure that encloses mutable references can be encoded as a pair of the main function and the `drop function' called when the closure is released;
\textbf{(ii)} A closure that updates enclosed data can be encoded as a function that returns, with the main return value, the updated version of the closure;
\textbf{(iii)} A closure that updates external data through enclosed mutable references can also be modeled by combination of (i) and (ii).
Further investigation on verification of higher-order Rust programs is left for future work.

\Paragraph{Libraries with Unsafe Code}

Our translation does not use lifetime information;
the correctness of our method is guaranteed by the nature of borrow.
Whereas lifetimes are used for \emph{static check} of the borrow discipline,
many libraries in Rust (e.g. \rusti{RefCell}) provide a mechanism for \emph{dynamic ownership check}.

We believe that such libraries with \emph{unsafe code} can be verified for our method by a separation logic such as Iris \cite{Iris,IrisGroundUp}, as RustBelt \cite{RustBelt} does.
A good news is that Iris has recently incorporated \emph{prophecy variables} \cite{FutureOurs}, which seems to fit well with our approach.
This is an interesting topic for future work.

After the libraries are verified, we can turn to our method.
For an easy example, \rusti{Vec} \cite{Vec} can be represented simply as a functional array; a mutable/immutable slice \rusti{&mut[T]/&[T]} can be represented as an array of mutable/immutable references.
For another example, to deal with \rusti{RefCell} \cite{RefCell}, we pass around an \emph{array} that maps a \rusti{RefCell<T>} address to data of type \rusti{T} equipped with an ownership counter;
\rusti{RefCell} itself is modeled simply as an address.\footnote{%
  To borrow a mutable/immutable reference from \rusti{RefCell}, we check and update the counter and take out the data from the array.
}\footnote{%
  In Rust, we can use \rusti{RefCell} to naturally encode data types with circular references (e.g. doubly-linked lists).
}
Importantly, \emph{at the very time we take a mutable reference \(\angled{a,a_\0}\) from a ref-cell, the data at the array should be updated into \(a_\0\)}.
Using methods such as pointer analysis \cite{PointerAnalysis}, we can possibly shrink the array.

Still, our method does not go quite well with \emph{memory leaks} \cite{MemoryLeak} caused for example by combination of \rusti{RefCell} and \rusti{Rc} \cite{Rc}, because they obfuscate the ownership release of mutable references.
We think that use of \rusti{Rc} etc. should rather be restricted for smooth verification.
Further investigation is needed.

\section{Implementation and Evaluation}
\label{index:expt}

We report on the implementation of our verification tool and the preliminary experiments conducted with small benchmarks to confirm the effectiveness of our approach.

\subsection{Implementation of RustHorn}
\label{index:expt-impl}

We implemented a prototype verification tool \emph{RustHorn} (available at \url{https://github.com/hopv/rust-horn}) based on the ideas described above.
The tool supports basic features of Rust supported in COR, including recursions and recursive types especially.

The implementation translates the MIR (Mid-level Intermediate Representation) \cite{MirBlog,MirRustc} of a Rust program into CHCs quite straightforwardly.\footnote{%
  In order to use the MIR, RustHorn's implementation depends on the unstable nightly version of the Rust compiler, which causes a slight portability issue.
}
Thanks to the nature of the translation, RustHorn can just rely on Rust's borrow check and forget about lifetimes.
For efficiency, the predicate variables are constructed by the granularity of the vertices in the control-flow graph in MIR, unlike the per-label construction of \cref{index:chc-body}.
Also, assertions in functions are taken into account unlike the formalization in \cref{index:chc-body}.

\subsection{Benchmarks and Experiments}
\label{index:expt-how}

To measure the performance of RustHorn and the existing CHC-based verifier SeaHorn \cite{SeaHorn}, we conducted preliminary experiments with benchmarks listed in \cref{table:expt}.
Each benchmark program is designed so that the Rust and C versions match.
Each benchmark instance consists of either one program or a pair of safe and unsafe programs that are very similar to each other.
The benchmarks and experimental results are accessible at \url{https://github.com/hopv/rust-horn}.

The benchmarks in the groups \bench{simple} and \bench{bmc} were taken from SeaHorn (\url{https://github.com/seahorn/seahorn/tree/master/test}), with the Rust versions written by us.
They have been chosen based on the following criteria: they
(i) consist of only features supported by core Rust,
(ii) follow Rust's ownership discipline, and (iii) are small enough to be amenable for
manual translation from C to Rust.

The remaining six benchmark groups are built by us and consist of programs featuring mutable references.
The groups \bench{inc-max}, \bench{just-rec} and \bench{linger-dec} are based on the examples that have appeared in \cref{index:intro} and \cref{index:chc-examples}.
The group \bench{swap-dec} consists of programs that perform repeated involved updates via mutable references to mutable references.
The groups \bench{lists} and \bench{trees} feature destructive updates on recursive data structures (lists and trees) via mutable references, with one interesting program of it explained in \cref{index:chc-examples}.

We conducted experiments on a commodity laptop (2.6GHz Intel Core i7 MacBook Pro with 16GB RAM).
First we translated each benchmark program by RustHorn and SeaHorn (version 0.1.0-rc3) \cite{SeaHorn} translate into CHCs in the SMT-LIB 2 format.
Both RustHorn and SeaHorn generated CHCs sufficiently fast (about 0.1 second for each program).
After that, we measured the time of CHC solving by Spacer \cite{Spacer} in Z3 (version 4.8.7) \cite{Z3} and HoIce (version 1.8.1) \cite{HoIce,HoIceMore} for the generated CHCs.
SeaHorn's outputs were not accepted by HoIce, especially because SeaHorn generates CHCs with arrays.
We also made modified versions for some of SeaHorn's CHC outputs, adding constraints on address freshness, to improve accuracy of representations and reduce false alarms.\footnote{%
  For \bench{base/3} and \bench{repeat/3} of \bench{inc-max}, the address-taking parts were already removed, probably by inaccurate pointer analysis.
}

\subsection{Experimental Results}
\label{index:expt-results}

\begin{table}
  \centering
  \scalebox{0.9}{\begin{tabular}{ccc|cc|cc}
    &&& \multicolumn{2}{c}{RustHorn} & \multicolumn{2}{c}{SeaHorn \emph{w/Spacer}} \\[-.25em]
    \emph{Group} & \emph{Instance} & \emph{Property} & \emph{w/Spacer} & \emph{w/HoIce} & \emph{as is} & \emph{modified} \\[-.15em]\hline\hline
    \group[4.8]{simple} & \instance{01} & safe & \bit & \bit & \bit \\[-.25em]
    & \instance{04-recursive} & safe & 0.5 & \TO & 0.8 \\[-.25em]
    & \instance{05-recursive} & unsafe & \bit & \bit & \bit \\[-.25em]
    & \instance{06-loop} & safe & \TO & 0.1 & \TO \\[-.25em]
    & \instance{hhk2008} & safe & \TO & 40.5 & \bit \\[-.25em]
    & \instance{unique-scalar} & unsafe & \bit & \bit & \bit \\[-.15em]\hline
    \group[8.2]{bmc} & \instance[1.8]{1} & safe & 0.2 & \bit & \bit \\[-.25em]
    && unsafe & 0.2 & \bit & \bit \\[-.25em]
    & \instance[1.8]{2} & safe & \TO & 0.1 & \bit \\[-.25em]
    && unsafe & \bit & \bit & \bit \\[-.25em]
    & \instance[1.8]{3} & safe & \bit & \bit & \bit \\[-.25em]
    && unsafe & \bit & \bit & \bit \\[-.25em]
    & \instance[1.8]{diamond-1} & safe & 0.1 & \bit & \bit \\[-.25em]
    && unsafe & \bit & \bit & \bit \\[-.25em]
    & \instance[1.8]{diamond-2} & safe & 0.2 & \bit & \bit \\[-.25em]
    && unsafe & \bit & \bit & \bit \\[-.15em]\hline
    \group[6.7]{inc-max} & \instance[1.8]{base} & safe & \bit & \bit & \FA & \bit \\[-.25em]
    && unsafe & \bit & \bit & \bit & \bit \\[-.25em]
    & \instance[1.8]{base/3} & safe & \bit & \bit & \FA \\[-.25em]
    && unsafe & 0.1 & \bit & \bit \\[-.25em]
    & \instance[1.8]{repeat} & safe & 0.1 & \TO & \FA & 0.1 \\[-.25em]
    && unsafe & \bit & 0.4 & \bit & \bit \\[-.25em]
    & \instance[1.8]{repeat/3} & safe & 0.2 & \TO & \bit \\[-.25em]
    && unsafe & \bit & 1.3 & \bit \\[-.15em]\hline
    \group[6.7]{swap-dec} & \instance[1.8]{base} & safe & \bit & \bit & \FA & \bit \\[-.25em]
    && unsafe & 0.1 & \TO & \bit & \bit \\[-.25em]
    & \instance[1.8]{base/3} & safe & 0.2 & \TO & \FA & \bit \\[-.25em]
    && unsafe & 0.4 & 0.9 & \bit & 0.1 \\[-.25em]
    & \instance[1.8]{exact} & safe & 0.1 & 0.5 & \FA & \TO \\[-.25em]
    && unsafe & \bit & 26.0 & \bit & \bit \\[-.25em]
    & \instance[1.8]{exact/3} & safe & \TO & \TO & \FA & \FA \\[-.25em]
    && unsafe & \bit & 0.4 & \bit & \bit \\[-.15em]\hline
    \group[1.8]{just-rec} & \instance[1.8]{base} & safe & \bit & \bit & \bit \\[-.25em]
    && unsafe & \bit & 0.1 & \bit \\[-.15em] \hline
    \group[6.7]{linger-dec} & \instance[1.8]{base} & safe & \bit & \bit & \FA \\[-.25em]
    && unsafe & \bit & 0.1 & \bit \\[-.25em]
    & \instance[1.8]{base/3} & safe & \bit & \bit & \FA \\[-.25em]
    && unsafe & \bit & 7.0 & \bit \\[-.25em]
    & \instance[1.8]{exact} & safe & \bit & \bit & \FA \\[-.25em]
    && unsafe & \bit & 0.2 & \bit \\[-.25em]
    & \instance[1.8]{exact/3} & safe & \bit & \bit & \FA \\[-.25em]
    && unsafe & \bit & 0.6 & \bit \\[-.15em] \hline
    \group[6.7]{lists} & \instance[1.8]{append} & safe & \TE & \bit & \FA \\[-.25em]
    && unsafe & \TE & 0.2 & 0.1 \\[-.25em]
    & \instance[1.8]{inc-all} & safe & \TE & \bit & \FA \\[-.25em]
    && unsafe & \TE & 0.3 & \bit \\[-.25em]
    & \instance[1.8]{inc-some} & safe & \TE & \bit & \FA \\[-.25em]
    && unsafe & \TE & 0.3 & 0.1 \\[-.25em]
    & \instance[1.8]{inc-some/2} & safe & \TE & \TO & \FA \\[-.25em]
    && unsafe & \TE & 0.3 & 0.4 \\[-.15em] \hline
    \group[6.7]{trees} & \instance[1.8]{append-t} & safe & \TE & \bit & \TO \\[-.25em]
    && unsafe & \TE & 0.3 & 0.1 \\[-.25em]
    & \instance[1.8]{inc-all-t} & safe & \TE & \TO & \TO \\[-.25em]
    && unsafe & \TE & 0.1 & \bit \\[-.25em]
    & \instance[1.8]{inc-some-t} & safe & \TE & \TO & \TO \\[-.25em]
    && unsafe & \TE & 0.3 & 0.1 \\[-.25em]
    & \instance[1.8]{inc-some/2-t} & safe & \TE & \TO & \FA \\[-.25em]
    && unsafe & \TE & 0.4 & 0.1 \\[-.15em] \hline
  \end{tabular}}
  \vspace{.8em}
  \caption{
    Benchmarks and experimental results on RustHorn and SeaHorn, with Spacer/Z3 and HoIce.
    ``\TO'' denotes timeout of 180 seconds;
    ``\FA'' means reporting `unsafe' for a safe program;
    ``\TE'' is a tool error of Spacer, which currently does not deal with recursive types well.
  }
  \label{table:expt}
\end{table}

\Cref{table:expt} shows the results of the experiments.

Interestingly, the combination of RustHorn and HoIce succeeded in verifying many programs with recursive data types (\bench{lists} and \bench{trees}), although it failed at difficult programs.\footnote{%
  For example, \bench{inc-some/2} takes two mutable references in a list and increments on them;
  \bench{inc-all-t} destructively increments all elements in a tree.
}
HoIce, unlike Spacer, can find models defined with primitive recursive functions for recursive data types.\footnote{%
  We used the latest version of HoIce, whose algorithm for recursive types is presented in the full paper of \cite{HoIceMore}.
}

False alarms of SeaHorn for the last six groups are mainly due to problematic approximation of SeaHorn for pointers and heap memories, as discussed in \cref{index:intro-challenges}.
On the modified CHC outputs of SeaHorn, five false alarms were erased and four of them became successful.
For the last four groups, unboundedly many memory cells can be allocated, which imposes a fundamental challenge for SeaHorn's array-based approach as discussed in \cref{index:intro-challenges}.\footnote{%
  We also tried on Spacer \(\JustRecR\), the stack-pointer-based accurate representation of \cppi{just_rec} presented in \cref{index:intro-challenges}, but we got timeout of 180 seconds.
}
The combination of RustHorn and HoIce took a relatively long time or reported timeout for some programs, including unsafe ones, because HoIce is still an unstable tool compared to Spacer;
in general, automated CHC solving can be rather unstable.

\section{Related Work}
\label{index:related}

\Paragraph{CHC-based Verification of Pointer-Manipulating Programs}

SeaHorn \cite{SeaHorn} is a representative existing tool for CHC-based verification of pointer-manipulating programs.
It basically represents the heap memory as an array.
Although some pointer analyses \cite{SeaHornContextSensitive} are used to optimize the array representation of the heap,
their approach suffers from the scalability problem discussed in \cref{index:intro-challenges}, as confirmed by the experiments in \cref{index:expt}.
Still, their approach is quite effective as automated verification, given that many real-world pointer-manipulating programs do not follow Rust-style ownership.

Another approach is taken by JayHorn \cite{JayHorn,JayHornQuantified}, which translates Java programs (possibly using object pointers) to CHCs.
They represent store invariants using special predicates \textit{pull} and \textit{push}.
Although this allows faster reasoning about the heap than the array-based approach, it can suffer from more false alarms.
We conducted a small experiment for JayHorn (0.6-alpha) on some of the benchmarks of \cref{index:expt-how};
unexpectedly, JayHorn reported `\texttt{UNKNOWN}' (instead of `\texttt{SAFE}' or `\texttt{UNSAFE}') for even simple programs such as the programs of the instance \bench{unique-scalar} in \bench{simple} and the instance \bench{basic} in \bench{inc-max}.

\Paragraph{Verification for Rust}

Whereas we have presented the first CHC-based (fully automated) verification method specially designed for Rust-style ownership, there have been a number of studies on other types of verification for Rust.

RustBelt \cite{RustBelt} aims to formally prove high-level safety properties for Rust libraries with unsafe internal implementation, using manual reasoning on the higher-order concurrent separation logic Iris \cite{Iris,IrisGroundUp} on the Coq Proof Assistant \cite{Coq}.
Although their framework is flexible, the automation of the reasoning on the framework is little discussed.
The language design of our COR is affected by their formal calculus \lambdaRust.

Electrolysis \cite{Electrolysis} translates some subset of Rust into a purely functional programming language to manually verify functional correctness on Lean Theorem Prover \cite{Lean}.
Although it clears out pointers to get simple models like our approach,
Electrolysis' applicable scope is quite limited, because it deals with mutable references by \emph{simple static tracking of addresses based on lenses} \cite{Lens},
not supporting even basic use cases such as dynamic selection of mutable references (e.g. \rusti{take_max} in \cref{index:intro-ours}) \cite{ElectrolysisReference}, which our method can easily handle.
Our approach covers \emph{all} usages of pointers of the safe core of Rust as discussed in \cref{index:chc}.

Some serial studies \cite{Rust2Viper,Prusti,RustGTT} conduct (semi-)automated verification on Rust programs using Viper \cite{Viper}, a verification platform based on separation logic with fractional ownership.
This approach can to some extent deal with unsafe code \cite{Rust2Viper} and type traits \cite{RustGTT}.
Astrauskas et al. \cite{Prusti} conduct semi-automated verification (manually providing pre/post-conditions and loop invariants) on many realistic examples.
Because Viper is based on \emph{fractional ownership}, however, their platforms have to use \emph{concrete indexing on the memory} for programs like \rusti{take_max}/\rusti{inc_max}.
In contrast, our idea leverages \emph{borrow-based ownership}, and it can be applied also to semi-automated verification as suggested in \cref{index:chc-discuss}.

Some researches \cite{Crust,SmackRust,NoPanic} employ bounded model checking on Rust programs, especially with unsafe code.
Our method can be applied to bounded model checking as discussed in \cref{index:chc-discuss}.

\Paragraph{Verification using Ownership}

Ownership has been applied to a wide range of verification.
It has been used for detecting race conditions on concurrent programs \cite{OwnershipRaceDeadlock,RaceLinear}
and analyzing the safety of memory allocation \cite{FractionalOwnership}.
Separation logic based on ownership is also studied well \cite{PermissionSeparationLogic,Viper,Iris}.
Some verification platforms \cite{VCC,SpecSharp,Why3Region} support simple ownership.
However, most prior studies on ownership-based verification are based on fractional or counting ownership.
Verification under \emph{borrow-based ownership} like Rust was little studied before our work.

\Paragraph{Prophecy Variables}

Our idea of taking a future value to represent a mutable reference is linked to the notion of \emph{prophecy variables} \cite{RefinementMappings,FineGrained,FutureOurs}.
Jung et al. \cite{FutureOurs} propose a new Hoare-style logic with prophecy variables.
In their logic, prophecy variables are not copyable, which is analogous to uncopyability of mutable references in Rust.
This logic can probably be used for generalizing our idea as suggested in \cref{index:chc-discuss}.

\section{Conclusion}
\label{index:concl}

We have proposed a novel method for CHC-based program verification,
which represents a mutable reference as a pair of values, the current value and the future value at the time of release.
We have formalized the method for a core language of Rust and proved its correctness.
We have implemented a prototype verification tool for a subset of Rust and confirmed the effectiveness of our approach.
We believe that this study establishes the foundation of verification leveraging borrow-based ownership.

\subsubsection{Acknowledgments}

This work was supported by JSPS KAKENHI Grant Number JP15H05706 and JP16K16004.
We are grateful to the anonymous reviewers for insightful comments.

\bibliographystyle{splncs04}
\bibliography{biblio}
\vfill

{\small\medskip\noindent{\bf Open Access} This chapter is licensed under the terms of the Creative Commons\break Attribution 4.0 International License (\url{http://creativecommons.org/licenses/by/4.0/}), which permits use, sharing, adaptation, distribution and reproduction in any medium or format, as long as you give appropriate credit to the original author(s) and the source, provide a link to the Creative Commons license and indicate if changes were made.}

{\small \spaceskip .28em plus .1em minus .1em The images or other third party material in this chapter are included in the chapter's Creative Commons license, unless indicated otherwise in a credit line to the material.~If material is not included in the chapter's Creative Commons license and your intended\break use is not permitted by statutory regulation or exceeds the permitted use, you will need to obtain permission directly from the copyright holder.}

\medskip\noindent\includegraphics{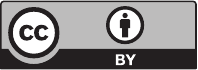}

\iffull
\appendix
\section{Complementary Definitions on COR}

\subsection{Complete Typing Rules for Instructions}
\label{index:appx-cor-typing-instructions}

The following is the complete rules for the typing judgment on instructions \(I\colond{\varPi,f} (\cGamma,\cA) \to (\cGamma',\cA')\).
The variables on the right-hand side of one instruction should be mutually distinct.
The rules for subtyping \(T \le_\cA U\) are explained later.
\begingroup\small
\begin{gather*}
  \frac{
    \alpha \notin A_{\ex\, \varPi,f} \quad
    P = \own, \mut_\beta \quad
    \text{for any}\ \gamma \in \Lifetime_{P\, T},\
    \alpha \le_\cA \gamma
  }{
    \Let y = \mutbor_\alpha x \, \colond{\varPi,f} (\cGamma \!+\! \{x\colon P\, T\},\, \cA) \to (\cGamma \!+\! \{y\colon \mut_\alpha T,\, x\colonu{\dagger\alpha} P\, T\},\, \cA)
  } \br[.3em]
  \frac{
    \text{if \(T\) is of form \(\own U\), every \(\own\) and \(\mut_\alpha\) in \(U\) is guarded by some \(\immut_\beta\)}
  }{
    \drop x \, \colond{\varPi,f} (\cGamma \!+\! \{x\colon T\},\, \cA) \to (\cGamma,\cA)
  } \br[.3em]
  \immut x \, \colond{\varPi,f} (\cGamma \!+\! \{x\colon \mut_\alpha T\},\, \cA) \to (\cGamma \!+\! \{x\colon \immut_\alpha T\},\, \cA) \br[.3em]
  \frac{
    x\colon \mut_\alpha T,\,y\colon P\, T \in \cGamma \quad
    P = \own,\mut_\beta
  }{
    \swap(*x,*y) \, \colond{\varPi,f} (\cGamma,\cA) \to (\cGamma,\cA)
  } \br[.3em]
  \Let *y = x \, \colond{\varPi,f} (\cGamma \!+\! \{x\colon T\},\, \cA) \to (\cGamma \!+\! \{y\colon \own T\},\, \cA) \br[.3em]
  \Let y = *x \, \colond{\varPi,f} (\cGamma \!+\! \{x\colon P\,P'\,T\},\, \cA) \to (\cGamma \!+\! \{y\colon (P \!\circ\! P')\,T\},\, \cA) \br[-.2em]
  \dbox{\(
    P \circ {\own} = {\own} \circ P \defeq P \quad
    R_\alpha \circ R'_\beta \defeq R''_\alpha\
    \text{where}\ R'' = \begin{cases}
      \mut &\!\! (R = R' = \mut) \\[-.3em]
      \immut &\!\! (\text{otherwise})
    \end{cases}
  \)} \br[.6em]
  \frac{
    x\colon P\, T \in \cGamma \quad T\colon {\Copy}
  }{
    \Let *y = \Copy *x \, \colond{\varPi,f} (\cGamma,\cA) \to (\cGamma \!+\! \{y\colon \own T\},\, \cA)
  } \br[-.1em]
  \dbox{\(\displaystyle
    \Int\colon {\Copy} \quad
    \unit\colon {\Copy} \quad
    \immut_\alpha T\colon {\Copy} \quad
    \frac{T\colon {\Copy}}{\mu X.T\colon {\Copy}} \quad
    \frac{T_0, T_1\, \colon {\Copy}}{T_0 \!+\! T_1\, \colon {\Copy}} \quad
    \frac{T_0, T_1\, \colon {\Copy}}{T_0 \!\times\! T_1\, \colon {\Copy}}
  \)} \br[.6em]
  \frac{
    T \le_\cA U
  }{
    x \as U \, \colond{\varPi,f} (\cGamma \!+\! \{x\colon T\},\, \cA) \to (\cGamma \!+\! \{x\colon U\},\, \cA)
  } \br[.3em]
  \frac{\begin{gathered}
    \varSigma_{\varPi,g} = \angled{\alpha'_0,\dots,\alpha'_{m-1} \mid \alpha'_{a_0} \le \alpha'_{b_0}, \dots, \alpha'_{a_{l-1}} \le \alpha'_{b_{l-1}}} (x'_0\colon T'_0, \dots, x'_{n-1}\colon T'_{n-1}) \to T'_n \\[-.3em]
    \text{for any}\ j \in [l],\
    \alpha_{a_j} \le_\cA \alpha_{b_j} \quad
    \text{for any}\ i \in [n \!+\! 1],\
    T_i = T'_i[\alpha_0/\alpha'_0,\dots,\alpha_{m-1}/\alpha'_{m-1}]
  \end{gathered}}{
    \Let y = g\angled{\alpha_0,\dots,\alpha_{m-1}}(x_0,\dots,x_{n-1}) \, \colond{\varPi,f} (\cGamma \!+\! \{x_i\colon T_i \mid i \in [n]\},\, \cA) \to (\cGamma \!+\! \{y\colon T_n\},\, \cA)
  } \br[-.0em]
  \text{
    \(\varSigma_{\varPi,f}\): the function signature of the function \(f\) in \(\varPi\)
  } \br[.3em]
  \intro \alpha \, \colond{\varPi,f} \bigl(\cGamma,(A,R)\bigr) \to \bigl(\cGamma,(\{\alpha\} \!+\! A,\, \{\alpha\} \!\times\! (\{\alpha\} \!+\! A_{\ex\, \varPi,f}) \!+\! R)\bigr) \br[.3em]
  \frac{
    \alpha \notin A_{\ex\, \varPi,f}
  }{
    \now \alpha \, \colond{\varPi,f} \bigl(\cGamma,(\{\alpha\} \!+\! A,\,R)\bigr)
    \to \bigl(\{\thaw_\alpha(x\colonu{\ac} T) \mid x\colonu{\ac} T \!\in\! \cGamma\},\,(A,\{(\beta,\gamma) \!\in\! R \mid \beta \!\ne\! \alpha\})\bigr)
  } \br[-.1em]
  \thaw_\alpha(x\colonu\ac T) \defeq \begin{cases}
    x\colon T &\!\!\! (\ac = \dagger\alpha) \\[-.3em]
    x\colonu\ac T &\!\!\! (\text{otherwise})
  \end{cases} \br[.5em]
  \frac{
    \alpha, \beta \notin A_{\ex\, \varPi,f}
  }{
    \alpha \le \beta \, \colond{\varPi,f} \bigl(\cGamma,(A,R)\bigr) \to \bigl(\cGamma,\,(A,\,(\{(\alpha,\beta)\} \cup R)^+)\bigr)
  } \br[.3em]
  \Let *y = \const \, \colond{\varPi,f} (\cGamma,\cA) \to (\cGamma \!+\! \{y\colon \own T_\const\},\, \cA) \br[-.1em]
  \text{
    \(T_\const\): the type of \(\const\) (\(\Int\) or \(\unit\))
  } \br[.3em]
  \frac{
    x\colon P\, \Int,\,x'\colon P'\,\Int \in \cGamma
  }{
    \Let *y = *x \op *x' \, \colond{\varPi,f} (\cGamma,\cA) \to (\cGamma \!+\! \{y\colon \own T_{\op}\},\, \cA)
  } \br[-.1em]
  \text{
    \(T_{\op}\): the output type of \(\op\) (\(\Int\) or \(\bool\))
  } \br[.3em]
  \Let *y = \rand() \, \colond{\varPi,f} (\cGamma,\cA) \to (\cGamma \!+\! \{y\colon \own \Int\},\, \cA) \br[.2em]
  \Let *y = \inj^{T_0 \!+\! T_1}_i *x \, \colond{\varPi,f} (\cGamma \!+\! \{x\colon \own T_i\},\, \cA) \to (\cGamma \!+\! \{y\colon \own\, (T_0 \!+\! T_1)\},\, \cA) \br[.3em]
  \Let *y = (*x_0,*x_1) \, \colond{\varPi,f} (\cGamma \!+\! \{x_0\colon \own T_0,\ x_1\colon \own T_1\},\, \cA) \to (\cGamma \!+\! \{y\colon \own\, (T_0 \!\times\! T_1)\},\, \cA) \br[.3em]
  \Let\, (*y_0,*y_1) = *x \, \colond{\varPi,f} (\cGamma \!+\! \{x\colon P\,(T_0 \!\times\! T_1)\},\, \cA) \to (\cGamma \!+\! \{y_0\colon P\, T_0,\ y_1\colon P\, T_1\},\, \cA)
\end{gather*}
\endgroup

\Subsubsection{Rule for Drop}

The precondition for the typing rule on \(\drop x\) is just for simplicity on formal definitions.
For concrete operational semantics, a non-guarded \(\own\) within \(\own U\) causes nested releases of memory cells.
For translation to CHCs, a non-guarded \(\mut\) within \(\own U\) would make value checks complicated.

This precondition does not weaken the expressivity, because we can divide pointers by dereference (\(\Let y = *x\)), pair destruction (\(\Let\, (*y_0,*y_1) = *x\)) and variant destruction (\(\match{*x}{{\cdots}}\)) (possibly using loops/recursions, for recursive types).

\Subsubsection{Rule for Swap}

We can omit swap between two owning pointers because it is essentially the same thing with just swapping the names of the pointers.
Note that an active (i.e. not frozen) owning pointer has no other alias at all.

\Subsubsection{Subtyping}

The subtyping judgment \(\Xi \vdash T \!\le_\cA\! U\) is defined as follows.
Here, \(\Xi\) is a set of assumptions of form \(T \!\le\! U\), which is used for subtyping on recursive types.
\(\varnothing \vdash T \!\le_\cA\! U\) can be shortened into \(T \!\le_\cA\! U\).
\begingroup\small
\begin{gather*}
  \frac{T \!\le\! U \in \Xi}{\Xi \vdash T \!\le_\cA\! U} \quad
  \frac{\Xi \vdash T \!\le_\cA\! U}{\Xi \vdash \check P\, T \!\le_\cA\! \check P\,U} \quad
  \frac{\Xi \vdash T \!\le_\cA\! U,\, U \!\le_\cA\! T}{\Xi \vdash \mut_\alpha T \!\le_\cA\! \mut_\alpha U} \quad
  \frac{\Xi \vdash \beta \!\le_\cA\! \alpha}{\Xi \vdash R_\alpha\,T \!\le_\cA\! R_\beta\,T} \br[.1em]
  \frac{\Xi \vdash T_0 \!\le_\cA\! U_0,\, T_1 \!\le_\cA\! U_1}{\Xi \vdash T_0 \!+\! T_1 \!\le_\cA\! U_0 \!+\! U_1} \quad
  \frac{\Xi \vdash T_0 \!\le_\cA\! U_0,\, T_1 \!\le_\cA\! U_1}{\Xi \vdash T_0 \!\times\! T_1 \!\le_\cA\! U_0 \!\times\! U_1} \br[.1em]
  \Xi \vdash \mu X.T \!\le_\cA\! T[\mu X.T/X],\, T[\mu X.T/X] \!\le_\cA\! \mu X.T \br[.1em]
  \frac{\text{\(X',Y'\) are fresh in \(\Xi\)} \quad \Xi + \{X' \!\le\! Y'\} \vdash T[X'/X] \!\le_\cA\! U[Y'/Y]}{\Xi \vdash \mu X.T \!\le_\cA\! \mu Y.U} \br[.1em]
  \frac{\begin{gathered}
    \text{\(X',Y'\) are fresh in \(\Xi\)} \\[-.4em] \Xi + \{X' \!\le\! Y', Y' \!\le\! X'\} \vdash T[X'/X] \!\le_\cA\! U[Y'/Y],\, U[Y'/Y] \!\le_\cA\! T[X'/X]
  \end{gathered}}{\Xi \vdash \mu X.T \!\le_\cA\! \mu Y.U,\, \mu Y.U \!\le_\cA\! \mu X.T} \br[.1em]
  \Xi \vdash T \!\le_\cA\! T \quad
  \frac{\Xi \vdash T \!\le_\cA\! T',\, T' \!\le_\cA\! T''}{\Xi \vdash T \!\le_\cA\! T''}
\end{gather*}
\endgroup

\subsection{Complete Rules and an Example Execution for Concrete Operational Semantics}
\label{index:appx-cor-cos}

The following is the complete rules for the judgments \(\cC \to_\varPi \cC'\) and \(\final_\varPi(\cC)\).
\begingroup\small
\begin{gather*}
  \frac{
    S_{\varPi,f,L} = \Let y = \mutbor_\alpha x;\, \goto L' \quad \cF(x) = a
  }{
    [f,L]\, \cF;\, \cS \mid \cH
    \ \to_\varPi\ [f,L']\, \cF \!+\! \{(y,a)\};\, \cS \mid \cH
  } \br[.3em]
  \frac{
    S_{\varPi,f,L} = \drop x;\, \goto L' \quad
    \Ty_{\varPi,f,L}(x) = \own T
  }{
    [f,L]\, \cF \!+\! \{(x,a)\};\, \cS \mid \cH \!+\! \{(a \!+\! k, n_k) \!\mid\! k \!\in\! [\#T]\}
    \ \to_\varPi\ [f,L']\, \cF;\, \cS \mid \cH
  } \br[.3em]
  \frac{
    S_{\varPi,f,L} = \drop x;\, \goto L' \quad
    \Ty_{\varPi,f,L}(x) = R_\alpha\,T
  }{
    [f,L]\, \cF \!+\! \{(x,a)\};\, \cS \mid \cH
    \ \to_\varPi\ [f,L']\, \cF;\, \cS \mid \cH
  } \br[.3em]
  \frac{
    S_{\varPi,f,L} = \immut x;\, \goto L'
  }{
    [f,L]\, \cF;\, \cS \mid \cH
    \ \to_\varPi\ [f,L']\, \cF;\, \cS \mid \cH
  } \br[.3em]
  \frac{
    S_{\varPi,f,L} = \swap(*x,*y);\, \goto L' \quad
    \Ty_{\varPi,f,L}(x) = P\, T \quad
    \cF(x) = a \quad
    \cF(y) = b
  }{
    \begin{aligned}
      & [f,L]\, \cF;\, \cS \mid \cH \!+\! \{(a \!+\! k, m_k) \!\mid\! k \!\in\! [\#T]\} \!+\! \{(b \!+\! k, n_k) \!\mid\! k \!\in\! [\#T]\} \\[-.3em]
      & \ \to_\varPi\ [f,L']\, \cF;\, \cS \mid \cH \!+\! \{(a \!+\! k, n_k) \!\mid\! k \!\in\! [\#T]\} \!+\! \{(b \!+\! k, m_k) \!\mid\! k \!\in\! [\#T]\}
    \end{aligned}
  } \br[.3em]
  \frac{
    S_{\varPi,f,L} = \Let *y = x;\, \goto L'
  }{
    [f,L]\, \cF \!+\! \{(x,a')\};\, \cS \mid \cH
    \ \to_\varPi\ [f,L']\, \cF \!+\! \{(y,a)\};\, \cS \mid \cH \!+\! \{(a,a')\}
  } \br[.3em]
  \frac{
    S_{\varPi,f,L} = \Let y = *x;\, \goto L' \quad
    \Ty_{\varPi,f,L}(x) = \own P\, T
  }{
    [f,L]\, \cF \!+\! \{(x,a)\};\, \cS \mid \cH \!+\! \{(a,a')\}
    \ \to_\varPi\ [f,L']\, \cF \!+\! \{(y,a')\};\, \cS \mid \cH
  } \br[.3em]
  \frac{
    S_{\varPi,f,L} = \Let y = *x;\, \goto L' \quad
    \Ty_{\varPi,f,L}(x) = R_\alpha\,P\, T \quad
    \cH(a) = a'
  }{
    [f,L]\, \cF \!+\! \{(x,a)\};\, \cS \mid \cH
    \ \to_\varPi\ [f,L']\, \cF \!+\! \{(y,a')\};\, \cS \mid \cH
  } \br[.3em]
  \frac{
    S_{\varPi,f,L} = \Let *y = \Copy *x;\, \goto L' \quad
    \Ty_{\varPi,f,L}(x) = P\, T \quad
    \cF(x) = a
  }{
    [f,L]\, \cF;\, \cS \mid \cH
    \ \to_\varPi\ [f,L']\, \cF \!+\! \{(y,b)\};\, \cS \mid \cH \!+\! \{(b \!+\! k,\cH(a \!+\! k)) \!\mid\! k \!\in\! [\#T]\}
  } \br[.3em]
  \frac{
    S_{\varPi,f,L} = I;\, \goto L' \quad
    I = x \as T,\, \intro \alpha,\, \now \alpha,\, \alpha \le \beta
  }{
    [f,L]\, \cF;\, \cS \mid \cH
    \ \to_\varPi\ [f,L']\, \cF;\, \cS \mid \cH
  } \br[.3em]
  \frac{\begin{gathered}
    S_{\varPi,f,L} = \Let y = g\angled{\cdots}(x_0,\dots,x_{n-1});\, \goto L' \\[-.2em]
    \varSigma_{\varPi,g} = \angled{\cdots}(x'_0\colon T_0, \dots, x'_{n-1}\colon T_{n-1}) \to U
  \end{gathered}}{
    [f,L]\, \cF \!+\! \{(x_i,a_i) \!\mid\! i \!\in\! [n]\};\, \cS \mid \cH \to_\varPi\ [g,\entry]\, \{(x'_i,a_i) \!\mid\! i \!\in\! [n]\};\,
    [f,L]\,y,\cF;\, \cS \mid \cH
  } \br[.3em]
  \frac{
    S_{\varPi,f,L} = \return x
  }{
    [f,L]\, \{(x,a)\}; [g,L']\, x',\cF'; \cS \mid \cH
    \to_\varPi [g,L']\, \cF' \!+\! \{(x',a)\}; \cS \mid \cH
  } \br[.3em]
  \frac{
    S_{\varPi,f,L} = \return x \quad
  }{
    \final_\varPi\bigl([f,L]\, \{(x,a)\} \mid \cH\bigr)
  } \br[.3em]
  \frac{
    S_{\varPi,f,L} = \Let *y = \const;\, \goto L' \quad
    \cH' = \begin{cases}
      \{(a,n)\} &\!\! (\const = n) \\[-.3em]
      \emp &\!\! (\const = ())
    \end{cases}
  }{
    [f,L]\, \cF;\, \cS \mid \cH
    \ \to_\varPi\ [f,L']\, \cF \!+\! \{(y,a)\};\, \cS \mid \cH \!+\! \cH'
  } \br[.3em]
  \frac{
    S_{\varPi,f,L} = \Let *y = *x \op *x';\, \goto L' \quad
    \cF(x) = a \quad
    \cF(x') = a'
  }{
    [f,L]\, \cF;\, \cS \mid \cH
    \ \to_\varPi\ [f,L']\, \cF \!+\! \{(y,b)\};\, \cS \mid \cH \!+\! \{(b,\, \cH(a) \mathop{\angled{\op}} \cH(a'))\}
  } \br[-.0em]
  \text{
    \(\angled{\op}\): \(\op\) as a binary operation on integers, with \(\true\)/\(\false\) encoded as \(1\)/\(0\)
  } \br[.3em]
  \frac{
    S_{\varPi,f,L} = \Let *y = \rand();\, \goto L'
  }{
    [f,L]\, \cF;\, \cS \mid \cH
    \ \to_\varPi\ [f,L']\, \cF \!+\! \{(y,a)\};\, \cS \mid \cH \!+\! \{(a,n)\}
  } \br[.3em]
  \frac{
    S_{\varPi,f,L} = \Let *y = \inj^{T_0 \!+\! T_1}_i *x;\, \goto L' \quad
    \cH_0 = \{(a' \!+\! 1 \!+\! \#T_i \!+\! k,\,0) \mid k \!\in\! [(\#T_{1 \!-\! i} \!-\! \#T_i)_{\ge 0}]\}
  }{\begin{aligned}
    & [f,L]\, \cF \!+\! \{(x,a)\};\, \cS \mid \cH \!+\! \{(a \!+\! k, m_k) \!\mid\! k \!\in\! [\#T_i]\} \\[-.3em]
    & \ \to_\varPi\ [f,L']\, \cF \!+\! \{(y,a')\};\, \cS \mid
    \cH \!+\! \{(a',i)\} \!+\! \{(a' \!+\! 1 \!+\! k, m_k) \!\mid\! k \!\in\! [\#T_i]\} \!+\! \cH_0
  \end{aligned}} \br[.3em]
  \frac{\begin{gathered}
    S_{\varPi,f,L} = \match{*x}{\inj_0 *y_0 \to \goto L'_0,\ \inj_1 *y_1 \to \goto L'_1} \\[-.3em]
    \Ty_{\varPi,f,L}(x) = \own\, (T_0 \!+\! T_1) \quad
    i \in [2] \quad
    \cH_0 = \{(a \!+\! 1 \!+\! \#T_i \!+\! k,\,0) \mid k \in [(\#T_{1 \!-\! i} \!-\! \#T_i)_{\ge 0}]\}
  \end{gathered}}{\begin{aligned}
    & [f,L]\, \cF \!+\! \{(x,a)\};\, \cS \mid
    \cH \!+\! \{(a,i)\} \!+\! \{(a \!+\! 1 \!+\! k, m_k) \mid k \!\in\! [\#T_i]\} \!+\! \cH_0 \\[-.3em]
    & \ \to_\varPi\ [f,L'_i]\, \cF \!+\! \{(y_i, a \!+\! 1)\};\, \cS \mid \cH \!+\! \{(a \!+\! 1 \!+\! k, m_k) \mid k \!\in\! [\#T_i]\}
  \end{aligned}} \br[.3em]
  \frac{\begin{gathered}
    S_{\varPi,f,L} = \match{*x}{\inj_0 *y_0 \to \goto L'_0,\ \inj_1 *y_1 \to \goto L'_1} \\[-.1em]
    \Ty_{\varPi,f,L}(x) = R_\alpha\,(T_0 \!+\! T_1) \quad
    \cH(a) = i \in [2]
  \end{gathered}}{
    [f,L]\, \cF \!+\! \{(x,a)\};\, \cS \mid \cH
    \ \to_\varPi\ [f,L'_i]\, \cF \!+\! \{(y_i, a \!+\! 1)\};\, \cS \mid \cH
  } \br[.3em]
  \frac{
    S_{\varPi,f,L} = \Let *y = (*x_0,*x_1);\, \goto L' \quad
    \text{for each}\,i \in [2],\
    \Ty_{\varPi,f,L}(x_i) = \own T_i
  }{\begin{aligned}
    & [f,L]\, \cF \!+\! \{(x_0,a_0),(x_1,a_1)\};\, \cS \mid \cH \!+\! \{(a_i \!+\! k, m_{ik}) \!\mid\! i \!\in\! [2], k \!\in\! [\#T_i]\} \\[-.4em]
    & \ \to_\varPi\ [f,L']\, \cF \!+\! \{(y,a')\};\, \cS \mid \cH \!+\! \{(a' \!+\! i \#T_0 \!+\! k,\, m_{ik}) \!\mid\! i \!\in\! [2], k \!\in\! [\#T_i]\}
  \end{aligned}} \br[.3em]
  \frac{
    S_{\varPi,f,L} = \Let\, (*y_0,*y_1) = *x;\, \goto L' \quad
    \Ty_{\varPi,f,L}(x) = P\,(T_0 \!\times\! T_1)
  }{
    [f,L]\, \cF \!+\! \{(x,a)\};\, \cS \mid \cH
    \ \to_\varPi\ [f,L']\, \cF \!+\! \{(y_0,a),(y_1,a \!+\! \#T_0)\};\, \cS \mid \cH
  }
\end{gather*}
\endgroup

\begin{example}[Execution on Concrete Operational Semantics]\label{example:cos-execution}
The following is an example execution for the COR program of \cref{example:cor-program}.
\(\spade, \heart, \dia, \club\) represent some distinct addresses (e.g. \(100,101,102,103\)).
\(\to_\varPi\) is abbreviated as \(\to\).
\begingroup\small
\begin{align*}
  &
    [\incmax,\entry]\, \{(\var{oa},\spade),(\var{ob},\heart)\} \mid \{(\spade,4),(\heart,3)\}
  \br[-.2em] &
    \to [\incmax,\nL{1}]\, \{(\var{oa},\spade),(\var{ob},\heart)\} \mid \{(\spade,4),(\heart,3)\}
  \br[-.2em] &
    \to^+ [\incmax,\nL{3}]\, \{(\var{ma},\spade),(\var{mb},\heart),(\var{oa},\spade),(\var{ob},\heart)\} \mid \{(\spade,4),(\heart,3)\}
  \br[-.2em] &
    \begin{aligned}
      & \to [\takemax,\entry]\, \{(\var{ma},\spade),(\var{mb},\heart)\}; \\[-.3em]
      & \hspace{6em} [\incmax,\nL{4}]\, \var{mc},\{(\var{oa},\spade),(\var{ob},\heart)\} \mid \{(\spade,4),(\heart,3)\}
    \end{aligned}
  \br[-.2em] &
    \begin{aligned}
      & \to [\takemax,\nL{1}]\, \{(\var{ord},\dia),(\var{ma},\spade),(\var{mb},\heart)\}; \\[-.3em]
      & \hspace{6em} [\incmax,\nL{4}]\, \var{mc},\!\{(\var{oa},\spade),(\var{ob},\heart)\} \mid \{(\spade,4),(\heart,3),(\dia,1)\}
    \end{aligned}
  \br[-.2em] &
    \begin{aligned}
      & \to [\takemax,\nL{2}]\, \{(\var{ou},\dia\!+\!1),(\var{ma},\spade),(\var{mb},\heart)\}; \\[-.3em]
      & \hspace{6em} [\incmax,\nL{4}]\, \var{mc},\{(\var{oa},\spade),(\var{ob},\heart)\} \mid \{(\spade,4),(\heart,3)\}
    \end{aligned}
  \br[-.2em] &
    \begin{aligned}
      & \to^+ [\takemax,\nL{4}]\, \{(\var{ma},\spade)\}; \\[-.3em]
      & \hspace{7em} [\incmax,\nL{4}]\, \var{mc},\{(\var{oa},\spade),(\var{ob},\heart)\} \mid \{(\spade,4),(\heart,3)\}
    \end{aligned}
  \br[-.2em] &
    \to [\incmax,\nL{4}]\, \{(\var{mc},\spade),(\var{oa},\spade),(\var{ob},\heart)\} \mid \{(\spade,4),(\heart,3)\}
  \br[-.2em] &
    \to [\incmax,\nL{5}]\, \{(\var{o1},\dia),(\var{mc},\spade),(\var{oa},\spade),(\var{ob},\heart)\} \mid \{(\spade,4),(\heart,3),(\dia,1)\}
  \br[-.2em] &
    \to^+ [\incmax,\nL{7}]\, \{(\var{oc'},\club),(\var{mc},\spade),(\var{oa},\spade),(\var{ob},\heart)\} \mid \{(\spade,4),(\heart,3),(\club,5)\}
  \br[-.2em] &
    \to [\incmax,\nL{8}]\, \{(\var{oc'},\club),(\var{mc},\spade),(\var{oa},\spade),(\var{ob},\heart)\} \mid \{(\spade,5),(\heart,3),(\club,4)\}
  \br[-.2em] &
    \to^+ [\incmax,\nL{10}]\, \{(\var{oa},\spade),(\var{ob},\heart)\} \mid \{(\spade,5),(\heart,3)\}
  \br[-.2em] &
    \to [\incmax,\nL{11}]\, \{(\var{oa},\spade),(\var{ob},\heart)\} \mid \{(\spade,5),(\heart,3)\}
  \br[-.2em] &
    \to^+ [\incmax,\nL{14}]\, \{(\var{ores},\dia)\} \mid \{(\dia,1)\}
\end{align*}
\endgroup

The execution is quite straightforward.
Recall that every variable is a pointer and holds just an address.
Most of the data is stored in the heap.
\end{example}

\section{Complete Rules for Translation from Labeled Statements to CHCs}
\label{index:appx-chc}

We present below the complete rules for \(\Parened{L\colon S}_{\varPi,f}\).
\begingroup\small
\begin{gather*}
  \begin{aligned}
    & \Parened{L\colon \Let y = \mutbor_\alpha x;\ \goto L'}_{\varPi,f} \\[-.4em]
    & \defeq \begin{cases}
      \left\{\, \begin{aligned}
        & \forall(\cDelta_{\varPi,f,L} \!+\! \{(\fresh{x_\0},\Parened{T})\}). \\[-.1em]
        & \ \check\varphi_{\varPi,f,L}
        \!\impliedby\! \check\varphi_{\varPi,f,L'}[\angled{*x,\fresh{x_\0}}/y,\angled{\fresh{x_\0}}/x]
      \end{aligned} \,\right\}
      & (\Ty_{\varPi,f,L}(x) = \own T) \\[.8em]
      \left\{\, \begin{aligned}
        & \forall(\cDelta_{\varPi,f,L} \!+\! \{(\fresh{x_\0},\Parened{T})\}). \\[-.1em]
        & \ \check\varphi_{\varPi,f,L}
        \!\impliedby\! \check\varphi_{\varPi,f,L'}[\angled{*x,\fresh{x_\0}}/y,\angled{\fresh{x_\0},\0x}/x]
      \end{aligned} \,\right\}
      & (\Ty_{\varPi,f,L}(x) = \mut_\alpha T)
    \end{cases}
  \end{aligned} \br[.3em]
  \begin{aligned}
    & \Parened{L\colon \drop x;\ \goto L'}_{\varPi,f} \\[-.4em]
    & \defeq \begin{cases}
      \bigl\{\,
        \forall(\cDelta_{\varPi,f,L}).\
        \check\varphi_{\varPi,f,L} \!\impliedby\! \check\varphi_{\varPi,f,L'}
       \,\bigr\}
      & (\Ty_{\varPi,f,L}(x) = \check P\, T) \\[.3em]
      \left\{\, \begin{aligned}
        & \forall(\cDelta_{\varPi,f,L} \!-\! \{(x, \mut\, \Parened{T})\} \!+\! \{(\fresh{x_*}, \Parened{T})\}). \\[-.1em]
        & \ \check\varphi_{\varPi,f,L}[\angled{\fresh{x_*},\fresh{x_*}}/x] \!\impliedby\! \check\varphi_{\varPi,f,L'}
      \end{aligned} \,\right\}
      & (\Ty_{\varPi,f,L}(x) = \mut_\alpha T)
    \end{cases}
  \end{aligned} \br[.3em]
  \begin{aligned}
    & \Parened{L\colon \immut x;\ \goto L'}_{\varPi,f} \\[-.2em]
    & \defeq\ \left\{\, \begin{aligned}
      & \forall(\cDelta_{\varPi,f,L} \!-\! \{(x, \mut\ \Parened{T})\} \!+\! \{(\fresh{x_*}, \Parened{T})\}). \\[-.3em]
      & \ \check\varphi_{\varPi,f,L}[\angled{\fresh{x_*},\fresh{x_*}}/x] \!\impliedby\! \check\varphi_{\varPi,f,L'}[\angled{\fresh{x_*}}/x]
    \end{aligned} \,\right\}
    \quad (\Ty_{\varPi,f,L}(x) = \mut_\alpha T)
  \end{aligned} \br[.3em]
  \begin{aligned}
    & \Parened{L\colon \swap(*x,*y);\ \goto L'}_{\varPi,f} \\[-.5em]
    & \defeq \begin{cases}
      \{\,
        \forall(\cDelta_{\varPi,f,L}).\
        \check\varphi_{\varPi,f,L} \!\impliedby\! \check\varphi_{\varPi,f,L'}[\angled{*y,\0x}/x,\angled{*x}/y]
      \, \}
      & (\Ty_{\varPi,f,L}(y) = \own T) \\[-.2em]
      \bigl\{\,
        \forall(\cDelta_{\varPi,f,L}).\
        \check\varphi_{\varPi,f,L} \!\impliedby\! \check\varphi_{\varPi,f,L'}[\angled{*y,\0x}/x,\angled{*x,\0y}/y]
       \,\bigr\}
      & (\Ty_{\varPi,f,L}(y) = \mut_\alpha T)
    \end{cases}
  \end{aligned} \br[.3em]
  \Parened{L\colon \Let *y = x;\ \goto L'}_{\varPi,f}
  \defeq\ \bigl\{\,
    \forall(\cDelta_{\varPi,f,L}).\
    \check\varphi_{\varPi,f,L} \!\impliedby\! \check\varphi_{\varPi,f,L'}[\angled{x}/y]
   \,\bigr\} \br[.3em]
  \begin{aligned}
    & \Parened{L\colon \Let y = *x;\ \goto L'}_{\varPi,f} \\[-.4em]
    & \defeq \begin{cases}
      \bigl\{\,
        \forall(\cDelta_{\varPi,f,L}).\
        \check\varphi_{\varPi,f,L} \!\impliedby\! \check\varphi_{\varPi,f,L'}[*x/y]
       \,\bigr\}
      & \!\!\!(\Ty_{\varPi,f,L}(x) = \own P\, T) \\[.4em]
      \bigl\{\,
        \forall(\cDelta_{\varPi,f,L}).\
        \check\varphi_{\varPi,f,L} \!\impliedby\! \check\varphi_{\varPi,f,L'}[\angled{{**}x}/y]
       \,\bigr\}
      & \!\!\!(\Ty_{\varPi,f,L}(x) = \immut_\alpha P\, T) \\[.4em]
      \{\,
        \forall(\cDelta_{\varPi,f,L}).\
        \check\varphi_{\varPi,f,L} \!\impliedby\! \check\varphi_{\varPi,f,L'}[\angled{{**}x,*\0x}/y]
      \,\}
      & \!\!\!(\Ty_{\varPi,f,L}(x) = \mut_\alpha \own T) \\[.8em]
      \left\{\, \begin{aligned}
        & \forall(\cDelta_{\varPi,f,L} \!-\! \{(x, \mut \Box\, \Parened{T})\} \!+\! \{(\fresh{x_*},\Box\, \Parened{T})\}). \\
        & \ \ \check\varphi_{\varPi,f,L}[\angled{\fresh{x_*},\fresh{x_*}}/x] \!\impliedby\! \check\varphi_{\varPi,f,L'}[\fresh{x_*}/y]
      \end{aligned} \,\right\}
      & \!\!\!(\Ty_{\varPi,f,L}(x) = \mut_\alpha \immut_\beta T) \\[1.3em]
      \left\{\, \begin{aligned}
        & \forall(\cDelta_{\varPi,f,L} \!-\! \{(x,\mut \mut\, \Parened{T})\} \\[-.1em]
        & \quad \!+\! \{(\fresh{x_{**}},\Parened{T}), (\fresh{x_{*\0}},\Parened{T}), (\fresh{x_{\0*}},\Parened{T})\}). \\
        & \ \ \check\varphi_{\varPi,f,L}[\angled{\angled{\fresh{x_{**}}, \fresh{x_{*\0}}}, \angled{\fresh{x_{\0*}}, \fresh{x_{*\0}}}}/x] \\[-.1em]
        & \qquad \!\impliedby\! \check\varphi_{\varPi,f,L'}[\angled{\fresh{x_{**}},\fresh{x_{\0*}}}/y]
      \end{aligned} \,\right\}
      & \!\!\!(\Ty_{\varPi,f,L}(x) = \mut_\alpha \mut_\beta T)
    \end{cases}
  \end{aligned} \br[.3em]
  \Parened{L\colon \Let *y = \Copy *x;\ \goto L'}_{\varPi,f}
  \ \defeq\ \bigl\{\,
    \forall(\cDelta_{\varPi,f,L}).\
    \check\varphi_{\varPi,f,L} \!\impliedby\! \check\varphi_{\varPi,f,L'}[\angled{*x}/y]
   \,\bigr\} \br[.3em]
  \Parened{L\colon x \as T;\ \goto L'}_{\varPi,f}
  \ \defeq\ \bigl\{\,
    \forall(\cDelta_{\varPi,f,L}).\
    \check\varphi_{\varPi,f,L} \!\impliedby\! \check\varphi_{\varPi,f,L'}
   \,\bigr\} \br[.3em]
  \begin{aligned}
    & \Parened{L\colon \Let y = g\angled{{\cdots}}(x_0,\dots,x_{n-1});\ \goto L'}_{\varPi,f} \\[-.3em]
    & \ \defeq\ \{\forall(\cDelta_{\varPi,f,L} \!+\! \{(y,\Parened{\Ty_{\varPi,f,L'}(y)})\}).\ \check\varphi_{\varPi,f,L} \!\impliedby\! g_\entry(x_0,\dots,x_{n-1},y) \land \check\varphi_{\varPi,f,L'}\}
  \end{aligned} \br[.3em]
  \Parened{L\colon \return x}_{\varPi,f}\ \defeq\ \bigl\{\,
    \forall(\cDelta_{\varPi,f,L}).\
    \check\varphi_{\varPi,f,L}[x/\res] \!\impliedby\! \top
   \,\bigr\} \br[.3em]
  \begin{aligned}
    & \Parened{L\colon \intro \alpha;\ \goto L'}_{\varPi,f} =
    \Parened{L\colon \now \alpha;\ \goto L'}_{\varPi,f} =
    \Parened{L\colon \alpha \le \beta;\ \goto L'}_{\varPi,f} \\[-.4em]
    & \ \defeq\ \bigl\{\,
      \forall(\cDelta_{\varPi,f,L}).\
      \check\varphi_{\varPi,f,L} \!\impliedby\! \check\varphi_{\varPi,f,L'}
     \,\bigr\}
  \end{aligned} \br[.3em]
  \Parened{L\colon \Let *y = \const;\ \goto L'}_{\varPi,f}\ \defeq\ \bigl\{\,
    \forall(\cDelta_{\varPi,f,L}).\
    \check\varphi_{\varPi,f,L} \!\impliedby\! \check\varphi_{\varPi,f,L'}[\angled{\const}/y]
   \,\bigr\} \br[.3em]
  \Parened{L\colon \Let *y = *x \op *x';\ \goto L'}_{\varPi,f}\ \defeq\ \bigl\{\,
    \forall(\cDelta_{\varPi,f,L}).\
    \check\varphi_{\varPi,f,L} \!\impliedby\! \check\varphi_{\varPi,f,L'}[\angled{*x \op *x'}/y]
   \,\bigr\} \br[.3em]
  \Parened{L\colon \Let *y = \rand();\ \goto L'}_{\varPi,f}\ \defeq\ \bigl\{\,
    \forall(\cDelta_{\varPi,f,L'}).\
    \check\varphi_{\varPi,f,L} \!\impliedby\! \check\varphi_{\varPi,f,L'}
   \,\bigr\} \br[.3em]
  \Parened{L\colon \Let *y = \inj^{T_0 \!+\! T_1}_i *x;\ \goto L'}_{\varPi,f}\ \defeq\ \bigl\{\,
    \forall(\cDelta_{\varPi,f,L}).\
    \check\varphi_{\varPi,f,L} \!\impliedby\! \check\varphi_{\varPi,f,L'}[\angled{\inj_i *x}/y]
   \,\bigr\} \br[.3em]
  \begin{aligned}
    & \Parened{L\colon \match{*x}{\inj_0 *y_0 \to \goto L_0,\ \inj_1 *y_1 \to \goto L_1}}_{\varPi,f}\ \ \\[-.3em]
    & \ \defeq\ \left\{\,\begin{aligned}
      & \forall(\cDelta_{\varPi,f,L_i} - \{(x, \mut (\Parened{T_0} \!+\! \Parened{T_1}))\} + \{(x_{*!}, \Parened{T_i})\}). \\[-.1em]
      & \hspace{7em} \check\varphi_{\varPi,f,L}[\angled{\inj_i x_{*!}} / x] \!\impliedby\! \check\varphi_{\varPi,f,L_i}[\angled{x_{*!}} / y_i]
    \end{aligned}
    \ \middle|\
      i \in [2]
    \right\} \\
    & \ \text{if}\ \ \Ty_{\varPi,f,L}(x) = \check P\, (T_0 \!+\! T_1)
  \end{aligned} \br[.3em]
  \begin{aligned}
    & \Parened{L\colon \match{*x}{\inj_0 *y_0 \to \goto L_0,\ \inj_1 *y_1 \to \goto L_1}}_{\varPi,f}\ \ \\[-.3em]
    & \ \defeq\ \left\{\,\begin{aligned}
      & \forall(\cDelta_{\varPi,f,L_i} - \{(x, \mut (\Parened{T_0} \!+\! \Parened{T_1}))\} + \{(x_{*!}, \Parened{T_i}), (x_{\0!}, \Parened{T_i})\}). \\[-.1em]
      & \hspace{7em} \check\varphi_{\varPi,f,L}[\angled{\inj_i x_{*!}, \inj_i x_{\0i}} / x] \!\impliedby\! \check\varphi_{\varPi,f,L_i}[\angled{x_{*!}, x_{\0!}} / y_i]
    \end{aligned}
    \ \middle|\
      i \in [2]
    \right\} \\
    & \ \text{if}\ \ \Ty_{\varPi,f,L}(x) = \mut_\alpha (T_0 \!+\! T_1)
  \end{aligned} \br[.3em]
  \begin{aligned}
    & \Parened{L\colon \Let *y = (*x_0,*x_1);\ \goto L'}_{\varPi,f} \\[-.3em]
    & \ \defeq\ \bigl\{\,
      \forall(\cDelta_{\varPi,f,L}).\
      \check\varphi_{\varPi,f,L} \!\impliedby\! \check\varphi_{\varPi,f,L'}[\angled{(*x_0,*x_1)}/y]
     \,\bigr\}
  \end{aligned} \br[.3em]
  \begin{aligned}
    & \Parened{L\colon \Let\, (*y_0,*y_1) = *x;\ \goto L'}_{\varPi,f} \\[-.3em]
    & \defeq \begin{cases}
      \left\{\, \begin{aligned}
        & \forall(\cDelta_{\varPi,f,L}).\
        \check\varphi_{\varPi,f,L} \\
        & \quad \!\impliedby\! \check\varphi_{\varPi,f,L'}[\angled{(*x).0}/y_0, \angled{(*x).1}/y_1]
      \end{aligned} \,\right\}
      & (\Ty_{\varPi,f,L}(x) = \check P\, T) \\[.7em]
      \left\{\, \begin{aligned}
        & \forall(\cDelta_{\varPi,f,L}).\
        \check\varphi_{\varPi,f,L} \!\impliedby\! \\
        & \quad \check\varphi_{\varPi,f,L'}[\angled{(*x).0,(\0x).0}/y_0, \angled{(*x).1,(\0x).1}/y_1]
      \end{aligned} \,\right\}
      & (\Ty_{\varPi,f,L}(x) = \mut_\alpha T)
    \end{cases}
  \end{aligned}
\end{gather*}
\endgroup

\Subsubsection{Rule for Dereference}

The rule for dereference (\(\Let y = *x\)) may seem complicated at a glance. It is however just because this single instruction can cause multiple events (dereference and release of a mutable reference).

\section{Proof of the Correctness of the CHC Representation}
\label{index:appx-proof}

\subsection{Abstract Operational Semantics}
\label{index:appx-proof-aos}

We introduce \emph{abstract operation semantics} for COR, as a mediator between concrete operational semantics and the logic.
In abstract operational semantics, we get rid of heaps and directly represent each variable as a value with such future values expressed as \emph{abstract variables} \(\ab{x}\) (marked bold and light blue), which is strongly related to \emph{prophecy variables}.
An abstract variable represents the undetermined value of a mutable reference at the end of borrow.

Formally, we introduce a \emph{pre-value}, which is defined as follows:
\begin{gather*}
  \ltag{pre-value} \hat v, \hat w
  \sdef \angled{\hat v}
  \sor \angled{\hat v_*, \hat v_\0}
  \sor \inj_i \hat v
  \sor (\hat v_0, \hat v_1)
  \sor \const
  \sor \ab{x}.
\end{gather*}

Abstract operational semantics is described as transition on program states encoded as an \emph{abstract configuration} \(\aC\), which is defined as follows.
Here, an \emph{abstract stack frame} \(\aF\) maps variables to pre-values.
We may omit the terminator `\(;\, \End\)'.
\begin{gather*}
  \aS \sdef \End \mathrel{\, \bigm|\,} [f,L]_\cTheta\,x,\aF;\, \aS \quad
  \ltag{abstract configuration} \aC \sdef [f,L]_\cTheta\, \aF;\, \aS \mid_\cA
\end{gather*}

In order to facilitate proofs later, we append lifetime-related ghost information to \(\aC\), which does not directly affect the execution.
\(\cA\) is a \emph{global lifetime context}, which is the lifetime context of all local lifetime variables from all stack frames;
we add a \emph{tag} on a local lifetime variable (e.g. \(\alpha^{(i)}\) instead of \(\alpha\)) to clarify which stack frame it belongs to.
\(\cTheta\) is a \emph{lifetime parameter context}, which maps the lifetime variables in the (local) lifetime context for a stack frame to the corresponding \emph{tagged} lifetime variables in the global lifetime context.

Just as concrete operational semantics,
abstract operational semantics is characterized by the one-step transition relation \(\aC \to_\varPi \aC'\) and the termination relation \(\final_\varPi(\aC)\), which are defined by the following rules.
\(\aC[\hat v/\ab{x}]\) is \(\aC\) with every \(\ab{x}\) in its abstract stack frames replaced with \(\hat v\).
`\(\val\)' maps both \(\angled{\hat v}\) and \(\angled{\hat v, \ab{x_\0}}\) to \(\hat v\).
\begingroup\small
\begin{gather*}
  \frac{
    S_{\varPi,f,L} = \Let y = \mutbor_\alpha x;\, \goto L' \quad
    \ab{x_\0}\ \text{is fresh}
  }{
    [f,L]_\cTheta\, \aF \!+\! \{(x,\angled{\hat v_*})\};\, \aS \mid_\cA
    \ \to_\varPi\ [f,L']_\cTheta\, \aF \!+\! \{(y,\angled{\hat v_*,\ab{x_\0}}),(x,\angled{\ab{x_\0}})\};\, \aS \mid_\cA
  } \br[.3em]
  \frac{
    S_{\varPi,f,L} = \Let y = \mutbor_\alpha x;\, \goto L' \quad
    \ab{x_\0}\ \text{is fresh}
  }{
    [f,L]_\cTheta\, \aF \!+\! \{(x,\angled{\hat v_*,\ab{x'_\0}})\};\, \aS \mid_\cA
    \ \to_\varPi\ [f,L']_\cTheta\, \aF \!+\! \{(y,\angled{\hat v_*,\ab{x_\0}}),(x,\angled{\ab{x_\0},\ab{x'_\0}})\};\, \aS \mid_\cA
  } \br[.3em]
  \frac{
    S_{\varPi,f,L} = \drop x;\, \goto L' \quad
    \Ty_{\varPi,f,L}(x) = \check P\, T
  }{
    [f,L]_\cTheta\, \aF \!+\! \{(x, \hat v)\};\, \aS \mid_\cA
    \ \to_\varPi\ [f,L']_\cTheta\, \aF;\, \aS \mid_\cA
  } \br[.3em]
  \frac{
    S_{\varPi,f,L} = \drop x;\, \goto L' \quad
    \Ty_{\varPi,f,L}(x) = \mut_\alpha T
  }{
    [f,L]_\cTheta\, \aF \!+\! \{(x,\angled{\hat v_*,\ab{x_\0}})\};\, \aS \mid_\cA
    \ \to_\varPi\ \bigl([f,L']_\cTheta\, \aF;\, \aS \mid_\cA\bigr)\bigl[\hat v_*/\ab{x_\0}\bigr]
  } \br[.3em]
  \frac{
    S_{\varPi,f,L} = \immut x;\, \goto L'
  }{
    [f,L]_\cTheta\, \aF \!+\! \{(x,\angled{\hat v_*,\ab{x_\0}})\};\, \aS \mid_\cA
    \ \to_\varPi\ \bigl([f,L']_\cTheta\, \aF \!+\! \{(x,\angled{\hat v_*})\};\, \aS \mid_\cA\bigr)\bigl[\hat v_*/\ab{x_\0}\bigr]
  } \br[.3em]
  \frac{
    S_{\varPi,f,L} = \swap(*x,*y);\, \goto L' \quad
    \Ty_{\varPi,f,L}(y) = \own T
  }{\begin{aligned}
    & [f,L]_\cTheta\, \aF \!+\! \{(x,\angled{\hat v_*, \ab{x_\0}}),(y,\angled{\hat w_*})\};\, \aS \mid_\cA \\[-.2em]
    & \hspace{6em} \to_\varPi\ [f,L']_\cTheta\, \aF \!+\! \{(x,\angled{\hat w_*, \ab{x_\0}}),(y,\angled{\hat v_*})\};\, \aS \mid_\cA
  \end{aligned}} \br[.3em]
  \frac{
    S_{\varPi,f,L} = \swap(*x,*y);\, \goto L' \quad
    \Ty_{\varPi,f,L}(y) = \mut_\alpha T
  }{\begin{aligned}
    & [f,L]_\cTheta\, \aF \!+\! \{(x, \angled{\hat v_*, \ab{x_\0}}),(y,\angled{\hat w_*, \ab{y_\0}})\};\, \aS \mid_\cA \\[-.2em]
    & \hspace{6em} \to_\varPi\ [f,L']_\cTheta\, \aF \!+\! \{(x, \angled{\hat w_*, \ab{x_\0}}),(y, \angled{\hat v_*, \ab{y_\0}})\};\, \aS \mid_\cA
  \end{aligned}} \br[.3em]
  \frac{
    S_{\varPi,f,L} = \Let *y = x;\, \goto L'
  }{
    [f,L]_\cTheta\, \aF \!+\! \{(x, \hat v)\};\, \aS \mid_\cA
    \ \to_\varPi\ [f,L']_\cTheta\, \aF \!+\! \{(y,\angled{\hat v})\};\, \aS \mid_\cA
  } \br[.3em]
  \frac{
    S_{\varPi,f,L} = \Let y = *x;\, \goto L' \quad
    \Ty_{\varPi,f,L}(x) = \own P\, T
  }
  {[f,L]_\cTheta\, \aF \!+\! \{(x,\angled{\hat v_*})\};\, \aS \mid_\cA
    \ \to_\varPi\ [f,L']_\cTheta\, \aF \!+\! \{(y, \hat v_*)\};\, \aS \mid_\cA} \br[.3em]
  \frac{
    S_{\varPi,f,L} = \Let y = *x;\, \goto L' \quad
    \Ty_{\varPi,f,L}(x) = \immut_\alpha P\, T
  }{
    [f,L]_\cTheta\, \aF \!+\! \{(x,\angled{\hat v_*})\};\, \aS \mid_\cA
    \ \to_\varPi\ [f,L']_\cTheta\, \aF \!+\! \{(y,\angled{\val(\hat v_*)})\};\, \aS \mid_\cA
  } \br[.3em]
  \frac{
    S_{\varPi,f,L} = \Let y = *x;\, \goto L' \quad
    \Ty_{\varPi,f,L}(x) = \mut_\alpha \own T \quad
    \ab{x_{\0*}}\ \text{is fresh}
  }{
    [f,L]_\cTheta\, \aF \!+\! \{(x,\angled{\angled{\hat v_{**}},\ab{x_\0}})\};\, \aS \mid_\cA
    \ \to_\varPi\ \bigl([f,L']_\cTheta\, \aF \!+\! \{(y,\angled{\hat v_{**},\ab{x_{\0*}}})\};\, \aS \mid_\cA\bigr)\bigl[\angled{\ab{x_{\0*}}}/\ab{x_\0}\bigr]
  } \br[.3em]
  \frac{
    S_{\varPi,f,L} = \Let y = *x;\, \goto L' \quad
    \Ty_{\varPi,f,L}(x) = \mut_\alpha \immut_\beta T
  }{
    [f,L]_\cTheta\, \aF \!+\! \{(x,\angled{\angled{\hat v_{**}},\ab{x_\0} })\};\, \aS \mid_\cA
    \ \to_\varPi\ \bigl([f,L']_\cTheta\, \aF \!+\! \{(y,\angled{\hat v_{**}})\};\, \aS \mid_\cA\bigr)\bigl[\angled{\hat v_{**}}/\ab{x_\0}\bigr]
  } \br[.3em]
  \frac{
    S_{\varPi,f,L} = \Let y = *x;\, \goto L' \quad
    \Ty_{\varPi,f,L}(x) = \mut_\alpha \mut_\beta T \quad
    \ab{x_{*\0}}\ \text{is fresh}
  }{\begin{aligned}
    & [f,L]_\cTheta\, \aF \!+\! \{(x,\angled{\angled{\hat v_{**},\ab{x'_{*\0}}},\ab{x_\0}})\};\, \aS \mid_\cA \\[-.2em]
    & \hspace{7em} \to_\varPi\ \bigl([f,L']_\cTheta\, \aF \!+\! \{(y,\angled{\hat v_{**},\ab{x_{*\0}}})\};\, \aS \mid_\cA\bigr)\bigl[\angled{\ab{x_{*\0}},\ab{x'_{*\0}}}/\ab{x_\0}\bigr]
  \end{aligned}
  } \br[.3em]
  \frac{
    S_{\varPi,f,L} = \Let *y = \Copy *x;\, \goto L'
  }{
    [f,L]_\cTheta\, \aF;\, \aS \mid_\cA
    \ \to_\varPi\ [f,L']_\cTheta\, \aF \!+\! \{(y,\angled{\val(\aF(x))})\};\, \aS \mid_\cA
  } \br[.3em]
  \frac{
    S_{\varPi,f,L} = x \as T;\, \goto L'
  }{
    [f,L]_\cTheta\, \aF;\, \aS \mid_\cA
    \ \to_\varPi\ [f,L']_\cTheta\, \aF;\, \aS \mid_\cA
  } \br[.3em]
  \frac{\begin{gathered}
    S_{\varPi,f,L} = \Let y = g\angled{\alpha_0,\dots,\alpha_{m-1}}(x_0,\dots,x_{n-1});\, \goto L' \\[-.2em]
    \varSigma_{\varPi,g} = \angled{\alpha'_0,\dots,\alpha'_{m-1} \mid \cdots}\,(x'_0\colon T_0,\dots,x'_{n-1}\colon T_{n-1}) \quad
    \cTheta' = \{(\alpha'_j,\alpha_j\cTheta) \mid j \!\in\! [m]\}
  \end{gathered}}{
    [f,L]_\cTheta\, \aF \!+\! \{(x_i, \hat v_i) \!\mid\! i \!\in\! [n]\};\, \aS \mid_\cA
    \ \to_\varPi\ [g,\entry]_{\cTheta'}\, \{(x'_i, \hat v_i) \!\mid\! i \!\in\! [n]\};\, [f,L']_\cTheta\,y,\aF;\, \aS \mid_\cA
  } \br[.3em]
  \frac{
    S_{\varPi,f,L} = \return x \quad
  }{
    [f,L]_\cTheta\, \{(x, \hat v)\}; [g,L']_{\cTheta'}\,x',\aF';\, \aS \mid_\cA
    \ \to_\varPi\ [g,L']_{\cTheta'}\, \aF' \!+\! \{(x', \hat v)\};\, \aS \mid_\cA
  } \br[.3em]
  \frac{
    S_{\varPi,f,L} = \return x \quad
  }{
    \final_\varPi\bigl([f,L]_\cTheta\, \{(x, \hat v)\} \mid_\cA\bigr)
  } \br[.3em]
  \frac{\begin{gathered}
    S_{\varPi,f,L} = \intro \alpha;\, \goto L' \quad
    \aS\ \text{has}\ n\ \text{layers} \quad
    A_\ex = \{\alpha^{(k)} \!\in\! A \mid k \!<\! n\}
  \end{gathered}}{
    [f,L]_\cTheta\, \aF;\, \aS \mid_{(A,R)}
    \ \to_\varPi\ [f,L']_{\cTheta + \{(\alpha,\alpha^{(n)})\}}\, \aF;\, \aS \mid_{(\{\alpha^{(n)}\} + A,\, \{\alpha^{(n)}\} \times (\{\alpha^{(n)}\} + A_\ex) + R)}
  } \br[.3em]
  \frac{
    S_{\varPi,f,L} = \now \alpha;\, \goto L'
  }{
    [f,L]_{\{(\alpha,\alpha^{(n)})\} + \cTheta}\, \aF;\, \aS \mid_{(\{\alpha^{(n)}\} + A,R)}
    \ \to_\varPi\ [f,L']_\cTheta\, \aF;\, \aS \mid_{(A,\, \{(\beta^{(k)},\gamma^{(l)}) \in R \,\mid\, \beta^{(k)} \ne \alpha^{(n)}\})}\,
  } \br[.3em]
  \frac{\begin{gathered}
    S_{\varPi,f,L} = \alpha \le \beta;\, \goto L'
  \end{gathered}}{
    [f,L]_\cTheta\, \aF;\, \aS \mid_{(A,R)}
    \ \to_\varPi\ [f,L']_\cTheta\, \aF;\, \aS \mid_{(A,\, (\{(\cTheta(\alpha),\cTheta(\beta))\}+ R)^+)}
  } \br[.3em]
  \frac{
    S_{\varPi,f,L} = \Let *y = \const;\, \goto L'
  }{
    [f,L]_\cTheta\, \aF;\, \aS \mid_\cA
    \ \to_\varPi\ [f,L']_\cTheta\, \aF \!+\! \{(y,\angled{\const})\};\, \aS \mid_\cA
  } \br[.3em]
  \frac{
    S_{\varPi,f,L} = \Let *y = *x \op *x';\, \goto L'
  }{
    [f,L]_\cTheta\, \aF;\, \aS \mid_\cA
    \ \to_\varPi\ [f,L']_\cTheta\, \aF \!+\! \{(y,\angled{\val(\aF(x)) \mathop{\lBrack\op\rBrack} \val(\aF(x'))})\};\, \aS \mid_\cA
  } \br[.3em]
  \frac{
    S_{\varPi,f,L} = \Let *y = \rand();\, \goto L'
  }{
    [f,L]_\cTheta\, \aF;\, \aS \mid_\cA
    \ \to_\varPi\ [f,L']_\cTheta\, \aF \!+\! \{(y,\angled{n})\};\, \aS \mid_\cA
  } \br[.3em]
  \frac{
    S_{\varPi,f,L} = \Let *y = \inj^{T_0 \!+\! T_1}_i *x;\, \goto L'
  }{
    [f,L]_\cTheta\, \aF \!+\! \{(x,\angled{\hat v_*})\};\, \aS \mid_\cA
    \ \to_\varPi\ [f,L']_\cTheta\, \aF \!+\! \{(y,\angled{\inj_i \hat v_*})\};\, \aS \mid_\cA
  } \br[.3em]
  \frac{\begin{gathered}
    S_{\varPi,f,L} = \match{*x}{\inj_0 *y_0 \to \goto L'_0,\ \inj_1 *y_1 \to \goto L'_1} \\[-.4em]
    \Ty_{\varPi,f,L}(x) = \check P\, (T_0 \!+\! T_1)
  \end{gathered}}{
    [f,L]_\cTheta\, \aF \!+\! \{(x,\angled{\inj_i \hat v_{*!}})\};\, \aS \mid_\cA
    \ \to_\varPi\ [f,L'_i]_\cTheta\, \aF \!+\! \{(y_i,\angled{\hat v_{*!}})\};\, \aS \mid_\cA
  } \br[.3em]
  \frac{\begin{gathered}
    S_{\varPi,f,L} = \match{*x}{\inj_0 *y_0 \to \goto L'_0,\ \inj_1 *y_1 \to \goto L'_1} \\[-.2em]
    \Ty_{\varPi,f,L}(x) = \mut_\alpha (T_0 \!+\! T_1) \quad
    \ab{x_{\0!}}\ \text{is fresh}
  \end{gathered}}{
    [f,L]_\cTheta\, \aF \!+\! \{(x,\angled{\inj_i \hat v_{*!},\ab{x_\0}})\};\, \aS \mid_\cA
    \ \to_\varPi\ \bigl([f,L'_i]_\cTheta\, \aF \!+\! \{(y_i,\angled{\hat v_{*!}, \ab{x_{\0!}}})\};\, \aS \mid_\cA\bigr)\bigl[\inj_i \ab{x_{\0!}}/\ab{x_\0}\bigr]
  } \br[.3em]
  \frac{
    S_{\varPi,f,L} = \Let *y = (*x_0,*x_1);\, \goto L'
  }{
    [f,L]_\cTheta\, \aF \!+\! \{(x_0, \angled{\hat v_{*0}}), (x_1, \angled{\hat v_{*1}})\};\, \aS \mid_\cA
    \ \to_\varPi\ [f,L']_\cTheta\, \aF \!+\! \{(y,\angled{(\hat v_{*0}, \hat v_{*1})})\};\, \aS \mid_\cA
  } \br[.3em]
  \frac{
    S_{\varPi,f,L} = \Let\, (*y_0,*y_1) = *x;\, \goto L'
  }{
    [f,L]_\cTheta\, \aF \!+\! \{(x,\angled{(\hat v_{*0},\hat v_{*1})})\};\, \aS \mid_\cA
    \ \to_\varPi\ [f,L']_\cTheta\, \aF \!+\! \{(y_0,\angled{\hat v_{*0}}), (y_1,\angled{\hat v_{*1}})\};\, \aS \mid_\cA
  } \br[.3em]
  \frac{
    S_{\varPi,f,L} = \Let\, (*y_0,*y_1) = *x;\, \goto L' \quad
    \ab{x_{\00}},\ab{x_{\01}}\ \text{are fresh}
  }{\begin{aligned}
    & [f,L]_\cTheta\, \aF \!+\! \{(x,\angled{(\hat v_{*0},\hat v_{*1}), \ab{x_\0}})\};\, \aS \mid_\cA \\[-.3em]
    & \ \to_\varPi\ \bigl([f,L']_\cTheta\, \aF \!+\! \{(y_0,\angled{\hat v_{*0}, \ab{x_{\00}}}), (y_1,\angled{\hat v_{*1}, \ab{x_{\01}}})\};\, \aS \mid_\cA\bigr)\bigl[(\ab{x_{\00}},\ab{x_{\01}})/\ab{x_\0}\bigr]
  \end{aligned}}
\end{gather*}
\endgroup

\begin{example}[Execution on Abstract Operaitonal Semantics]
The following is an example execution on abstract operational semantics for \cref{example:cor-program}.
It corresponds to \cref{example:cos-execution}, the example execution on concrete operational semantics.

Here, \(\cA \defeq (\{\alpha\}, \Id_{\{\alpha\}})\) and \(\cTheta \defeq \{\alpha, \alpha^{(0)}\}\).
\begingroup\small
\begin{align*}
  &
    [\incmax,\entry]_\emp\, \{(\nvar{oa},\angled{4}),(\nvar{ob},\angled{3})\} \mid_{(\!\emp,\emp\!)}
  \br[-.2em] &
    \to [\incmax,\nL{1}]_\cTheta\, \{(\nvar{oa},\angled{4}),(\nvar{ob},\angled{3})\} \mid_\cA
  \br[-.2em] &
    \to^+ [\incmax,\nL{3}]_\cTheta\, \{(\nvar{ma},\angled{4,\ab{a_\0}}),(\nvar{mb},\angled{3,\ab{b_\0}}),(\nvar{oa},\angled{\ab{a_\0}}),(\nvar{ob},\angled{\ab{b_\0}})\} \mid_\cA
  \br[-.2em] &
    \begin{aligned}
      & \to [\takemax,\entry]_\cTheta\, \{(\nvar{ma},\angled{4,\ab{a_\0}}),(\nvar{mb},\angled{3,\ab{b_\0}})\}; \\[-.3em]
      & \hspace{4em} [\incmax,\nL{4}]_\cTheta\, \var{mc},\{(\nvar{oa},\angled{\ab{a_\0}}),(\nvar{ob},\angled{\ab{b_\0}})\} \mid_\cA
    \end{aligned}
  \br[-.2em] &
    \begin{aligned}
      & \to [\takemax,\nL{1}]_\cTheta\, \{(\nvar{ord},\angled{\inj_1 {()}}),(\nvar{ma},\angled{4,\ab{a_\0}}),(\nvar{mb},\angled{3,\ab{b_\0}})\}; \\[-.3em]
      & \hspace{4em} [\incmax,\nL{4}]_\cTheta\, \var{mc},\{(\nvar{oa},\angled{\ab{a_\0}}),(\nvar{ob},\angled{\ab{b_\0}})\} \mid_\cA
    \end{aligned}
  \br[-.2em] &
    \begin{aligned}
      & \to [\takemax,\nL{2}]_\cTheta\, \{(\nvar{ou},\angled{()}),(\nvar{ma},\angled{4,\ab{a_\0}}),(\nvar{mb},\angled{3,\ab{b_\0}})\}; \\[-.3em]
      & \hspace{4em} [\incmax,\nL{4}]_\cTheta\, \var{mc},\{(\nvar{oa},\angled{\ab{a_\0}}),(\nvar{ob},\angled{\ab{b_\0}})\} \mid_\cA
    \end{aligned}
  \br[-.2em] &
    \begin{aligned}
      & \to^+ [\takemax,\nL{4}]_\cTheta\, \{(\nvar{ma},\angled{4,\ab{a_\0}})\}; \\[-.3em]
      & \hspace{4em} [\incmax,\nL{4}]_\cTheta\, \var{mc},\{(\nvar{oa},\angled{\ab{a_\0}}),(\nvar{ob},\angled{3})\} \mid_\cA
    \end{aligned}
  \br[-.2em] &
    \to [\incmax,\nL{4}]_\cTheta\, \{(\nvar{mc},\angled{4,\ab{a_\0}}),(\nvar{oa},\angled{\ab{a_\0}}),(\nvar{ob},\angled{3})\} \mid_\cA
  \br[-.2em] &
    \to [\incmax,\nL{5}]_\cTheta\, \{(\nvar{o1},\angled{1}),(\nvar{mc},\angled{4,\ab{a_\0}}),(\nvar{oa},\angled{\ab{a_\0}}),(\nvar{ob},\angled{3})\} \mid_\cA
  \br[-.2em] &
    \to^+ [\incmax,\nL{7}]_\cTheta\, \{(\nvar{oc'},\angled{5}),(\nvar{mc},\angled{4,\ab{a_\0}}),(\nvar{oa},\angled{\ab{a_\0}}),(\nvar{ob},\angled{3})\} \mid_\cA
  \br[-.2em] &
    \to [\incmax,\nL{8}]_\cTheta\, \{(\nvar{oc'},\angled{4}),(\nvar{mc},\angled{5,\ab{a_\0}}),(\nvar{oa},\angled{\ab{a_\0}}),(\nvar{ob},\angled{3})\} \mid_\cA
  \br[-.2em] &
    \to^+ [\incmax,\nL{10}]_\cTheta\, \{(\nvar{oa},\angled{5}),(\nvar{ob},\angled{3})\} \mid_\cA
  \br[-.2em] &
    \to [\incmax,\nL{11}]_\emp\, \{(\nvar{oa},\angled{5}),(\nvar{ob},\angled{3})\} \mid_{(\!\emp,\emp\!)}
  \br[-.2em] &
    \to^+ [\incmax,\nL{14}]_\emp\, \{(\nvar{or},\angled{\inj_1 {()}})\} \mid_{(\!\emp,\emp\!)}
\end{align*}
\endgroup
The abstract variables \(\ab{a_\0}\) and \(\ab{b_\0}\) are introduced for mutable borrow of \(\nvar{oa}\) and \(\nvar{ob}\).
By the call of \(\takemax\), \(\nvar{mb}\) is released, whereby the variable \(\ab{b_\0}\) is set to the value \(3\),
and the variable \(\ab{a_\0}\) is passed to \(\nvar{mc}\).
After the increment is performed, \(\nvar{mc}\) is released, and thereby \(\ab{a_\0}\) is set to the updated value \(5\).
\end{example}

\subsection{Safety on Abstract Configurations}
\label{index:appx-proof-aos-safe}

It is natural to require for an abstract configuration that each variable is shared by the borrower and the lender and is not used elsewhere.\footnote{%
  We should take care of the cases where a mutable reference is immutably borrowed (e.g. \(\immut_\alpha \mut_\beta T\)), because immutable references can be unrestrictedly copied.
  Later when we define `\(\summary\)` judgments, we get over this problem using \emph{access modes}.
}
A stack of borrows (caused by reborrows) can be described as a chain of abstract variables (e.g. \(\angled{v, \ab{x}}, \angled{\ab{x}, \ab{y}}, \angled{\ab{y}}\)).

To describe such restrictions, we define the \emph{safety} on an abstract configuration `\(\safe_\varPi(\aC)\)'.
We also show \emph{progression and preservation} regarding safety on \emph{abstract operational semantics}, as a part of soundness of COR's type system.

\Subsubsection{Summary}
An \emph{abstract variable summary} \(\aX\) is a finite multiset of items of form `\(\give_\alpha(\ab{x}\Colon T)\)' or `\(\take^\alpha(\ab{x}\Colon T)\)'.

Now, `\(\summary_D^\ac(\hat v\Colon T \mid \aX)\)'
(the pre-value \(\hat v\) of type \(T\) yields an abstract variable summary \(\aX\), under the access mode \(D\) and the activeness \(\ac\))
is defined as follows.
Here, an \emph{access mode} \(D\) is either of form `\(\hot\)' or `\(\cold\)'.
\begingroup\small
\begin{gather*}
  \summary_D^{\dagger\alpha}(\ab{x}\Colon T \mid \{\take^\alpha(\ab{x}\Colon T)\}) \quad
  \frac{
    \summary_{D \cdot \check P}^\ac(\hat v\Colon T \mid \aX)
  }{
    \summary_D^\ac(\angled{\hat v}\Colon \check P\, T \mid \aX)
  } \br[-.1em]
  \dbox{\(
    D \cdot \own \defeq D \quad
    D \cdot \immut_\beta \defeq \cold
  \)} \br[.1em]
  \frac{
    \summary_{\hot}^\ac(\hat v\Colon T \mid \aX)
  }{
    \summary_{\hot}^\ac(\angled{\hat v, \ab{x}}\Colon \mut_\beta T \mid \aX \oplus \{\give_\beta(\ab{x}\Colon T)\})
  } \quad
  \frac{
    \summary_{\cold}^\ac(\hat v\Colon T \mid \aX)
  }{
    \summary_{\cold}^\ac(\angled{\hat v, \hat w}\Colon \mut_\beta T \mid \aX)
  } \br[.1em]
  \frac{
    \summary_D^\ac(\hat v\Colon T[\mu X.T/X] \mid \aX)
  }{
    \summary_D^\ac(\hat v\Colon \mu X.T/X \mid \aX)
  } \quad
  \summary_D^\ac(\const\Colon T \mid \emp) \br[.1em]
  \frac{
    \summary_D^\ac(\hat v\Colon T_i \mid \aX)
  }{
    \summary_D^\ac\bigl(\inj_i \hat v\Colon T_0 \!+\! T_1 \bigm| \aX\bigr)
  } \quad
  \frac{
    \summary_D^\ac(\hat v_0\Colon T_0 \mid \aX_0) \quad
    \summary_D^\ac(\hat v_1\Colon T_1 \mid \aX_1)
  }{
    \summary_D^\ac\bigl((\hat v_0, \hat v_1)\Colon T_0 \!\times\! T_1 \bigm| \aX_0 \oplus \aX_1\bigr)
  }
\end{gather*}
\endgroup

`\(\summary_\cTheta(\aF\Colon \cGamma \mid \aX)\)'
(the abstract stack frame \(\aF\) respecting the variable context \(\cGamma\) yields \(\aX\), under the lifetime parameter context \(\cTheta\))
is defined as follows.
\begingroup\small
\begin{gather*}
  \frac{
    \dom \cF = \dom \cGamma \quad
    \text{for any}\ x\colonu\ac T \in \cGamma,\
    \summary_{\hot}^\ac\bigl(\aF(x)\Colon T\, \cTheta \, \mid\, \aX_x\bigr)
  }{
    \summary_\cTheta\bigl(\aF\Colon \cGamma \bigm| \bigoplus_{x\colonu\ac T \in \cGamma} \aX_x\bigr)
  }
\end{gather*}
\endgroup

Finally, `\(\summary_\varPi(\aC \mid \aX)\)' (the abstract configuration \(\aC\) yields \(\aX\) under the program \(\varPi\)) is defined as follows.
\begingroup\small
\begin{gather*}
  \frac{
    \text{for any}\ i \in [n + 1],\
    \summary_{\cTheta_i}(\aF_i\Colon \cGamma_{\varPi,f_i,L_i} \mid \aX_i)
  }{
    \summary_\varPi\bigl(
      [f_0,L_0]_{\cTheta_0}\, \aF_0;\,
      [f_1,L_1]_{\cTheta_1}\, x_1,\aF_1;\, \cdots;\,
      [f_n,L_n]_{\cTheta_n}\, x_n,\aF_n \mid_\cA \,\bigm|\, \bigoplus_{i=0}^n\!\aX_i \,\bigr)
  }
\end{gather*}
\endgroup

\Subsubsection{Lifetime Safety}

`\(\lifetimeSafe_i(\cA_\Global,\cTheta \mid \cA_\local,A_\ex)\)'
(the global lifetime context \(\cA_\Global\) with the lifetime parameter context \(\cTheta\) is safe on lifetimes
with respect to the (local) lifetime context \(\cA_\local\) from the type system and the set of lifetime parameters \(A_\ex\) under the stack frame index \(i\))
is defined as follows.
\begingroup\small
\begin{gather*}
  \frac{\begin{gathered}
    \dom \cTheta = \lvert\cA_\local\rvert \qquad
    \text{for any}\ \alpha \!\in\! A_\ex,\ \text{letting}\ \beta^{(k)} = \cTheta(\alpha),\ k < i\ \text{holds} \\[-.5em]
    \text{for any}\ \alpha \!\in\! \lvert\cA_\local\rvert \!-\! A_\ex,\
    \cTheta(\alpha) = \alpha^{(i)} \\[-.3em]
    \text{for any}\ (\alpha,\beta) \!\in\! \lvert\cA_\local\rvert^2 \!-\! A_\ex^2,\ \,
    \alpha \!\le_{\cA_\local}\! \beta \iff \cTheta(\alpha) \!\le_{\cA_\Global}\! \cTheta(\beta) \\[-.3em]
    \text{for any}\ \alpha,\beta \!\in\! A_\ex^2,\
    \alpha \!\le_{\cA_\local}\! \beta \implies \cTheta(\alpha) \!\le_{\cA_\Global}\! \cTheta(\beta)
  \end{gathered}}{
    \lifetimeSafe_i(\cA_\Global,\cTheta \mid \cA_\local,A_\ex)
  }
\end{gather*}
\endgroup

`\(\lifetimeSafe_\varPi\bigl(\cA_\Global,(f_i,L_i,\cTheta_i)_{i=0}^n\bigr)\)'
(\(\cA_\Global\) with the finite sequence of function names, labels and lifetime parameter contexts \((f_i,L_i,\cTheta_i)_{i=0}^n\) is safe on lifetimes under the program \(\varPi\))
is defined as follows.
\begingroup\small
\begin{gather*}
  \frac{\begin{gathered}
    \text{for any}\ i \!\in\! [n \!+\! 1],\
    \lifetimeSafe_i(\cA_\Global,\cTheta_i \mid \cA_{\varPi,f_i,L_i},A_{\ex\, \varPi,f_i}) \\[-.3em]
    \textstyle \card\, \lvert\cA_\Global\rvert\,=\, \sum_{i=0}^{n} \card\,(\lvert\cA_{\varPi,f_i,L_i}\rvert \!-\! A_{\ex\, \varPi,f_i})
  \end{gathered}}{
    \lifetimeSafe_\varPi\bigl(\cA_\Global,(f_i,L_i,\cTheta_i)_{i=0}^n\bigr)
  } \br[-.1em]
  \text{
    \(\cA_{\varPi,f,L}\): the lifetime context for the label \(L\) of \(f\) in \(\varPi\) \quad
    \(\card X\): the cardinality of \(X\)
  }
\end{gather*}
\endgroup

Finally, `\(\lifetimeSafe_\varPi(\aC)\)' (the abstract configuration \(\aC\) is safe on lifetimes under the program \(\varPi\)) is defined as follows.
\begingroup\small
\begin{gather*}
  \frac{
    \lifetimeSafe_\varPi\bigl(\cA_\Global,(f_i,L_i,\cTheta_i)_{i=0}^n\bigr)
  }{
    \lifetimeSafe_\varPi\bigl(
      [f_n,L_n]_{\cTheta_n}\, \aF_n;
      [f_{n-1},L_{n-1}]_{\cTheta_{n-1}}\,x_{n-1},\aF_{n-1};\, \cdots;\,
      [f_0,L_0]_{\cTheta_0}\,x_0,\aF_0 \mid_{\cA_\Global}
    \bigr)
  }
\end{gather*}
\endgroup

\Subsubsection{Safety}

We first define the safety on abstract variable summaries.
`\(\safe_\cA(\ab{x},\aX)\)' is defined as follows.
Here, \(T \sim_\cA U\) means \(T \le_\cA U \land U \le_\cA T\) (the \emph{type equivalence}).
\begingroup\small
\begin{gather*}
  \frac{
    \aX(\ab{x}) = \Braced{\give_\alpha(\ab{x}\Colon T),\, \take^\beta(\ab{x}\Colon T')} \quad
    T \sim_\cA T' \quad
    \alpha \le_\cA \beta
  }{
    \safe_\cA(\ab{x},\aX)
  } \quad
  \frac{
    \aX(\ab{x}) = \emp
  }{
    \safe_\cA(\ab{x},\aX)
  } \br[-.1em]
  \text{
    \(\aX(\ab{x})\): the multiset of the items of form `\(\give_\gamma(\ab{x}\Colon U)\)'/`\(\take^\gamma(\ab{x}\Colon U)\)' in \(\aX\)
  }
\end{gather*}
\endgroup
`\(\safe_\cA(\aX)\)' means that \(\safe_\cA(\ab{x},\aX)\) holds for any \(\ab{x}\).

Finally, `\(\safe_\varPi(\aC)\)' is defined as follows.
\begingroup\small
\begin{gather*}
  \frac{\begin{gathered}
    \summary_\varPi(\aC \mid \aX) \quad
    \lifetimeSafe_\varPi(\aC) \quad
    \aC = {\cdots} \mid_\cA \quad
    \safe_\cA(\aX)
  \end{gathered}}{
    \safe_\varPi(\aC)
  }
\end{gather*}
\endgroup

\begin{property}[Safety on an Abstract Configuration Ensures Progression]
  For any \(\varPi\) and \(\aC\) such that \(\safe_\varPi(\aC)\) holds and \(\final_\varPi(\aC)\) does not hold,
  there exists \(\aC'\) satisfying \(\aC \to_\varPi \aC'\).
\end{property}
\begin{proof}
  Clear.
  The important guarantee the safety on an abstract configuration provides is that,
  in the pre-value assigned to each \emph{active} variable, abstract variables do not appear except in the form \(\angled{\hat v, \ab{x}}\).
\qed\end{proof}

\begin{lemma}[Safety on the Abstract Configuration is Preserved]
\label{lemma:preserve-aos}
  For any \(\varPi\) and \(\aC,\aC'\) such that \(\safe_\varPi(\aC)\) and \(\aC \to_\varPi \aC'\) hold, \(\safe_\varPi(\aC')\) is satisfied.
\end{lemma}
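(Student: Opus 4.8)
The plan is to unfold $\safe_\varPi(\aC)$ into its three ingredients --- the existence of an abstract variable summary $\aX$ with $\summary_\varPi(\aC \mid \aX)$, lifetime safety $\lifetimeSafe_\varPi(\aC)$, and summary safety $\safe_\cA(\aX)$ where $\aC = {\cdots} \mid_\cA$ --- and then to show, by case analysis on the rule driving $\aC \to_\varPi \aC'$, that suitable $\aX'$ and $\cA'$ recover all three for $\aC'$. For every rule we also have, by well-typedness of $\varPi$, the matching typing rule of the executed instruction or statement, which tells us exactly how the variable context $\cGamma$ and the lifetime context evolve; this is what we use to rebuild $\summary_\varPi(\aC' \mid \aX')$. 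The genuinely routine rules --- constants, arithmetic, $\rand$, injection, non-reference pair construction/destruction, $\Let *y = x$, $\Copy$, $x \as T$, non-reference $\match$, and owning-pointer dereference --- only delete or append summand contributions assembled from pre-values already present and touch no abstract variable, so $\safe_\cA$ is inherited with essentially no work.

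First I would treat the rules that \emph{introduce} a fresh abstract variable: mutable (re)borrow, and the $\mut_\alpha$ subcases of $\Let y = *x$, of $\match$, and of $\Let (*y_0,*y_1) = *x$. In each a fresh $\ab{x_\0}$ is created, the borrower's (or child's) pre-value acquires a $\give_\alpha(\ab{x_\0}\Colon T)$ contribution and the newly frozen lender the matching $\take^\alpha(\ab{x_\0}\Colon T)$, and the split/match/deref subcases additionally re-target the lender's lender's prophecy through a substitution. Since $\ab{x_\0}$ is fresh, $\safe_\cA(\ab{z},\aX')=\safe_\cA(\ab{z},\aX)$ for every other $\ab{z}$, and $\safe_\cA(\ab{x_\0},\aX')$ holds because $T \sim_\cA T$ and because the third precondition of the $\mutbor$ typing rule ($\alpha \le_\cA \gamma$ for every lifetime $\gamma$ of $P\,T$), together with the ordering already carried by any re-targeted prophecy, gives the $\le_\cA$ required on the new $\give/\take$ pair.

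Next come the rules that \emph{resolve} an abstract variable by a substitution $[\hat v_*/\ab{x_\0}]$: $\drop x$ and $\immut x$ on a $\mut_\alpha T$, and the $\mut_\alpha$-of-pointer subcases of $\Let y = *x$. For these I would first prove a substitution lemma for $\summary$: if $\summary_D^\ac(\hat v \Colon T \mid \aX)$, if every occurrence of $\ab{x_\0}$ in $\hat v$ is consumed by a $\take^\alpha(\ab{x_\0}\Colon U)$ contribution with $U \sim_\cA T'$, and $\summary_{\hot}^{\Active}(\hat w_* \Colon T' \mid \aX_w)$, then $\summary_D^{\ac}(\hat v[\hat w_*/\ab{x_\0}] \Colon T \mid \aX')$ where $\aX'$ replaces each such $\take^\alpha$ item by a copy of $\aX_w$. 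Lifting this through $\aF$ and $\aC$ then shows that the $\give_\alpha(\ab{x_\0}\Colon T)$ carried by the released reference and the $\take^\alpha(\ab{x_\0}\Colon U)$ of its lender both vanish and the remainder still satisfies $\safe_\cA$, the side condition $T \sim_\cA U$ making the types line up. The subtle point already recorded in the progression proof --- that an \emph{active} variable never carries a bare abstract variable except as the second component of a $\mut$ pair --- is what makes $\now \alpha$ sound: well-typedness forbids any live value whose type mentions $\alpha$ at the moment of $\now \alpha$, which together with $\safe_\cA(\aX)$ forces every $\dagger\alpha$-frozen variable being thawed to have an already-resolved prophecy, so it becomes active without breaking the invariant.

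Finally I would dispatch the stack rules (function call, return) and the remaining ghost rules ($\intro \alpha$, $\alpha \le \beta$). Here $\aX$ is unchanged up to re-association of $\bigoplus$, and the work is in $\lifetimeSafe_\varPi(\aC')$: a call pushes a fresh tagged layer $\alpha^{(n)}$ and demands $\lifetimeSafe_n$ for the callee, which follows from the call typing rule's preconditions $\alpha_{a_j} \le_\cA \alpha_{b_j}$ transported along the new $\cTheta$; a return pops the top layer and restricts the global lifetime context, and we need that no tagged lifetime of the returning frame survives, which holds because $\return$ is typed only when the local lifetime context equals exactly the set of lifetime parameters; $\intro \alpha$ and $\alpha \le \beta$ merely mirror the local-context updates prescribed by the type system. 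I expect the main obstacle to be the substitution lemma together with the nested-reference dereference subcases and their interaction with $\now \alpha$: getting the multiset accounting of $\give$/$\take$ items to match up under simultaneous substitution while keeping the type-equivalence conditions $\sim_\cA$ and the lifetime-ordering conditions available is where the argument is genuinely delicate; the stack and lifetime rules are laborious but are in essence bookkeeping over the tagged global lifetime context.
\qed
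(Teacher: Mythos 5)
Your proof follows essentially the same route as the paper's: a case analysis on the transition rules with the burden falling on preservation of \(\safe_\cA\) for the abstract variable summary, and your handling of borrow/subdivision (a fresh \(\give\)/\(\take\) pair per new abstract variable), release (a substitution that removes the pair), and \(\now \alpha\) (well-typedness forbids live types mentioning \(\alpha\), so no \(\give_{\alpha^{(n)}}\) and hence no \(\take^{\alpha^{(n)}}\) items survive) matches the paper's list of tricky cases, while also supplying the stack and lifetime-context bookkeeping that the paper leaves implicit. The one misstep is filing \(x \as T\) among the rules needing ``essentially no work'' because it ``touches no abstract variable'': the pre-values are indeed untouched, but the summary is computed against the variable context, so the types and lifetimes recorded in the \(\give\)/\(\take\) items do change under subtyping, and you still need the invariance of \(\mut\)-subtyping and the fact that a \(\give\) item's lifetime can only move earlier to conclude that \(T \sim_\cA T'\) and \(\alpha \le_\cA \beta\) are preserved --- this is exactly the paper's ``Type Weakening'' case.
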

\begin{proof}
  Straightforward.
  Preservation of safety on the abstract variable summary is the point.
  Below we check some tricky cases.

  \Paragraph{Type Weakening}
  Type weakening (\(x \as T\)) essentially only changes lifetimes on types.
  A lifetime on a type can become earlier if it is \emph{not} guarded by any \(\mut_\alpha\).
  Thus only the following changes happen on the abstract variable summary:
  (i) for an item of form `\(\give_\alpha(\ab{x}\Colon T)\)', \(\alpha\) can get earlier and \(T\) can be weakened;
  and (ii) for an item of form `\(\take^\alpha(\ab{x}\Colon T)\)', \(\alpha\) do not change and \(T\) can be weakened.

  \Paragraph{Mutable (Re)borrow}
  When we perform \(\Let \var{my} = \mutbor_\alpha \var{px}\),
  the abstract variable summary just gets two new items `\(\give_\alpha(\ab{x_\0}\Colon T)\)' and `\(\take^\alpha(\ab{x_\0}\Colon T)\)', for some \(\ab{x_\0}\) and \(T\).

  \Paragraph{Release of a Mutable Reference}
  When we release a mutable reference \(\var{mx}\), whose pre-value is of form \(\angled{\hat v, \ab{x_\0}}\),
  only the following changes happen on the abstract variable summary:
  (i) the items of form `\(\give_\alpha(\ab{x_\0}\Colon T)\)' and `\(\take^\beta(\ab{x_\0}\Colon T')\)' are removed;
  and (ii) since \(\hat v\) moves to another variable, the type of each abstract variable in \(\hat v\) may change into an equivalent type.

  \Paragraph{Ownership Weakening}
  Similar to a release of a mutable reference.

  \Paragraph{Swap}
  Swap (\(\swap(*x,*y)\)) actually does not alter the abstract variable summary.

  \Paragraph{Copying}
  When data of type \(T\) is copied, \(T\colon \Copy\) holds, which ensures that each mutable reference \(\mut_\alpha U\) in \(T\) is guarded by some immutable reference.
  Therefore the abstract variable summary does not change.

  \Paragraph{Subdivision of a Mutable Reference}
  A mutable reference is subdivided in the following forms: pair destruction `\(\Let\, (*\var{mx}_0,*\var{mx}_1) = *\var{mx}\)', variant destruction `\(\match{*\var{mx}}{\inj_0 *\var{my} \!\to\! \goto L_0,\, \cdots}\)', and dereference `\(\Let \var{mx} = *\var{mpx}\)'.
  When a mutable reference \(\var{mx}\) with a pre-value \(\angled{\hat v, \ab{x}}\) is subdivided,
  the two items of form \(\give_\alpha(\ab{x}\Colon T)\) and \(\take^\beta(\ab{x}\Colon T')\) are accordingly `subdivided' in the abstract variable summary.
  With a close look, the safety turns out to be preserved.

  \Paragraph{Elimination of a Local Lifetime Variable}
  Just after we eliminate a local lifetime variable \(\alpha\) (`\(\now \alpha\)'), since there remains no lifetime variable earlier than \(\alpha\) in the lifetime context, the abstract variable summary has no item of form `\(\give_{\alpha^{(n)}}(\ab{x}\Colon T)\)' (for appropriate \(n\)).
  Therefore, just before (and just after) the lifetime elimination, the abstract variable summary has no item of form `\(\take^{\alpha^{(n)}}(\ab{x}\Colon T')\)'.
\qed\end{proof}

\subsection{SLDC Resolution}
\label{index:appx-proof-sldc}

For CHC representation of a COR program,
we introduce a variant of SLD resolution, which we call \emph{SLDC resolution} (Selective Linear Definite clause Calculative resolution).
Interpreting each CHC as a deduction rule, SLDC resolution can be understood as a \emph{top-down} construction of a proof tree from the left-hand side.
SLDC resolution is designed to be complete with respect to the logic (\cref{lemma:sldc-complete}).

A \emph{resolutive configuration} \(\aK\) and a \emph{pre-resolutive configuration} \(\hat\aK\) have the following form.
\begin{gather*}
  \ltag{resolutive configuration} \aK \sdef \check\varphi_0, \dots, \check\varphi_{n-1} \mid q \br[-.2em]
  \ltag{pre-resolutive configuration} \hat\aK \sdef \varphi_0, \dots, \varphi_{n-1} \mid q
\end{gather*}
The elementary formulas in a resolutive configuration can be understood as a model of a \emph{call stack}.
\(q\) is a pattern that represents the \emph{returned value}.
This idea is later formalized in \cref{index:appx-proof-aos-chc}.

\(\aK \to_{(\cPhi,\cXi)} \aK'\)
(\(\aK\) can change into \(\aK'\) by one step of SLDC resolution on \((\cPhi,\cXi)\))
is defined by the following non-deterministic transformation from \(\aK\) to \(\aK'\).
\begin{enumerate}
  \item
  The `stack' part of \(\aK\) should be non-empty. Let \(\aK = f(p_0,\dots,p_{m-1}),\check\varphi_1, \dots,\allowbreak \check\varphi_n \mid q\).

  Take from \(\cPhi\) any CHC that unifies with the head of the stack of \(\aK\).
  That is, \(\cPhi\) is of form
  \(\forall x_0\colon \sigma_0, \dots, x_{l-1}\colon \sigma_{l-1}.\ f(p'_0,\dots,p'_{m-1}) \!\impliedby\! \psi_0 \!\land\! \cdots \!\land\! \psi_{k-1}\)
  and \(p'_0,\dots,p'_{m-1}\) unify with \(p_0,\dots,p_{m-1}\).
  Let us take the most \emph{general} unifier \((\theta, \theta')\) such that \(p_0\theta = p'_0\theta', \dots, p_{m-1}\theta = p'_{m-1}\theta'\) hold.
  Here, \(\theta\) maps variables to patterns.

  Now we have a pre-resolutive configuration \(\hat\aK = \psi'_0, \dots, \psi'_{k-1}, \check\varphi'_1, \dots, \check\varphi'_n \mid q'\), where \(\psi'_i \defeq \psi_i\theta'\), \(\check\varphi'_j \defeq \check\varphi_j\theta\) and \(q' \defeq q\theta\).

  \item
  We `calculate' \(\hat\aK\) into a resolutive configuration.
  That is, we repeat the following operations to update \((\hat\aK\) until \(\psi'_0, \dots, \psi'_{k-1}\) all become elementary.
  \(\aK'\) is set to the final version of \(\hat\aK\).
  \begin{itemize}
    \item
    We substitute variables conservatively until there do not remain terms of form \(*x,\, \0x,\, x.i,\, x \,\op\, t / t \,\op\, x\);
    for each case, we replace \(x\) with \(\angled{x_*}\)/\(\angled{x_*, x_\0}\) (depending on the sort), \(\angled{x_*, x_\0}\), \((x_0,x_1)\), \(n\), taking fresh variables.

    \item
    We replace each \(*\angled{t_*}/*\angled{t_*,t_\0},\, \0\angled{t_*,t_\0},\, (t_0,t_1).i,\, n \,\op\, n'\) with \(t_*,\, t_\0,\, t_i,\,\allowbreak n \,\Bracked{\op}\, n'\).

    \item
    If there exists a variable \(x\) that occurs only once in the pre-resolutive configuration \(\hat\aK\), then replace it with any value of the suitable sort.\footnote{%
      We use this peculiar rule to handle the `\(\Let *y = \rand()\)' instruction later for \cref{lemma:bisim-aos-chc}.
    }
  \end{itemize}
\end{enumerate}
We have carefully designed SLDC resolution to match it with abstract operational semantics, which assists the proof of \cref{theorem:aos-chc-equivalent}.

\begin{lemma}[Completeness of SLDC Resolution]\label{lemma:sldc-complete}
  For any \((\cPhi,\cXi)\) and \(f \in \dom \cXi\),
  the following are equivalent for any values \(v_0,\dots,v_{n-1},w\) of the appropriate sorts.
  \begin{enumerate}
    \item
    \(\cM^\least_{(\cPhi,\cXi)}(f)(v_0,\dots,v_{n-1},w)\) holds.
    \item
    There exists a sequence \(\aK_0, \dots, \aK_N\) such that
    \(\aK_0 = f(v_0, \dots, v_{n-1}, r) \mid r\),
    \(\aK_N = \ \mid p\),
    \(\aK_0 \to_{(\cPhi,\cXi)} \cdots \to_{(\cPhi,\cXi)} \aK_N\) and
    \(p\) can be refined into \(w\) by instantiating variables.
  \end{enumerate}
\end{lemma}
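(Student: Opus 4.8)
The plan is to prove the two implications separately, organizing everything around the fixpoint characterization of the least model. Recall that $\cM^\least_{(\cPhi,\cXi)}$ is the least fixpoint of the immediate-consequence operator of $(\cPhi,\cXi)$, so that $\cM^\least_{(\cPhi,\cXi)}(f)(v_0,\dots,v_{n-1},w)$ holds exactly when there is a finite \emph{derivation tree} whose root is the ground atom $f(v_0,\dots,v_{n-1},w)$, whose internal nodes are ground instances of CHCs in $\cPhi$ (head at the node, body atoms at the children, each body atom $g(t_0,\dots)$ evaluated to a ground atom $g(\Bracked{t_0},\dots)$ via $\Bracked{\cdot}$), and whose leaves are instances of CHCs with body $\top$. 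I would first record this equivalence as a preliminary fact, since the proof is essentially about turning SLDC derivations into such trees and back. Soundness of SLDC will hold against every model of $(\cPhi,\cXi)$; only completeness needs leastness.

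\textbf{From (2) to (1).} I would prove this by backward induction along the derivation $\aK_0 \to_{(\cPhi,\cXi)} \cdots \to_{(\cPhi,\cXi)} \aK_N$, using the one-step soundness claim: if $\aK \to_{(\cPhi,\cXi)} \aK'$, then for every model $\cM$ of $(\cPhi,\cXi)$ and every valuation $\cI'$ satisfying every elementary formula in $\aK'$, there is a valuation $\cI$ satisfying every elementary formula in $\aK$ with $\Bracked{q}_\cI = \Bracked{q'}_{\cI'}$ for the returned patterns $q,q'$. For the CHC-application part of a step this is exactly the fact that $\cM$ satisfies the chosen universally quantified CHC, instantiated along the most general unifier (one composes the surviving valuation with $\theta,\theta'$); for the calculation phase it is routine: substituting a variable by a structured term $\angled{x_*}$, $(x_0,x_1)$, etc.\ is a refinement, replacing $*\angled{t_*}$, $\0\angled{t_*,t_\0}$, $(t_0,t_1).i$, $n\,\op\,n'$ by $t_*$, $t_\0$, $t_i$, $n\,\Bracked{\op}\,n'$ preserves the value of every term by definition of $\Bracked{\cdot}$, and replacing a variable occurring only once by a value is again a further instantiation. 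Starting from $\aK_N = \ \mid p$, which any valuation sending $p$ to $w$ satisfies (possible since $p$ refines to $w$), I chain backwards to obtain $\cM^\least_{(\cPhi,\cXi)},\cI_0 \models f(v_0,\dots,v_{n-1},r)$ with $\Bracked{r}_{\cI_0}=w$, i.e.\ $\cM^\least_{(\cPhi,\cXi)}(f)(v_0,\dots,v_{n-1},w)$.

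\textbf{From (1) to (2).} Here I would induct on the height of a derivation tree for $f(v_0,\dots,v_{n-1},w)$. The root uses some CHC $\forall \vec x.\ f(p'_0,\dots,p'_{n-1}) \impliedby \psi_0 \land \cdots \land \psi_{k-1}$ with a grounding $\rho$ such that $p'_i\rho = v_i$, the last head component evaluates to $w$, and each $\psi_j\rho$ evaluates to a ground atom $g_j(\vec w_j)$ admitting a strictly shorter tree. Starting SLDC from $\aK_0 = f(v_0,\dots,v_{n-1},r) \mid r$ (the $v_i$ are themselves patterns), I apply this CHC: since $\rho$ witnesses a common instance, $f(v_0,\dots,v_{n-1},r)$ and the head unify, so the most general unifier exists and is at least as general as $\rho$. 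I then resolve the nondeterminism of the calculation phase by following $\rho$ throughout — each variable-to-structured-term replacement and each use of the rule replacing a once-occurring variable by an arbitrary value is made so as to reproduce the datum prescribed by $\rho$ — leaving the stack $\check\psi_0,\dots,\check\psi_{k-1} \mid p$ with each $\check\psi_j$ refinable to $g_j(\vec w_j)$. Since each SLDC step rewrites only the head of the stack, I may splice in, one at a time, the sub-derivations obtained by the induction hypothesis for $\check\psi_0$, then $\check\psi_1$, and so on (each leaving the remaining entries on the stack, consistently further instantiated along $\rho$), concatenating into a derivation reaching $\ \mid p$ with $p$ refinable to $w$. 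The base case is a CHC with body $\top$, where one step empties the stack.

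\textbf{Main obstacle.} The delicate point is the calculation phase, and especially the rule replacing a variable that occurs only once by an arbitrary value (introduced to model $\rand()$): for completeness I must argue that this nondeterminism can always be steered to agree with the grounding $\rho$, and for soundness that every resolution of it is sound, all while tracking the refinement relation between the returned pattern and $w$ consistently through both the micro-operations and the CHC-application macro-step. Getting the inductive invariant exactly right — something like ``there is a substitution $\rho$ making $\aK\rho$ ground and $\cM^\least_{(\cPhi,\cXi)}$-satisfied while refining the relevant data'' — so that it survives both levels at once, together with checking termination and adequate confluence of the conservative-substitution loop (no occurrences of $*x$, $\0x$, $x.i$, $x\,\op\,t$ remain), is where the real work lies; the rest is unifier bookkeeping.
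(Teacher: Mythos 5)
Your proposal is correct and takes essentially the same route as the paper, whose entire proof of this lemma is the one-liner ``Clear by thinking of derivation trees (which can be defined in a natural manner) on CHC system \((\cPhi,\cXi)\)'': your fixpoint/derivation-tree characterization of the least model, the backward induction for (2)\(\Rightarrow\)(1), and the induction on tree height with splicing of sub-derivations for (1)\(\Rightarrow\)(2) are a faithful elaboration of exactly that argument. You also correctly identify the only genuinely delicate points the paper glosses over, namely steering the nondeterminism of the calculation phase (in particular the once-occurring-variable rule used for \(\rand()\)) to agree with the chosen grounding.
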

\begin{proof}
  Clear by thinking of derivation trees (which can be defined in a natural manner) on CHC system \((\cPhi,\cXi)\).
\qed\end{proof}

\subsection{Equivalence of the AOS-based Model and the CHC Representation}
\label{index:appx-proof-aos-chc}

We first show a bisimulation between abstract operational semantics and SLDC resolution (\cref{lemma:bisim-aos-chc}).
Using the bisimulation, we can easily show the equivalence of the AOS-based model and (the least model of) the CHC representation.

\Subsubsection{Bisimulation Lemma}

Interestingly, there is a \emph{bisimulation} between the transition system of abstract operational semantics and the process of SLDC resolution.

\(\aF \leadsto^\theta_{f,L,\ab{r}} \check\varphi\)
(the abstract stack frame \(\aF\) can be translated into the elementary formula \(\check\varphi\), under \(\theta\), \(f\), \(L\) and \(\ab{r}\)) is defined as follows.
Here, \(\theta\) maps abstract variables to (normal) variables.
\(\hat v\theta\) is the value made from \(\hat v\) by replacing each \(\ab{x}\) with \(\theta(\ab{x})\).
\(\ab{r}\) is the abstract variable for taking the result.
\begingroup\small
\begin{gather*}
  \frac{
    \text{the items of \(\aF\) are enumerated as \((x_0, \hat v_0), \dots, (x_{n-1}, \hat v_{n-1})\)}
  }{
    \aF \leadsto^\theta_{f,L,\ab{r}} f_L(\hat v_0\theta_0,\dots,\hat v_{n-1}\theta,\ab{r}\theta)
  }
\end{gather*}
\endgroup

Now, \(\aC \leadsto_\varPi \aK\) is defined as follows.
\begingroup\small
\begin{gather*}
  \frac{\begin{gathered}
    \safe_\varPi(\aC) \quad
    \aC = [f_0,L_0]_{\cTheta_0}\, \aF_0;\,
      [f_1,L_1]_{\cTheta_1}\,x_1,\aF_1;\, \cdots;\,
      [f_n,L_n]_{\cTheta_n}\,x_n,\aF_n \mid_\cA \\[-.2em]
    \text{\(\ab{r_0},\dots,\ab{r_n}\) are fresh in \(\aC\)} \\[-.4em]
    \aF_0 \leadsto^\theta_{f_0,L_0,\ab{r_0}} \check\varphi_0 \quad
    \text{for any}\ i \!\in\! [n],\
    \aF_{i+1} \!+\! \{(x_{i+1},\ab{r_i})\} \leadsto^\theta_{f_{i+1},L_{i+1},\ab{r_{i+1}}} \check\varphi_{i+1}
  \end{gathered}}{
    \aC \leadsto_\varPi \check\varphi_0, \dots, \check\varphi_n \mid \theta(\ab{r_n})
  }
\end{gather*}
\endgroup

\begin{lemma}[Bisimulation between Abstract Operational Semantics and SLDC Resolution]\label{lemma:bisim-aos-chc}
  Take any \(\varPi\), \(\aC\) and \(\aK\) satisfying \(\aC \leadsto_\varPi \aK\).

  For any \(\aC'\) satisfying \(\aC \to_\varPi \aC'\),
  there exists some \(\aK'\) satisfying \(\aK \to_{\Parened{\varPi}} \aK'\) and \(\aC' \leadsto_\varPi \aK'\).
  Likewise, for any \(\aK'\) satisfying \(\aK \to_{\Parened{\varPi}} \aK'\),
  there exists some \(\aC'\) satisfying \(\aC \to_\varPi \aC'\) and \(\aC' \leadsto_\varPi \aK'\).
\end{lemma}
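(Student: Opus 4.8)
The plan is to prove \cref{lemma:bisim-aos-chc} by a direct case analysis on the kind of transition step, establishing both directions simultaneously. First I would fix $\varPi$, $\aC$, $\aK$ with $\aC \leadsto_\varPi \aK$, and unpack the definition of $\leadsto_\varPi$: this gives a decomposition of $\aC$ into stack frames $[f_i,L_i]_{\cTheta_i}\,\aF_i$, fresh abstract variables $\ab{r_0},\dots,\ab{r_n}$, a substitution $\theta$ from abstract variables to normal variables, and elementary formulas $\check\varphi_i$ with $\aF_i \leadsto^\theta_{f_i,L_i,\ab{r_i}} \check\varphi_i$ (with the receiver variable patched in for $i\ge 1$), so that $\aK = \check\varphi_0,\dots,\check\varphi_n \mid \theta(\ab{r_n})$. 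I would also record that $\safe_\varPi(\aC)$ holds, which (by the proof of the progression property) guarantees that abstract variables occur in active pre-values only in the tail position of a $\mut$-pair $\angled{\hat v, \ab{x}}$; this is exactly what makes the $\leadsto^\theta$ translation well-behaved.

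Next I would proceed by induction on (equivalently, case analysis over) the statement $S_{\varPi,f_0,L_0}$ at the top of the stack, since both $\to_\varPi$ on abstract configurations and $\to_{\Parened{\varPi}}$ on resolutive configurations are driven by this statement. For each instruction form one checks that the abstract-semantics rule and the CHC-resolution step from $\Parened{L_0\colon S}_{\varPi,f_0}$ ``mirror'' each other. The cleanest way to organize this: the CHC $\Parened{L_0\colon S}_{\varPi,f_0}$ has head $\check\varphi_{\varPi,f_0,L_0}$ and its body is $\check\varphi_{\varPi,f_0,L'}$ (possibly with a substitution $[\cdots]$ applied, or — for a function call — preceded by $g_\entry(\dots)$); when SLDC resolution unifies this CHC's head with $\check\varphi_0 = f_{0,L_0}(\hat v_0\theta,\dots,\ab{r_0}\theta)$, the most general unifier $(\theta_{\mathrm{sub}},\theta'_{\mathrm{sub}})$ realizes exactly the pattern matching that the abstract-semantics rule performs on $\aF_0$ (e.g. splitting $\aF_0(x)$ into $\angled{\hat v_*,\ab{x_\0}}$ for $\mutbor$, or destructing a pair, or reading an $\inj_i$). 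After the ``calculate'' phase of SLDC resolution eliminates the introduced $*$-, $\0$-, $.i$- and $\op$-terms, the resulting resolutive configuration $\aK'$ matches $\aC' \leadsto_\varPi \aK'$ after updating $\theta$ to send the newly-introduced abstract variables to the corresponding fresh normal variables. The peculiar SLDC rule allowing a once-occurring variable to be replaced by any value is precisely what handles $\Let *y = \rand()$; the footnote in the paper already flags this correspondence. For the reverse direction one reads the same correspondence backwards: given a resolution step, the chosen CHC determines which $L_0\colon S$ was resolved, and the unifier determines a compatible abstract transition.

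The main obstacle, and where the bulk of the work lies, is the $\mut$-related instructions — mutable (re)borrow, release of a mutable reference (\drop), \immut, swap, and especially the various cases of $\Let y = *x$ where $x$ has a nested reference type ($\mut_\alpha\own$, $\mut_\alpha\immut_\beta$, $\mut_\alpha\mut_\beta$), plus subdivision via $\match$ and pair-destruction on a mutable reference. These are the cases where an abstract variable $\ab{x_\0}$ gets substituted away (via $\aC[\hat v/\ab{x_\0}]$) or ``subdivided'' into new abstract variables, and on the CHC side this is reflected by the substitution $[\angled{\fresh{x_*},\fresh{x_*}}/x]$ in the head (for \drop), or the more elaborate substitutions in the dereference rules. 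I would need to verify carefully that global substitution of an abstract variable in $\aC$ commutes with the $\leadsto^\theta$ translation — i.e. that substituting $\hat v_*$ for $\ab{x_\0}$ throughout all stack frames corresponds, on the resolution side, to the most general unifier identifying the relevant variable with $\hat v_*\theta$ — and that $\safe_\varPi(\aC)$ (preserved by \cref{lemma:preserve-aos}) ensures the abstract variable being eliminated occurs in exactly the two expected places, so no information is lost or duplicated. Everything else (arithmetic, constants, $\inj$, pair construction, $\intro$/$\now$/$\alpha\le\beta$, type weakening, function call and return) is routine bookkeeping once the bisimulation invariant is stated precisely.
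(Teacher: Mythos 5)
Your proposal is correct and matches the intended argument: the paper's own proof of this lemma is literally just ``Straightforward,'' and the case-by-case matching of each abstract-semantics rule against the corresponding CHC from \(\Parened{L\colon S}_{\varPi,f}\) (with the mgu realizing the pattern destructuring, the calculate phase handling \(*\), \(\0\), \(.i\), \(\op\), and the once-occurring-variable rule handling \(\rand\)) is exactly the correspondence the authors designed SLDC resolution to exhibit. Your identification of the delicate points — commutation of global abstract-variable substitution with \(\leadsto^\theta\), and reliance on \(\safe_\varPi(\aC)\) to ensure each eliminated abstract variable occurs in exactly the two expected places — is precisely the content the paper leaves implicit.
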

\begin{proof}
  Straightforward.
\qed\end{proof}

\Subsubsection{AOS-based Model and the Equivalence Theorem}

Take any \(\varPi\) and simple \(f\).
The \emph{AOS-based model} (AOS stands for abstract operational semantics) for \(f\), denoted by \(f^\AOS\), is the predicate defined by the following rule.
\begingroup\small
\begin{gather*}
  \frac{\begin{gathered}
    \aC_0 \to_\varPi \cdots \to_\varPi \aC_N \quad
    \final_\varPi(\aC_N) \quad
    \safe_\varPi(\aC_0) \\[-.3em]
    \aC_0 = [f,\entry]_\emp\, \{(x_i,v_i) \!\mid\! i \!\in\! [n]\} \mid_{(\!\emp,\emp\!)} \quad
    \aC_N = [f,L']_\emp\, \{(y,w)\} \mid_{(\!\emp,\emp\!)}
  \end{gathered}}{
    f^\AOS_\varPi(v_0,\dots,v_{n-1},w)
  }
\end{gather*}
\endgroup

Now we can prove the following theorem.

\begin{theorem}[Equivalence of the AOS-based Model and the CHC Representation]
\label{theorem:aos-chc-equivalent}
  For any \(\varPi\) and simple \(f\) in \(\varPi\),
  \(f^\AOS_\varPi\) is equivalent to \(\cM_{\Parened{\varPi}}(f_\entry)\).
\end{theorem}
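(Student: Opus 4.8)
The plan is to obtain the equivalence by composing two correspondences: between \emph{abstract operational semantics} and \emph{SLDC resolution}, supplied by the bisimulation of \cref{lemma:bisim-aos-chc}, and between SLDC resolution and the least model $\cM^\least_{\Parened{\varPi}}$, supplied by \cref{lemma:sldc-complete}. Fix a simple $f$ with input/output types $T_0,\dots,T_{n-1},U$ and values $v_0,\dots,v_{n-1},w$ of the sorts $\Parened{T_0},\dots,\Parened{T_{n-1}},\Parened{U}$. I would show that a terminating abstract run witnessing $f^\AOS_\varPi(v_0,\dots,v_{n-1},w)$ exists if and only if there is an SLDC resolution sequence $\aK_0 \to_{\Parened{\varPi}} \cdots \to_{\Parened{\varPi}} \aK_M$ with $\aK_0 = f_\entry(v_0,\dots,v_{n-1},r)\mid r$, $\aK_M = {}\mid p$, and $p$ refinable into $w$; by \cref{lemma:sldc-complete} the latter holds exactly when $\cM^\least_{\Parened{\varPi}}(f_\entry)(v_0,\dots,v_{n-1},w)$, which is the claim.

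First I would pin down the two endpoints of the correspondence. Since $f$ is simple, its input/output types contain no references, so $v_0,\dots,v_{n-1},w$ contain no abstract variables; hence the initial configuration $\aC_0$ has an empty abstract variable summary and empty lifetime contexts, so $\safe_\varPi(\aC_0)$ holds, and unfolding $\leadsto_\varPi$ gives $\aC_0 \leadsto_\varPi \aK_0$ with $\aK_0 = f_\entry(v_0,\dots,v_{n-1},r)\mid r$ for fresh $r$. Dually, for a final configuration $\aC_N = [f,L']_\emp\,\{(y,w)\}\mid_{(\emp,\emp)}$ with $S_{\varPi,f,L'}=\return y$, the relation $\leadsto_\varPi$ forces $\aC_N \leadsto_\varPi (f_{L'}(w,r)\mid r)$, and the only clause of $\Parened{\varPi}$ whose head predicate is $f_{L'}$ is the return clause $f_{L'}(y,y)\impliedby\top$, so this resolutive configuration admits exactly one step, to ${}\mid w$. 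Conversely, every terminating SLDC run must end with an application of a return clause, since every other clause leaves a nonempty stack.

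For the interior I would iterate \cref{lemma:bisim-aos-chc}. Given $\aC_0 \to_\varPi \cdots \to_\varPi \aC_N$, preservation of safety along $\to_\varPi$ (\cref{lemma:preserve-aos}) keeps the hypotheses of $\leadsto_\varPi$ in force throughout, so the forward half of the bisimulation produces $\aK_0 \to_{\Parened{\varPi}} \cdots \to_{\Parened{\varPi}} \aK_N$ with $\aC_i \leadsto_\varPi \aK_i$; appending the forced return step yields the required resolution sequence. Conversely, stripping the final return step from a resolution sequence and applying the backward half of the bisimulation reconstructs a terminating abstract run with the right initial and final configurations; the reconstruction of the final configuration relies on the return typing rule, which, because $A_{\ex\,\varPi,f}=\emp$ for simple $f$, forces the ambient lifetime contexts to be empty. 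One point to handle with care is that SLDC leaves some variables free (the ``occurs once'' rule used for $\rand$), so $p$ need only refine into $w$; the abstract run simply instantiates those variables to the integers chosen by the $\Let *y = \rand()$ transitions, and any instantiation witnessing the refinement yields a valid abstract run, which is exactly what \cref{lemma:sldc-complete} asks for.

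I expect the main obstacle to be the bookkeeping at the endpoints: relating the non-transitional ``$\final_\varPi$'' terminal of the abstract semantics to the transitional final SLDC step that applies a return clause, and checking that no spurious resolution sequences slip in. Once that analysis is nailed down, the two implications follow by a routine induction on run length, together with an invocation of \cref{lemma:sldc-complete}.
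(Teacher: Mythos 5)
Your proposal follows exactly the paper's route: the paper's own proof of this theorem is a one-line appeal to the completeness of SLDC resolution (\cref{lemma:sldc-complete}) combined with the bisimulation between abstract operational semantics and SLDC resolution (\cref{lemma:bisim-aos-chc}), which is precisely the composition you carry out. Your additional bookkeeping at the endpoints (the initial resolutive configuration, the forced final return step, and the free variables left by the $\rand$ rule) is a correct elaboration of details the paper leaves implicit.
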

\begin{proof}
  Clear from completeness of SLDC resolution (\cref{lemma:sldc-complete}) and the bisimulation between abstract operational semantics and SLDC resolution (\cref{lemma:bisim-aos-chc}).
\qed\end{proof}

\subsection{Bisimulation between Concrete and Abstract Operational Semantics}

Extending `\(\safe_\cH(\cF\Colon \cGamma \mid \aF)\)' introduced in \cref{index:chc-correct}, we define the \emph{safe readout} `\(\safe_\varPi(\cC \mid \aC)\)' of an abstract configuration from a concrete configuration.
Interestingly, the safe readout is a \emph{bisimulation} between concrete and abstract operational semantics (\cref{lemma:bisim-cos-aos}).
We also establish \emph{progression and preservation} regarding the safe readout, as a part of soundness of COR's type system in terms of \emph{concrete operational semantics}, extending the soundness shown for abstract operational semantics in \cref{index:appx-proof-aos-safe}.

\Subsubsection{Auxiliary Notions}

An \emph{extended abstract variable summary} \(\hat\aX\) is a finite multiset of items of form `\(\give_\alpha(*a; \ab{x} \Colon\allowbreak T)\)' or `\(\take^\alpha(*a; \ab{x} \Colon T)\)', where \(a\) is an address.
An \emph{extended access mode} \(\hat D\) is of form either `\(\hot\)' or `\(\cold_\alpha\)'.
An \emph{extended memory footprint} \(\hat\aM\) is a finite multiset of items of form `\(\hot^\ac(a)\)' or `\(\cold_\alpha(a)\)', where \(a\) is an address.

\Subsubsection{Readout}

First, `\(\readout_{\cH, \hat D}^\ac(a\Colon T \mid \hat v;\, \hat\aX, \hat\aM)\)'
and `\(\readout_{\cH, \hat D}^\ac(*a\Colon T \mid \hat v;\, \hat\aX, \hat\aM)\)'
(the pointer of the address \(a\) / the data at \(a\), typed \(T\), can be read out from the heap \(\cH\) as a pre-value \(\hat v\), yielding an extended abstract variable summary \(\hat\aX\) and an extended memory footprint \(\hat\aM\), under the extended access mode \(\hat D\) and the activeness \(\ac\))
are defined by the following rules.
\begingroup\small
\begin{gather*}
  \frac{\begin{gathered}
    \readout_{\cH, \hat D \circ \check P}^\ac(*a\Colon T \mid \hat v;\, \hat\aX, \hat\aM)
  \end{gathered}}{
    \readout_{\cH, \hat D}^\ac\bigl(
      a\Colon \check P\, T \bigm| \angled{\hat v};\,
      \hat\aX, \hat\aM
    \bigr)
  } \br[-.1em]
  \dbox{\(
    \hat D \circ \own \defeq \hat D \quad
    \hot \circ \immut_\beta \defeq \cold_\beta \quad
    \cold_\alpha \circ \immut_\beta \defeq \cold_\alpha
  \)} \br[.1em]
  \frac{
    \readout_{\cH,\hot}^\ac(*a\Colon T \mid \hat v;\, \hat\aX, \hat\aM)
  }{
    \readout_{\cH,\hot}^\ac\bigl(
      a\Colon \mut_\beta T \bigm| \angled{\hat v, \ab{x}};\,
      \hat\aX \!\oplus\! \Braced{\give_\beta(*a; \ab{x}\Colon T)},\,
      \hat\aM
    \bigr)
  } \br[.1em]
  \frac{
    \readout_{\cH,\cold_\beta}^\ac(*a\Colon T \mid \hat v;\, \hat\aX, \hat\aM)
  }{
    \readout_{\cH,\cold_\beta}^\ac\bigl(
      a\Colon \mut_\beta T \bigm| \angled{\hat v, \hat w};\,
      \hat\aX, \hat\aM
    \bigr)
  } \br[.1em]
  \readout_{\cH, \hat D}^{\dagger\alpha}(*a\Colon T \mid \ab{x};\, \Braced{\take^\alpha(*a; \ab{x}\Colon T)}, \emp) \br[.0em]
  \frac{
    \cH(a) = a' \quad
    \readout_{\cH, \hat D}^\ac(a'\Colon P\, T \mid \hat v;\, \hat\aX, \hat\aM)
  }{
    \readout_{\cH, \hat D}^\ac(*a\Colon P\, T \mid \hat v;\, \hat\aX, \hat\aM \!\oplus\! \Braced{{\hat D}^\ac(a)})
  } \br[-.1em]
  \dbox{\(
    \hat D^\ac(a) \defeq \begin{cases}
      \hot^\ac(a) & (\hat D = \hot) \\[-.3em]
      \cold_\beta(a) & (\hat D = \cold_\beta)
    \end{cases}
  \)} \br[.1em]
  \frac{
    \readout_{\cH, \hat D}^\ac(*a\Colon T[\mu X.T/X] \mid \hat v;\, \hat\aX, \hat\aM)
  }{
    \readout_{\cH, \hat D}^\ac(*a\Colon \mu X.T \mid \hat v;\, \hat\aX, \hat\aM)
  } \br[.0em]
  \frac{
    \cH(a) = n
  }{
    \readout_{\cH, \hat D}^\ac(*a\Colon \Int \mid n;\, \emp, \Braced{{\hat D}^\ac(a)})
  } \quad
  \readout_{\cH, \hat D}^\ac(*a\Colon \unit \mid ();\, \emp,\emp) \br[.0em]
  \frac{\begin{gathered}
    \cH(a) = i \in [2] \quad
    \readout_{\cH, \hat D}^\ac(*(a \!+\! 1)\Colon T_i \mid \hat v;\, \hat\aX, \hat\aM) \quad
    n_0 = (\#T_{1 \!-\! i} \!-\! \#T_i)_{\ge 0} \\[-.4em]
    \text{for any}\ k \in [n_0],\
    \cH(a \!+\! 1 \!+\! \#T_i \!+\! k) = 0 \quad
    \hat\aM_0 = \Braced{{\hat D}^\ac(a \!+\! 1 \!+\! \#T_i \!+\! k) \mid k \in [n_0]}
  \end{gathered}}{\begin{aligned}
    & \readout_{\cH, \hat D}^\ac\bigl(*a\Colon T_0 \!+\! T_1 \bigm| \inj_i \hat v;\, \hat\aX, \hat\aM \!\oplus\! \Braced{{\hat D}^\ac(a)} \!\oplus\!\hat\aM_0\bigr)
  \end{aligned}} \br[.0em]
  \frac{
    \readout_{\cH, \hat D}^\ac\bigl(
      *a\Colon T_0 \bigm| \hat v_0;\, \hat\aX_0, \hat\aM_0
    \bigr) \quad
    \readout_{\cH, \hat D}^\ac\bigl(
      *(a+\#T_0)\Colon T_1 \bigm| \hat v_1;\, \hat\aX_1, \hat\aM_1
    \bigr)
  }{
    \readout_{\cH, \hat D}^\ac\bigl(
      *a\Colon T_0 \!\times\! T_1 \bigm| (\hat v_0, \hat v_1);\,
      \hat\aX_0 \!\oplus\!\hat\aX_1,\, \hat\aM_0 \!\oplus\!\hat\aM_1
    \bigr)
  }
\end{gather*}
\endgroup

Next, `\(\readout_{\cH,\cTheta}(\cF\Colon \cGamma \mid \aF;\, \hat\aX, \hat\aM)\)'
(the stack frame \(\cF\) respecting the variable context \(\cGamma\) can be read out from \(\cH\) as an abstract stack frame \(\aF\), yielding \(\hat\aX\) and \(\hat\aM\), under the lifetime parameter context \(\cTheta\))
is defined as follows.
\begingroup\small
\begin{gather*}
  \frac{
    \dom \cF = \dom \cGamma \quad
    \text{for any}\ x\colonu\ac T \in \cGamma,\
    \readout_{\cH,\hot}^\ac(\cF(x)\Colon T\cTheta \mid \hat v_x;\, \hat\aX_x, \hat\aM_x)
  }{
    \readout_{\cH,\cTheta}\bigl(
      \cF\Colon \cGamma \bigm| \{(x, \hat v_x) \mid x \in \dom \cGamma\};\,
      \bigoplus_{x \in \dom \cGamma} \hat\aX_x,\,
      \bigoplus_{x \in \dom \cGamma} \hat\aM_x
    \bigr)
  }
\end{gather*}
\endgroup

Finally, `\(\readout_\varPi(\cC \mid \aC;\, \hat\aX, \hat\aM)\)'
(the data of the concrete configuration \(\cC\) can be read out as the abstract configuration \(\aC\), yielding \(\hat\aX\) and \(\hat\aM\), under the program \(\varPi\))
is defined as follows.
\begingroup\small
\begin{gather*}
  \frac{
    \text{for any}\ i \in [n \!+\! 1],\ \,
    \readout_{\cH,\cTheta_i}(\cF_i\Colon \cGamma_{\varPi,f_i,L_i} \mid \aF_i;\, \hat\aX_i, \hat\aM_i)
  }{\begin{aligned}
    & \readout_\varPi\bigl(\,
      [f_0,L_0]\, \cF_0;\,
      [f_1,L_1]\, x_1,\cF_1;\, \cdots;\,
      [f_n,L_n]\, x_n,\cF_n \mid \cH \\[-.5em]
    & \textstyle
      \ \ \bigm|\
      [f_0,L_0]_{\cTheta_0}\, \aF_0;\,
      [f_1,L_1]_{\cTheta_1}\, x_1,\aF_1;\, \cdots;\,
      [f_n,L_n]_{\cTheta_n}\, x_n,\aF_n
      \mid_\cA;\
      \bigoplus_{i=0}^n \hat\aX_i, \bigoplus_{i=0}^n \hat\aM_i \,\bigr)
  \end{aligned}}
\end{gather*}
\endgroup

\Subsubsection{Safety}
We define the safety on extended abstract variable summaries and extended memory footprints.

`\(\safe_\cA(\ab{x}, \hat\aX)\)' is defined as follows.
\begingroup\small
\begin{gather*}
  \frac{
    \hat\aX(\ab{x}) = \Braced{\give_\alpha(*a; \ab{x}\Colon T),\, \take^\beta(*a; \ab{x}\Colon T')} \quad
    T \sim_\cA T' \quad
    \alpha \le_\cA \beta
  }{
    \safe_\cA(\ab{x}, \hat\aX)
  } \quad
  \frac{
    \hat\aX(\ab{x}) = \emp
  }{
    \safe_\cA(\ab{x}, \hat\aX)
  } \br[-.1em]
  \text{
    \(\hat\aX(\ab{x})\): the multiset of items of form `\(\give_\gamma(*b; \ab{x} \Colon U)\)'/`\(\take^\gamma(*b; \ab{x} \Colon U)\)' in \(\hat\aX\)
  }
\end{gather*}
\endgroup
`\(\safe_\cA(\hat\aX)\)' means that \(\safe_\cA(\ab{x}, \hat\aX)\) holds for any \(\ab{x}\).

`\(\safe_\cA(a, \hat\aM)\)' is defined as follows.
\begingroup\small
\begin{gather*}
  \frac{
    \hat\aM(a) = \{\hot^\ac(a)\}
  }{
    \safe_\cA(a, \hat\aM)
  } \quad
  \frac{
    \hat\aM(a) = \emp
  }{
    \safe_\cA(a, \hat\aM)
  } \br[.1em]
  \frac{
    \hat\aM(a) = \Braced{\hot^{\dagger\alpha}(a), \cold_{\beta_0}(a), \dots, \cold_{\beta_{n-1}}(a)} \quad
    \text{for any}\ i \in [n],\
    \beta_i \le_\cA \alpha
  }{
    \safe_\cA(a, \hat\aM)
  } \br[-.1em]
  \text{
    \(\hat\aM(a)\): the multiset of items of form \(\hot^\ac(a)\)/\(\cold_\alpha(a)\) in \(\hat\aM\)
  }
\end{gather*}
\endgroup
`\(\safe_\cA(\hat\aM)\)' means that \(\safe_\cA(a, \hat\aM)\) holds for any address \(a\).

\Subsubsection{Safe Readout}

Finally, `\(\safe_\varPi(\cC \mid \aC)\)'
(the data of the concrete configuration \(\cC\) can be \emph{safely} read out as the abstract configuration \(\aC\) under \(\varPi\))
is defined as follows.
\begingroup\small
\begin{gather*}
  \frac{
    \readout_\varPi(\cC \mid \aC;\, \hat\aX, \hat\aM) \quad
    \lifetimeSafe_\varPi(\aC) \quad
    \aC = {\cdots} \mid_\cA \quad
    \safe_\cA(\hat\aX) \quad
    \safe_\cA(\hat\aM)
  }{
    \safe_\varPi(\cC \mid \aC)
  }
\end{gather*}
\endgroup
`\(\safe_\varPi(\cC)\)' means that \(\safe_\varPi(\cC \mid \aC)\) holds for some \(\aC\).

\begin{property}[Safety on a Concrete Configuration Ensures Progression]
  For any \(\varPi\) and \(\cC\)
  such that \(\safe_\varPi(\cC)\) holds and \(\final_\varPi(\cC)\) does not hold,
  there exists some \(\cC'\) satisfying \(\cC \to_\varPi \cC'\).
\end{property}
\begin{proof}
  Clear.
  One important guarantee the safety provides is that the data is stored in the heap in an expected form.
\qed\end{proof}

\begin{lemma}[Safe Readout Ensures Safety on the Abstract Configuration]
\label{lemma:trans-safe}
  For \(\varPi\), \(\cC\) and \(\aC\) such that \(\safe_\varPi(\cC \mid \aC)\) holds, \(\safe_\varPi(\aC)\) holds.
\end{lemma}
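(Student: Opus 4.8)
The plan is to read $\summary_\varPi(\aC\mid\aX)$ off the hypothesis $\readout_\varPi(\cC\mid\aC;\hat\aX,\hat\aM)$ by simply \emph{forgetting the heap-specific decorations}: the addresses carried by the items of $\hat\aX$, the lifetime subscript on a $\cold$ access mode, and the extended memory footprint $\hat\aM$ entirely. Concretely, I would introduce an erasure $\pi$ that deletes the address in each $\give_\alpha(*a;\ab x\Colon T)$ and each $\take^\alpha(*a;\ab x\Colon T)$ (producing $\give_\alpha(\ab x\Colon T)$ resp. $\take^\alpha(\ab x\Colon T)$), and that sends the extended access modes $\hot,\cold_\beta$ to the plain access modes $\hot,\cold$ of \cref{index:appx-proof-aos-safe}. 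Since $\pi$ acts item by item it commutes with multiset sum $\oplus$; the one small computation to record up front is the identity $\pi(\hat D\circ\check P)=\pi(\hat D)\cdot\check P$, obtained by checking the three clauses of $\circ$ against the two clauses of $\cdot$.

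The core step is a pointwise correspondence: for every derivable $\readout_{\cH,\hat D}^{\ac}(a\Colon T\mid\hat v;\hat\aX,\hat\aM)$ or $\readout_{\cH,\hat D}^{\ac}(*a\Colon T\mid\hat v;\hat\aX,\hat\aM)$ one has $\summary_{\pi(\hat D)}^{\ac}(\hat v\Colon T\mid\pi(\hat\aX))$. I would prove this by simultaneous induction on the read-out derivation, pairing each defining rule with its $\summary$ counterpart: the $\check P\,T$ rule with the $\angled{\hat v}$-rule of $\summary$ via the identity above; the two $\mut_\beta$ rules with the $\hot$- and $\cold$-rules of $\summary$ (the fresh item $\give_\beta(*a;\ab x\Colon T)$ mapping under $\pi$ to $\give_\beta(\ab x\Colon T)$); the frozen rule with $\summary^{\dagger\alpha}$; and the $\mu$-unfold, $\Int$, $\unit$, $\inj$ and pair rules with their literal analogues. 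The read-out rules that only append footprint items of the form $\hat D^{\ac}(a)$ — in particular the pointer-dereference rule with $\cH(a)=a'$ — leave $\hat v$ and $\hat\aX$ untouched, so the induction hypothesis passes straight through; $\hat\aM$ is never consulted on the $\summary$ side.

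Lifting this is routine. From $\readout_{\cH,\cTheta}(\cF\Colon\cGamma\mid\aF;\hat\aX,\hat\aM)$ I obtain $\dom\aF=\dom\cGamma$ and, by the pointwise claim applied at each $x\colonu{\ac}T\in\cGamma$, $\summary_{\hot}^{\ac}(\aF(x)\Colon T\cTheta\mid\pi(\hat\aX_x))$, hence $\summary_\cTheta(\aF\Colon\cGamma\mid\pi(\hat\aX))$; assembling over the stack layers of $\cC\mid\aC$ and using that $\pi$ commutes with $\oplus$ gives $\summary_\varPi(\aC\mid\aX)$ with $\aX:=\pi(\hat\aX)$. For safety, note $\aX(\ab x)=\pi(\hat\aX(\ab x))$ for each $\ab x$: if $\hat\aX(\ab x)=\emp$ then $\aX(\ab x)=\emp$, and otherwise $\safe_\cA(\hat\aX)$ forces $\hat\aX(\ab x)=\Braced{\give_\alpha(*a;\ab x\Colon T),\take^\beta(*a;\ab x\Colon T')}$ with $T\sim_\cA T'$ and $\alpha\le_\cA\beta$, so $\aX(\ab x)=\Braced{\give_\alpha(\ab x\Colon T),\take^\beta(\ab x\Colon T')}$ meets exactly the premises of the rule for $\safe_\cA(\ab x,\aX)$; thus $\safe_\cA(\aX)$. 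The remaining ingredients of $\safe_\varPi(\aC)$ — namely $\lifetimeSafe_\varPi(\aC)$ and the shape $\aC=\cdots\mid_\cA$ — are already part of the hypothesis $\safe_\varPi(\cC\mid\aC)$ (and $\safe_\cA(\hat\aM)$ is not needed here), so the rule defining $\safe_\varPi(\aC)$ applies.

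There is no conceptual obstacle; the one place demanding care will be the induction of the second paragraph — keeping the access-mode erasure synchronized with the two compositions $\circ$ (heap side) and $\cdot$ (abstract side) against pointer kinds, and treating the two mutually recursive read-out judgments (``pointer at $a$'' versus ``data at $a$'') uniformly so that the footprint-only rules provably do not perturb the invariant.
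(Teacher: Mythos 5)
Your proposal is correct and is exactly the argument the paper intends: its proof of this lemma is the one-liner ``by straightforward induction over the judgment deduction,'' noting that safety on an \emph{extended} abstract variable summary extends safety on an abstract variable summary, which is precisely your address-and-subscript erasure \(\pi\) together with the pointwise induction matching each \(\readout\) rule to its \(\summary\) counterpart. You have merely made explicit the details the paper leaves implicit (in particular the identity \(\pi(\hat D\circ\check P)=\pi(\hat D)\cdot\check P\) and the fact that the footprint-only rules do not disturb the summary), so nothing further is needed.
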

\begin{proof}
  By straightforward induction over the judgment deduction.
  Note that safety on a \emph{extended} abstract variable summary is in fact an extension of safety on an abstract variable summary.
\qed\end{proof}

\Subsubsection{Bisimulation Lemma}

The safe readout defined above is actually a \emph{bisimulation} between concrete and abstract operational semantics.

\begin{lemma}[Bisimulation between Concrete and Abstract Operational Semantics]
\label{lemma:bisim-cos-aos}
  Take any \(\varPi\), \(\cC\) and \(\aC\) satisfying \(\safe_\varPi(\cC \mid \aC)\).

  For any \(\cC'\) satisfying \(\cC \to_\varPi \cC'\),
  there exists \(\aC'\) satisfying \(\aC \to_\varPi \aC'\) and \(\safe_\varPi(\cC' \mid \aC')\).
  Likewise, for any \(\aC'\) satisfying \(\aC \to_\varPi \aC'\) holds,
  there exists \(\cC'\) satisfying \(\cC \to_\varPi \cC'\) and \(\safe_\varPi(\cC' \mid \aC')\).
\end{lemma}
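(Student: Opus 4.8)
The plan is to prove both simulation directions simultaneously by a case analysis on the statement $S_{\varPi,f,L}$ executed at the top stack frame of $\cC$ — which is also the top frame of $\aC$, since $\safe_\varPi(\cC\mid\aC)$ forces $\cC$ and $\aC$ to carry the identical call-stack skeleton $[f_0,L_0]\cdots[f_n,L_n]$. For each instruction I first invert the hypothesis: $\safe_\varPi(\cC\mid\aC)$ unfolds to $\readout_\varPi(\cC\mid\aC;\ \hat\aX,\hat\aM)$ together with $\lifetimeSafe_\varPi(\aC)$, $\safe_\cA(\hat\aX)$ and $\safe_\cA(\hat\aM)$, and $\readout_\varPi$ in turn decomposes, stack frame by stack frame and cell by cell, into instances of $\readout_{\cH,\hat D}^\ac$. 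In the forward direction I am handed $\cC\to_\varPi\cC'$ and must produce an $\aC'$ with $\aC\to_\varPi\aC'$ and $\safe_\varPi(\cC'\mid\aC')$; in the backward direction I am handed $\aC\to_\varPi\aC'$ and must produce a $\cC'$, which additionally needs the heap of $\cC$ to be laid out in the shape the abstract rule presupposes (a matched cell holding a tag in $[2]$, the cells touched by $\swap$ being present, and so on) — precisely the information the $\readout_\varPi$ premise records, in the same spirit as the progression property proved just above.

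The easy cases I expect to dispatch quickly. The ghost instructions $\intro\,\alpha$, $\now\,\alpha$, $\alpha\le\beta$ and $x\as T$ leave the heap and the pre-values untouched and only rearrange the lifetime data already handled by $\lifetimeSafe_\varPi$. The value-producing allocations ($\Let *y=\const$, $\Let *y=*x\op*x'$, pair construction, injection, and $\Let *y=\Copy*x$, the last using $T\colon\Copy$ so that every $\mut$ inside $T$ is guarded by an $\immut$ and no $\give$/$\take$ item is duplicated) are matched by extending the $\readout_\varPi$ derivation of $\cC$ with one further leaf that holds exactly the data stored by the abstract rule as a pre-value, leaving $\hat\aX$ unchanged. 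And $\Let *y=\rand()$ is handled by choosing, on whichever side drives the simulation, the integer picked on the other side.

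The real work is in the instructions that create, resolve or subdivide a borrow. For $\Let y=\mutbor_\alpha x$ the concrete rule binds $y$ to the very address $\cF(x)=a$ of the now-frozen $x$, so the readout of $\cC'$ produces a $\give_\alpha(*a;\ab{x_\0}\Colon T)$ item from $y$ and a $\take^\alpha(*a;\ab{x_\0}\Colon T)$ item from $x$ for a fresh $\ab{x_\0}$; choosing that same $\ab{x_\0}$ in the abstract rule lines the steps up, with $\safe_\cA(\ab{x_\0},\hat\aX')$ holding because $\alpha\le_\cA\alpha$ (for a reborrow the preexisting chain link is merely relabelled). For $\drop x$ and $\immut x$ on a $\mut_\alpha T$ variable $x=\angled{\hat v_*,\ab{x_\0}}$ the heap cell at $a$ already holds the data the reborrow chain has been writing, so resolving the prophecy via the substitution $[\hat v_*/\ab{x_\0}]$ keeps the readout consistent — one now reads that cell through the structural $\readout$ rules rather than through the ``frozen cell yields a fresh abstract variable'' rule — while the paired $\give_\alpha$/$\take^\beta$ items for $\ab{x_\0}$ vanish together, exactly as bookkept in \cref{lemma:preserve-aos}. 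Swap leaves $\hat\aX$ and $\hat\aM$ alone and just exchanges the $*$-components of two pre-values, mirroring the two cell-by-cell heap writes.

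I expect the main obstacle to be two intertwined difficulties in these borrow-sensitive cases. First, the $\readout$ judgment is deliberately non-deterministic on frozen cells — such a cell may be read either as a fresh abstract variable (with no footprint) or, once overlapping active readers are gone and its prophecy is resolved, through the structural rules — and matching a \emph{specific} abstract successor on one side with a readout-consistent successor on the other requires picking the right derivation, with $\safe_\cA(\hat\aM)$ (which forbids two $\hot$ items on one address) doing much of the pinning down. Second, dereferencing a mutable reference to a pointer ($\mut_\alpha\own T$, $\mut_\alpha\immut_\beta T$, $\mut_\alpha\mut_\beta T$) and the subdivisions $\match{*x}{\cdots}$ and $\Let\,(*y_0,*y_1)=*x$ on a $\mut_\alpha(\cdots)$ must split a single abstract variable into a freshly minted chain (for instance via $[\angled{\ab{x_{*\0}},\ab{x'_{*\0}}}/\ab{x_\0}]$ in the $\mut_\alpha\mut_\beta T$ case), and I must verify that the resulting $\give$/$\take$ pairs thread through the addresses $a,\cH(a),\dots$ so that every per-abstract-variable constraint ($T\sim_\cA T'$ together with $\alpha\le_\cA\beta$) is preserved. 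Both difficulties are local to a single transition and reduce to a careful induction over the $\readout_{\cH,\hat D}^\ac$ derivation, using that the extended abstract-variable summary faithfully extends the plain one (the observation underlying \cref{lemma:trans-safe}). Collecting all the cases yields the two simulation directions, and $\safe_\varPi(\cdot\mid\cdot)$ is shown to be preserved along each step in the course of the corresponding case.
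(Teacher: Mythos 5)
Your proposal is correct and follows essentially the same route as the paper's own (very terse) argument: a per-transition case analysis in which the readout derivation is rebuilt, with the extended memory footprint guaranteeing that heap operations stay local, the $\give$/$\take$ pairing of the extended abstract variable summary preserved exactly as in \cref{lemma:preserve-aos}, and swap leaving the footprint unchanged. The additional detail you give on the non-determinism of reading frozen cells and on subdividing a mutable reference is consistent with, and more explicit than, what the paper records.
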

\begin{proof}
  How to take \(\aC'\) according to \(\cC'\) and vice versa can be decided in a straightforward way that we do not explicitly describe here.
  The property \(\safe_\varPi(\cC' \mid \aC')\) can be justified by the following observations.

  \Paragraph{No Unexpected Changes on Unrelated Data}
  The safety on the extended memory footprint ensures that operations on hotly accessed data do not affect unrelated data.
  Here, the following property plays a role: when \(\readout_{\cH,\hot}(a\Colon P\, T \mid \hat v;\, \hat\aX, \hat\aM)\) holds and
  \(P\) is of form \(\own\) or \(\mut_\alpha\),
  \(\Braced{\hot(a+k) \mid k \in [\#T]} \subseteq \hat\aM\) holds.

  \Paragraph{Preservation of the Safety on the Extended Abstract Variable Summary}
  It can be shown in a similar way to the proof of \cref{lemma:preserve-aos}.

  \Paragraph{Preservation of Safety on the Extended Memory Footprint}
  It can be shown by straightforward case analysis.

  One important point is that, on lifetime elimination (\(\now \alpha\)), a frozen hot access (\(\hot^{\dagger\alpha}(a)\)) can be safely made active (\(\hot^\Active(a)\)), because there are no cold accesses on \(a\), which is guaranteed by the type system.

  Another point is that swap (\(\swap(*x,*y)\)) does not change the extended memory footprint.
\qed\end{proof}

\begin{property}[Safety on the Concrete Configuration is Preserved]
  For any \(\varPi\) and \(\cC,\cC'\) such that \(\safe_\varPi(\cC)\) and \(\cC \to_\varPi \cC'\) hold, \(\safe_\varPi(\cC')\) is satisfied.
\end{property}
\begin{proof}
  It immediately follows by \cref{lemma:bisim-cos-aos}.
\qed\end{proof}

\subsection{Equivalence of the COS-based and AOS-based Models}

After introducing some easy lemmas,
we prove the equivalence of the COS-based and AOS-based models (\cref{theorem:cos-aos-equivalent}), relying on the bisimulation lemma \cref{lemma:bisim-cos-aos} proved above.
Finally, we achieve the complete proof of \cref{theorem:chc-correct}.

\begin{lemma}
\label{lemma:safe-cos-logic-aos}
  Take any \(\varPi\), simple \(f\) and \(L\).
  For any \(\cF\), \(\cH\) and \(\aF\), the following equivalence holds, if \(L = \entry\) or the statement at \(L\) is of form \(\return x\).
  \begin{gather*}
    \safe_\cH(\cF\Colon \cGamma_{\varPi,f,L} \mid \aF) \iff
    \safe_\varPi\bigl(\, [f,L]\, \cF \mid \cH \,\bigm|\, [f,L]_\emp\, \aF \mid_{(\!\emp,\emp\!)} \,\bigr)
  \end{gather*}
  (The \(\safe_\cH\) judgment is defined in \cref{index:chc-correct}.)
\end{lemma}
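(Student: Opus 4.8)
The plan is to unfold both sides and reduce the claim to a structural comparison of the two readout relations. The decisive point is that \(f\) is simple, so its input and output types contain no references at all: at \(L=\entry\), where \(\cGamma_{\varPi,f,L}\) lists the input types, and at any label \(L\) carrying \(\return x\), where \(\cGamma_{\varPi,f,L}=\{x\colon U\}\) is the singleton of the output type, every type occurring in \(\cGamma_{\varPi,f,L}\) is a reference-free owning pointer \(\own T'\) with \(T'\) reference-free throughout, and every variable is active. Moreover the typing rules force \(A_{\ex\,\varPi,f}=\emp\) and the lifetime context to be \((\emp,\emp)\) at these labels, so \(\lifetimeSafe_\varPi\bigl([f,L]_\emp\,\aF\mid_{(\emp,\emp)}\bigr)\) holds vacuously; hence the right-hand side reduces to the existence of \(\hat\aX,\hat\aM\) with \(\readout_\varPi\bigl([f,L]\,\cF\mid\cH \bigm| [f,L]_\emp\,\aF\mid_{(\emp,\emp)};\,\hat\aX,\hat\aM\bigr)\), \(\safe_{(\emp,\emp)}(\hat\aX)\) and \(\safe_{(\emp,\emp)}(\hat\aM)\).

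The core step is a straightforward induction on the readout derivation (the simple and extended derivations being in evident bijection for reference-free types) showing that \(\readout_\cH(*a\Colon T'\mid v;\,\aM)\) holds if and only if \(\readout_{\cH,\hot}^{\Active}(*a\Colon T'\mid v;\,\emp,\hat\aM)\) holds, where \(\hat\aM\) is obtained from \(\aM\) by relabelling each address \(b\) as \(\hot^{\Active}(b)\); in particular the pre-value read out is an ordinary value (no abstract variables) and the extended abstract-variable summary stays empty. Here the \(\mut_\beta\) and \(\immut_\beta\) clauses of the extended readout never apply, so the extended access mode never turns \(\cold\) and the activeness stays \(\Active\); the remaining clauses — \(\own\), \(\mu\)-unfolding, \(\Int\), \(\unit\), \(+\), \(\times\) — match the corresponding clauses of the simple readout one for one. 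Lifting this pointwise over \(x\in\dom\cF=\dom\cGamma_{\varPi,f,L}\) (using that \(\aF(x)=\angled{v_x}\) on both sides) shows that \(\readout_\cH(\cF\Colon\cGamma_{\varPi,f,L}\mid\aF;\,\aM)\) holds iff \(\readout_\varPi(\cdots;\,\emp,\hat\aM)\) holds with \(\hat\aM\) the \(\hot^{\Active}\)-relabelling of \(\aM\) and \(\hat\aX=\emp\); the latter makes \(\safe_{(\emp,\emp)}(\hat\aX)\) trivial.

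It remains to match the ``\(\aM\) has no duplicate items'' hypothesis of \(\safe_\cH\) with \(\safe_{(\emp,\emp)}(\hat\aM)\). Since \(\hat\aM\) contains only items \(\hot^{\Active}(b)\), the only applicable clauses of \(\safe_\cA(a,\hat\aM)\) are the ones requiring \(\hat\aM(a)=\{\hot^{\ac}(a)\}\) or \(\hat\aM(a)=\emp\) — the clause combining a frozen hot access with cold accesses cannot fire, as there is no \(\hot^{\dagger\alpha}\) item and no \(\cold_\beta\) item — so \(\safe_{(\emp,\emp)}(\hat\aM)\) holds precisely when every address occurs in \(\hat\aM\) with multiplicity at most one, i.e. precisely when \(\aM\) has no duplicates. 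Chaining the two equivalences gives the lemma.

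I expect the main obstacle to be the bookkeeping inside the structural induction: keeping the two footprints in exact correspondence across the asymmetry that the simple readout charges a pointer cell \emph{together with} the \(\own\)-dereference step whereas the extended readout charges it at the separate \(*a\Colon P\,T\) step, and likewise for the padding cells \(a+1+\#T_i+k\) in the variant case. This is routine, however, since for a simple \(f\) none of the genuinely delicate features — borrows, frozen variables, copying through immutable references — can occur at these labels.
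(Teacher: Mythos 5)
Your proof is correct and takes essentially the same route as the paper, which disposes of this lemma with only ``by straightforward induction''; your argument is the fleshed-out version of that induction, correctly isolating the two simplifications that make it work (reference-free types and empty lifetime contexts at \(\entry\)/return labels of a simple function, so the extended readout degenerates to the simple one with an all-\(\hot^{\Active}\) footprint and empty abstract-variable summary). The footprint-bookkeeping asymmetry you flag is real but, as you say, routine.
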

\begin{proof}
  By straightforward induction.
\qed\end{proof}

\begin{lemma}
\label{lemma:simple-unique}
  For any \(\varPi\) and \(\cC\) of form \([f,L]\, \cF \mid \cH\), when \(f\) is simple,
  there is at most one \(\aC\) satisfying \(\safe_\varPi(\cC \mid \aC)\).
\end{lemma}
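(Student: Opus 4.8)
The plan is to peel off the three components of an abstract configuration $\aC$ — its shape, its lifetime data $(\cTheta,\cA)$, and its abstract stack frame $\aF$ — and argue that each is forced by $\cC=[f,L]\,\cF\mid\cH$ together with the premises of $\safe_\varPi(\cC\mid\aC)$. First, by the definition of $\readout_\varPi$, if $\safe_\varPi(\cC\mid\aC)$ holds then $\aC$ must be a single-frame configuration whose program point and return-receiver data agree with those of $\cC$, and whose stack frame has domain $\dom\cGamma_{\varPi,f,L}$; that is, $\aC=[f,L]_\cTheta\,\aF\mid_\cA$ and only $\cTheta$, $\aF$, $\cA$ remain to be pinned down.

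Second, I would determine $\cTheta$ and $\cA$ using $\lifetimeSafe_\varPi(\aC)$. Since $f$ is simple we have $A_{\ex\,\varPi,f}=\emptyset$, and $\lifetimeSafe_0(\cA,\cTheta\mid\cA_{\varPi,f,L},\emptyset)$ then forces $\dom\cTheta=\lvert\cA_{\varPi,f,L}\rvert$ and $\cTheta(\alpha)=\alpha^{(0)}$ for every $\alpha\in\lvert\cA_{\varPi,f,L}\rvert$, so $\cTheta$ is uniquely determined. The same judgment forces $\le_\cA$ to coincide, under $\cTheta$, with $\le_{\cA_{\varPi,f,L}}$ on the $\alpha^{(0)}$'s, and the cardinality constraint $\card\lvert\cA\rvert=\card(\lvert\cA_{\varPi,f,L}\rvert-A_{\ex\,\varPi,f})$ rules out any further lifetime variables; hence $\cA$ is uniquely determined as well.

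Third, I would show $\aF$ is determined (up to the naming of abstract variables, which is immaterial everywhere it is used) by $\cH$, $\cF$, the fixed context $\cGamma_{\varPi,f,L}$ and the now-fixed $\cTheta$. By the rule for $\readout_{\cH,\cTheta}(\cF\Colon\cGamma_{\varPi,f,L}\mid\aF;\,\hat\aX,\hat\aM)$ it suffices that each per-variable judgment $\readout_{\cH,\hot}^\ac(\cF(x)\Colon T\cTheta\mid\hat v_x;\,\hat\aX_x,\hat\aM_x)$ fixes $\hat v_x$; this follows by structural induction on the derivation, since at every node the type (after the forced $\mu$-unfolding) together with the relevant heap cell $\cH(a)$ dictates which readout rule applies and the skeleton of the resulting pre-value, the only freedom being the choice of a fresh abstract variable at each $\mut_\beta$-node and each frozen ($\dagger\alpha$) node, which we tie canonically to the address being read. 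Combining the three parts yields uniqueness of $\aC$.

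The main obstacle is exactly this last point: making precise that the apparent nondeterminism in $\readout$ is confined to renaming of the freshly introduced abstract variables, so that "at most one $\aC$" holds modulo that renaming (or, with the address-indexed naming convention, literally). One must also check that the induction is well-founded — that chasing heap pointers through recursive types terminates despite $\mu$-unfolding — which holds because completeness of the types keeps every $\#T$ finite and the heap $\cH$ is finite, so only finitely many cells are visited.
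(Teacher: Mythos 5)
Your overall route is the same as the paper's: pin down the frame shape from the structure of the \(\readout_\varPi\) rule, then \((\cTheta,\cA)\) from \(\lifetimeSafe_\varPi\) (your analysis of this step, using \(A_{\ex\,\varPi,f}=\emptyset\) and the cardinality constraint, is exactly right), then \(\aF\) by induction on the readout derivation. The divergence --- and the gap --- is in how the residual nondeterminism of \(\readout\) is discharged. The paper's (one-line) proof rests on an observation you did not make: simpleness is used to conclude that \emph{abstract variables do not occur in \(\aC\) at all}, so the uniqueness is literal and there is nothing to quotient by. Your ``uniqueness up to renaming of abstract variables'' proves a strictly weaker statement than the lemma (any renaming yields a syntactically distinct \(\aC\) still satisfying \(\safe_\varPi(\cC\mid\aC)\), so ``at most one'' would be false on that reading), and tying names canonically to addresses is a modification of the definitions rather than a proof of the lemma as stated. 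Moreover, renaming is not the only freedom your induction would have to absorb: the readout rule for \(a\Colon\mut_\beta T\) under a cold access mode produces \(\angled{\hat v,\hat w}\) with the prophecy component \(\hat w\) \emph{completely unconstrained} by its premise, and the rule for a frozen (\(\dagger\alpha\)) variable reads out a bare \(\ab{x}\); neither is fixed by \(\cH\), \(\cF\) or the safety side conditions. So at a label whose context contains references your claim that ``the only freedom is the choice of a fresh abstract variable'' fails.

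What closes the gap is the hypothesis you under-used. The lemma is invoked (via \cref{lemma:safe-cos-logic-aos}, in the proof of \cref{theorem:cos-aos-equivalent}) only at \(L=\entry\) or at a \(\return\) label of a simple \(f\); there \(\cA_{\varPi,f,L}\) is empty and \(\cGamma_{\varPi,f,L}\) consists of active items whose types are input/output types of a simple function and hence contain no references. Consequently the \(\mut\), \(\immut\) and \(\dagger\alpha\) readout rules never fire, every rule that does fire is syntax-directed on the type (after the forced \(\mu\)-unfoldings) and the heap contents, and \(\aF\) is determined outright. You should either state and prove the lemma under that restriction, or insert the observation that simpleness excludes references and abstract variables from the readout before running your induction; with that in place the rest of your argument, including the termination point about \(\#T\) and the finiteness of \(\cH\), goes through.
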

\begin{proof}
  By straightforward induction.
  The simpleness of \(f\) has made the situation easy, because abstract variables do not occur in \(\aC\).
\qed\end{proof}

\begin{lemma}
\label{lemma:concrete-aos}
  For any \(\varPi\) and \(\aC\) of form \([f,L]_\emp\, \aF \,\rvert_{(\emp, \emp)}\), when \(f\) is simple and \(\aC\) is safe,
  there exists \(\cC\) satisfying \(\safe_\varPi(\cC \mid \aC)\).
\end{lemma}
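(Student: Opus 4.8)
The plan is to produce a witnessing concrete configuration $\cC = [f,L]\, \cF \mid \cH$ explicitly, by serializing the pre-values stored in $\aF$ into a fresh, pairwise-disjoint region of the heap; this is essentially the inverse of the $\readout$ relation.

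First I would collapse the lifetime data. Because $\aC$ is safe, $\lifetimeSafe_\varPi(\aC)$ holds; with $\cTheta = \emp$ and global lifetime context $(\emp,\emp)$, its clause $\dom\cTheta = \lvert\cA_{\varPi,f,L}\rvert$ forces $\lvert\cA_{\varPi,f,L}\rvert = \emp$, so no lifetime variable is in scope at $L$. Hence every type occurring in $\cGamma_{\varPi,f,L}$ is reference-free, and in particular each variable has an owning-pointer type $\own T$ with $T$ a complete, reference-free type. Inspecting the $\summary$ rules on such types (none of which is a $\mut$ type, and every context entry has activeness $\Active$) shows that no abstract variable can occur in $\aF$: each $\aF(x)$ has the form $\angled{v_x}$ for a genuine value $v_x$, and the abstract variable summary of $\aC$ is empty.

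Next I would define a heap-layout operation by structural recursion on complete types, mirroring the clauses that define $\readout_\cH(*a\Colon T \mid v;\, \aM)$. Given a value $v$ of type $T$, a starting address $a$, and a supply of pairwise-distinct fresh addresses, it produces a heap fragment that occupies the cells $[a,\, a + \#T)$ together with disjoint fresh regions for the targets of any $\own$ pointers inside $v$, so that $\readout_\cH(*a\Colon T \mid v;\, \aM)$ holds with $\aM$ exactly the resulting (duplicate-free) multiset of occupied addresses. The cases parallel the definition of $\#T$: $\Int$ writes one cell, $\unit$ writes none, $T_0 \times T_1$ concatenates the layouts at $a$ and $a + \#T_0$, $T_0 + T_1$ writes the tag at $a$, lays out the payload at $a+1$, and pads the trailing $(\#T_{1-i} - \#T_i)_{\ge 0}$ cells with $0$, $\own T'$ writes a fresh target address at $a$ and recurses there, and $\mu X.T'$ reduces to the layout for its unfolding (well-founded because $\#T$ is defined and every recursive occurrence is pointer-guarded, so the recursion terminates). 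Applying this to every $x \in \dom\aF$ with disjoint fresh address ranges yields $\cF$ and $\cH$ such that $\readout_{\cH,\emp}(\cF\Colon \cGamma_{\varPi,f,L} \mid \aF;\, \emp, \hat\aM)$ holds, where $\hat\aM$ consists solely of distinct items of the form $\hot^\Active(a)$.

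Finally I would assemble $\safe_\varPi(\cC \mid \aC)$ from its premises for $\cC = [f,L]\, \cF \mid \cH$: $\readout_\varPi(\cC \mid \aC;\, \emp, \hat\aM)$ holds by the previous step; $\lifetimeSafe_\varPi(\aC)$ holds since $\aC$ is safe; $\safe_\cA(\emp)$ holds vacuously; and $\safe_\cA(\hat\aM)$ holds because for each address $a$ we have $\hat\aM(a) = \emp$ or $\hat\aM(a) = \Braced{\hot^\Active(a)}$. I expect the only real work to be the bookkeeping in the layout recursion, namely threading the fresh-address supply so that all occupied regions (including the possibly many nested $\own$-targets arising from recursive data) remain pairwise disjoint, which is precisely what delivers the duplicate-freeness required for safety; no genuinely delicate step arises, consistent with the ``straightforward induction'' character of the neighbouring lemmas.
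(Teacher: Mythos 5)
Your construction is correct and is precisely the ``straightforward construction'' that the paper's one-line proof alludes to: serialize each (necessarily abstract-variable-free, since the empty lifetime context rules out frozen items and \(\mut\) types) value into disjoint fresh heap regions, mirroring the \(\readout\) rules, and check the safety clauses. Your write-up simply supplies the details the paper omits; no gap.
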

\begin{proof}
  By straightforward construction.
\qed\end{proof}

\begin{theorem}[Equivalence of the COS-based Model and the AOS-based Model]
\label{theorem:cos-aos-equivalent}
  For any \(\varPi\) and simple \(f\),
  \(f^\COS_\varPi\) is equivalent to \(f^\AOS_\varPi\).
\end{theorem}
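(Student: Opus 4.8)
The plan is to prove the two implications separately, using the bisimulation of \cref{lemma:bisim-cos-aos} as the engine and the three auxiliary lemmas (\cref{lemma:safe-cos-logic-aos}, \cref{lemma:simple-unique}, \cref{lemma:concrete-aos}) to relate the endpoints of the executions.

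For the forward direction, suppose $f^\COS_\varPi(v_0,\dots,v_{n-1},w)$ holds, witnessed by a concrete run $\cC_0 \to_\varPi \cdots \to_\varPi \cC_N$ with $\final_\varPi(\cC_N)$, $\cC_0 = [f,\entry]\, \cF \mid \cH$, $\cC_N = [f,L]\, \cF' \mid \cH'$, and the two $\safe$-readout conditions on $\cF$ and $\cF'$. First I would use \cref{lemma:safe-cos-logic-aos} at $\entry$ to turn $\safe_\cH(\cF\Colon \cGamma_{\varPi,f,\entry} \mid \{(x_i,v_i)\})$ into $\safe_\varPi(\cC_0 \mid \aC_0)$ for $\aC_0 \defeq [f,\entry]_\emp\, \{(x_i,v_i)\} \mid_{(\emp,\emp)}$. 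Then, by induction on $N$ using the concrete-to-abstract half of \cref{lemma:bisim-cos-aos}, I obtain an abstract run $\aC_0 \to_\varPi \cdots \to_\varPi \aC_N$ with $\safe_\varPi(\cC_i \mid \aC_i)$ for every $i$. Since $\cC_N$ is final (its statement is a $\return$) and $\safe_\varPi(\cC_N \mid \aC_N)$, the readout relation forces $\aC_N$ to share the single-frame shape $[f,L]_\cTheta\, \aF \mid_\cA$ of $\cC_N$, and $\lifetimeSafe_\varPi(\aC_N)$ together with the $\return$ typing rule (empty lifetime context for a simple function) pins down $\cTheta = \emp$ and $\cA = (\emp,\emp)$; hence $\final_\varPi(\aC_N)$. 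It remains to identify $\aF$: applying \cref{lemma:safe-cos-logic-aos} at $L$ gives $\safe_\varPi(\cC_N \mid [f,L]_\emp\, \{(y,w)\} \mid_{(\emp,\emp)})$, and by \cref{lemma:simple-unique} the abstract configuration safely read out from $\cC_N$ is unique, so $\aC_N = [f,L]_\emp\, \{(y,w)\} \mid_{(\emp,\emp)}$. Thus $f^\AOS_\varPi(v_0,\dots,v_{n-1},w)$.

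For the backward direction, suppose $f^\AOS_\varPi(v_0,\dots,v_{n-1},w)$, witnessed by a safe abstract run $\aC_0 \to_\varPi \cdots \to_\varPi \aC_N$ with $\aC_0 = [f,\entry]_\emp\, \{(x_i,v_i)\} \mid_{(\emp,\emp)}$, $\aC_N = [f,L']_\emp\, \{(y,w)\} \mid_{(\emp,\emp)}$, and $\final_\varPi(\aC_N)$. By \cref{lemma:concrete-aos} there is a concrete $\cC_0 = [f,\entry]\, \cF \mid \cH$ with $\safe_\varPi(\cC_0 \mid \aC_0)$; then, by induction on $N$ using the abstract-to-concrete half of \cref{lemma:bisim-cos-aos}, I obtain a concrete run $\cC_0 \to_\varPi \cdots \to_\varPi \cC_N$ with $\safe_\varPi(\cC_i \mid \aC_i)$ throughout. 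As $\aC_N$ is final, so is $\cC_N$ (same program point and hence same statement), say $\cC_N = [f,L']\, \cF' \mid \cH'$. Finally, \cref{lemma:safe-cos-logic-aos} applied at $\entry$ and at $L'$ converts $\safe_\varPi(\cC_0 \mid \aC_0)$ and $\safe_\varPi(\cC_N \mid \aC_N)$ back into $\safe_\cH(\cF\Colon \cGamma_{\varPi,f,\entry} \mid \{(x_i,v_i)\})$ and $\safe_{\cH'}(\cF'\Colon \cGamma_{\varPi,f,L'} \mid \{(y,w)\})$, which together witness $f^\COS_\varPi(v_0,\dots,v_{n-1},w)$. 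Combining this theorem with \cref{theorem:aos-chc-equivalent} then completes the proof of \cref{theorem:chc-correct}.

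The main obstacle I anticipate is not the inductive threading of the bisimulation — that is mechanical once \cref{lemma:bisim-cos-aos} is available — but pinning down the \emph{endpoints}: the bisimulation lemma asserts only the existence of a matching configuration at each step, so in the forward direction one must separately argue that the abstract configuration reached at the end is \emph{exactly} $[f,L]_\emp\, \{(y,w)\} \mid_{(\emp,\emp)}$, rather than merely some safe readout of $\cC_N$, which is precisely where \cref{lemma:simple-unique} and the bookkeeping of the empty lifetime context for a return from a simple function are essential. Verifying that $\final_\varPi$ is respected in both directions — i.e. that matched configurations share the program point and the final statement — likewise needs a short argument from the shape of the readout relation, but is otherwise routine.
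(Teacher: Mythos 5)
Your proposal is correct and follows essentially the same route as the paper: \cref{lemma:safe-cos-logic-aos} to set up the endpoints, \cref{lemma:bisim-cos-aos} to thread the runs step by step, \cref{lemma:simple-unique} to pin down the final abstract configuration in the forward direction, and \cref{lemma:concrete-aos} to seed the converse. The only step the paper makes explicit that you leave implicit is the appeal to \cref{lemma:trans-safe} to obtain \(\safe_\varPi(\aC_0)\) from \(\safe_\varPi(\cC_0 \mid \aC_0)\), which the definition of \(f^\AOS_\varPi\) requires as a premise.
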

\begin{proof}
  Let us show that
  \begin{gather*}
    f^\COS_\varPi(v_0,\dots,v_{n-1},w) \iff f^\AOS_\varPi(v_0,\dots,v_{n-1},w)
  \end{gather*}
  holds for any values \(v_0,\dots,v_{n-1},w\) of the sorts \(\Parened{T_0}, \dots, \Parened{T_{n-1}}, \Parened{U}\),
  where \(\varSigma_{\varPi,f} = (x_0\colon T_0, \dots, x_{n-1}\colon T_{n-1}) \to U\).

  \Paragraph{\((\Longrightarrow)\)}
  By assumption, we can take concrete configurations \(\cC_0,\dots,\cC_N\) satisfying the following
  (for some \(L\), \(y\), \(\cF\), \(\cH\), \(\cF'\) and \(\cH'\)).
  \begin{gather*}
    \cC_0 \to_\varPi \cdots \to_\varPi \cC_N \quad
    \final_\varPi(\cC_N) \br[-.1em]
    \cC_0 = [f,\entry]\, \cF \mid \cH \quad
    \cC_N = [f,L]\, \cF' \mid \cH' \br[-.1em]
    \safe_\cH\bigl(\cF\Colon \cGamma_{\varPi,f,\entry} \bigm| \{(x_i,v_i) \!\mid\! i \!\in\! [n]\}\bigr) \quad
    \safe_{\cH'}\bigl(\cF'\Colon \cGamma_{\varPi,f,L} \bigm| \{(y,w)\}\bigr)
  \end{gather*}

  By \cref{lemma:safe-cos-logic-aos}, taking abstract configurations
  \begin{gather*}
    \aC_0 \defeq [f,\entry]_\emp\, \{(x_i,v_i) \!\mid\! i \!\in\! [n]\} \mid_{(\!\emp,\emp\!)} \quad
    \aC'_N \defeq [f,L]_\emp\, \{(y,w)\} \mid_{(\!\emp,\emp\!)},
  \end{gather*}
  we have \(\safe_\varPi(\cC_0 \mid \aC_0)\) and \(\safe_\varPi(\cC_N \mid \aC'_N)\).
  By \cref{lemma:trans-safe}, \(\safe_\varPi(\aC_0)\) also holds.
  By \cref{lemma:bisim-cos-aos}, we can take \(\aC_1,\dots,\aC_N\)
  satisfying \(\aC_0 \to_\varPi \cdots \to_\varPi \aC_N\), \(\final_\varPi(\aC_N)\),
  and \(\safe_\varPi(\cC_{k+1} \mid \aC_{k+1})\) (for any \(k \!\in\! [N]\)).

  Since \(\safe_\varPi(\cC_N \mid \aC_N)\) and \(\safe_\varPi(\cC_N \mid \aC'_N)\) hold,
  by \cref{lemma:simple-unique} we have \(\aC_N = \aC'_N\).
  Therefore, \(f^\AOS_\varPi(v_0,\dots,v_{n-1},w)\) holds.

  \Paragraph{\((\Longleftarrow)\)}
  By assumption, we can take abstract configurations \(\aC_0,\dots,\aC_N\) satisfying the following (for some \(L\) and \(y\)).
  \begin{gather*}
    \aC_0 \to_\varPi \cdots \to_\varPi \aC_N \quad
    \final_\varPi(\aC_N) \br[-.3em]
    \aC_0 = [f,\entry]_\emp\, \{(x_i,v_i) \!\mid\! i \!\in\! [n]\} \mid_{(\!\emp,\emp\!)} \quad
    \aC_N = [f,L]_\emp\, \{(y,w)\} \mid_{(\!\emp,\emp\!)}
  \end{gather*}

  By \cref{lemma:concrete-aos}, there exists \(\cC_0\) such that \(\safe_\varPi(\cC_0 \mid \aC_0)\) holds.
  By \cref{lemma:bisim-cos-aos}, we can take \(\cC_1,\dots,\cC_N\)
  satisfying \(\cC_0 \to_\varPi \cdots \to_\varPi \cC_N\), \(\final_\varPi(\cC_N)\), and \(\safe_\varPi(\cC_{k+1} \mid \aC_{k+1})\) (for any \(k \!\in\! [N]\)).

  \(\cC_0\) and \(\cC_N\) have form
  \begin{gather*}
    \cC_0 = [f,\entry]\, \cF \mid \cH \quad
    \cC_N = [f,L]\, \cF' \mid \cH',
  \end{gather*}
  and by \cref{lemma:safe-cos-logic-aos} the following judgments hold.
  \begin{gather*}
    \safe_{\cH}\bigl(\cF\Colon \cGamma_{\varPi,f,\entry} \bigm| \{(x_i,v_i) \!\mid\! i \!\in\! [n]\}\bigr) \quad
    \safe_{\cH'}\bigl(\cF'\Colon \cGamma_{\varPi,f,L} \bigm| \{(y,w)\}\bigr)
  \end{gather*}
  Therefore, \(f^\COS_\varPi(v_0,\dots,v_{n-1},w)\) holds.
\qed\end{proof}

Combining the equivalences of \cref{theorem:aos-chc-equivalent} and \cref{theorem:cos-aos-equivalent}, we finally achieve the proof of \cref{theorem:chc-correct}.

\fi

\end{document}